\documentclass{article}
\usepackage{graphicx} 
\usepackage[]{amsmath,amssymb,amsfonts,latexsym,amsthm,enumerate,fullpage,xcolor,bbm}
\usepackage{algorithm}
\usepackage{algorithmic}
\usepackage{comment}
\usepackage{nicefrac}
\usepackage{pdflscape}
\usepackage{tikz}
\usepackage{tikz-cd}
\usepackage{stmaryrd}
\usepackage{comment}
\usepackage{xcolor}

\usepackage[
backend=biber,
style=alphabetic,
sorting=nyt,
maxnames=7,
maxalphanames=7,
backref=true,
url=false
]{biblatex}
\addbibresource{references.bib}
\addbibresource{references_manual.bib}

\usepackage[colorlinks=true,citecolor=blue,linkcolor=blue]{hyperref}
\hypersetup{
    colorlinks,
    linkcolor={red!50!black},
    citecolor={blue!50!black},
    urlcolor={blue!80!black}
}
\usepackage[nameinlink, noabbrev, capitalize]{cleveref}

\sloppy

\newtheorem{counter}{Counter}[section]
\newtheorem{theorem}[counter]{Theorem}
\newtheorem{thm}{Theorem}
\newtheorem{lemma}[counter]{Lemma}

\newtheorem{claim}[counter]{Claim}
\newtheorem{fact}[counter]{Fact}

\newtheorem{corollary}[counter]{Corollary}

\newtheorem{definition}[counter]{Definition}

\newtheorem{observation}[counter]{Observation}
\newtheorem{remark}[counter]{Remark}

\newcommand{\seed}{\psi}
\newcommand{\N}{\mathbb{N}}

\newcommand{\F}{\mathbb{F}}

\newcommand{\R}{\mathbb{R}}
\newcommand{\poly}{\operatorname{poly}}

\newcommand{\eps}{\varepsilon}
\newcommand{\abs}[1]{\left\vert#1\right\vert}

\usepackage{array}
\newcommand{\PreserveBackslash}[1]{\let\temp=\\#1\let\\=\temp}
\newcolumntype{C}[1]{>{\PreserveBackslash\centering}p{#1}}
\newcolumntype{R}[1]{>{\PreserveBackslash\raggedleft}p{#1}}
\newcolumntype{L}[1]{>{\PreserveBackslash\raggedright}p{#1}}

\usepackage{pifont}

\newcommand{\GammaL}[1][\relax]{\Gamma_{\shortrightarrow}{#1}}
\newcommand{\GammaR}[1][\relax]{\Gamma_{\shortleftarrow}{#1}}

\title{Two-Sided Lossless Expanders in the Unbalanced Setting}
\author{}
\author{  Eshan Chattopadhyay\thanks{Supported by a Sloan Research Fellowship and NSF CAREER Award 2045576.}\\ Cornell University\\ \texttt{eshan@cs.cornell.edu}  \and Mohit Gurumukhani\footnotemark[1] \\ Cornell University\\ \texttt{mgurumuk@cs.cornell.edu} \and  Noam Ringach \thanks{Supported by NSF GRFP grant DGE – 2139899,  NSF CAREER Award 2045576 and a Sloan Research Fellowship.}   \\ Cornell University\\ \texttt{nomir@cs.cornell.edu} \and Yunya Zhao\footnotemark[1] \\ Cornell University\\ \texttt{yunya@cs.cornell.edu}  }
\date{}

\begin{document}
\maketitle

\begin{abstract}

    We present the first explicit construction of \emph{two-sided} lossless expanders in the unbalanced setting  (bipartite graphs that have polynomially many more nodes on the left than on the right).  
    Prior to our work, all known explicit constructions in the unbalanced setting achieved only one-sided lossless expansion.

     Specifically, we show that the one-sided lossless expanders constructed by Kalev and Ta-Shma (RANDOM'22)---that are based on multiplicity codes introduced by Kopparty, Saraf, and Yekhanin (STOC'11)---are, in fact,  two-sided lossless expanders. Moreover, we show that our result is tight, thus completely characterizing the graph of Kalev and Ta-Shma.

     Using our unbalanced bipartite expander, we easily obtain lossless (non-bipartite) expander graphs on $N$ vertices with polynomial degree $\ll N$ and expanding sets of size $N^{0.49}$.
\end{abstract}

\section{Introduction}

Lossless expanders are graphs in which small sets of vertices have almost as many neighbors as possible. Formally, we say that a $d$-regular graph $G=(V,E)$ is a \emph{$(K,A)$-expander} if for all sets $S\subseteq V$ of size at most $K$ we have that $\abs{\Gamma(S)}\geq A\abs{S}$ where $\Gamma(S)$ is the neighborhood of $S$. Generally, we desire that $K$ is as large as possible with $K=\Omega(\abs{V}/d)$. When $A=(1-\varepsilon)d$ for some small $\varepsilon$, we say that $G$ is a \emph{$(K,\varepsilon)$-lossless expander} since only a small fraction of the total number of possible neighbors is lost. It is well-known that a random $d$-regular graph is a $(K=\gamma n,\varepsilon=0.01)$-lossless expander with high probability, for some constant~$\gamma > 0$.

A reasonable question after seeing this definition is whether other notions of expansion, such as spectral or edge expansion, can be used to derive such graphs. Unfortunately, while Ramanujan graphs (optimal spectral expanders) do have expansion factor arbitrarily close to $A=d/2$, there also exist examples of Ramanujan graphs with expansion factor exactly  $A=d/2$, showing that spectral expansion does not necessarily give rise to lossless expansion \cite{kahale_eigenvalues_1995}.

The study of lossless expanders has paid special attention to bipartite graphs due to their connection with randomness condensers. A \textit{one-sided lossless} expander is a bipartite graph $G= (L \sqcup R, E)$ where every ``small enough'' set on the left expands losslessly to the right. It is standard to view lossless condensers and one-sided bipartite lossless expanders as related objects. For this purpose, it is natural to talk about highly unbalanced bipartite graphs in which $\abs{L} \gg \abs{R}$, that is, the neighbor function that takes in a left vertex and the index of a neighbor and outputs the right vertex has a much shorter output length than input length. Current explicit constructions for unbalanced one-sided lossless expanders \cite{ta-shma_better_2006, ta-shma_lossless_2007, guruswami_unbalanced_2009, kalev_unbalanced_2022}—with the best parameters achieved by \cite{guruswami_unbalanced_2009, kalev_unbalanced_2022}
\footnote{The lossless expanders of \cite{kalev_unbalanced_2022} have slightly better dependence on constants compared to \cite{guruswami_unbalanced_2009}.}
—have found a wide array of applications in coding theory \cite{sipser_expander_1996}, extractor constructions \cite{ta-shma_better_2006, ta-shma_lossless_2007, guruswami_unbalanced_2009, dvir_extensions_2013}, derandomization \cite{doron_derandomization_2023},\footnote{\cite{doron_derandomization_2023} instantiated Goldreich's PRG \cite{goldreich_world_2011} with the lossless expander of \cite{kalev_unbalanced_2022}.} and probabilistic data structures \cite{upfal_how_1987, buhrman_are_2002}, with the unbalanced nature of the graph being essential. 

We say a bipartite graph is a \textit{two-sided lossless} expander if lossless expansion happens in both directions. Intuitively, in the unbalanced case where $\abs{L} \gg \abs{R}$, lossless expansion is hard to achieve from left to right because there is little room on the smaller side to allow for expansion from the much larger side; on the other hand, the right-to-left expansion seems to be much easier at first sight for the same reason. However, despite aforementioned constructions for one-sided lossless expanders as well as their broad applications, there has been no known explicit construction of unbalanced two-sided lossless expanders before this work. Therefore, it is an extremely intriguing direction to further the theory of lossless expanders. We are not aware of any existing applications of two-sided lossless expanders with a polynomial imbalance between the left and right—we leave this as an interesting open problem—but we believe that a deeper understanding of the structure of these objects is the first step towards making connections with other topics and finding new applications.

In this work, we try to fill the gap, namely, the lack of explicit unbalanced two-sided lossless expanders by closely studying the bipartite graph of \cite{kalev_unbalanced_2022} based on the multiplicity codes of \cite{kopparty_high-rate_2014}.
\footnote{It is natural to consider whether the unbalanced expander from \cite{guruswami_unbalanced_2009} is a two-sided lossless expander. It will be interesting to determine this since the GUV graph is not even right-regular---see \cref{sec:GUV not right regular} for details.} 
For simplicity, we refer to this graph as the ``KT graph'' (see \cref{def: KTGraph full}). 
As our first main result, we show that the KT graph is a two-sided lossless expander. Moreover, we prove that the expansion is tight up to a constant multiplicative factor. 
As our second main result, we obtain non-bipartite lossless expanders with high degree by taking the bipartite half of the KT graph.

We note that  lossless expanders have been extensively studied in the \textit{balanced} setting as well. We discuss this in \cref{sec: balanced lossless}.

\subsection{Our results}
We first define a two-sided lossless expander formally:
\begin{definition}[Two-sided lossless expander]\label{def:intro-two-sided lossless expander}
    We say that a $(D_L,D_R)$-regular bipartite graph $G=(L\sqcup R,E)$ is a \emph{two-sided $(K_L,A_L,K_R,A_R)$-expander} if for any subset $S_L\subseteq L$ such that $\abs{S_L}\leq K_L$ we have that $\abs{\GammaL(S_L)}\geq A_L\abs{S_L}$ and similarly that for any subset $S_R\subseteq R$ such that $\abs{S_R}\leq K_R$ we have that $\abs{\GammaR(S_R)}\geq A_R\abs{S_R}$. When $A_L=(1-\varepsilon_L)D_L$ and $A_R=(1-\varepsilon_R)D_R$ for small $\eps_L, \eps_R > 0$, we say that $G$ is a \emph{two-sided $(K_L,\varepsilon_L,K_R,\varepsilon_R)$-lossless expander}. 
\end{definition}

With the above definition, we are ready to state the main theorems:

\begin{thm}[Informal version of \cref{thm: main simplified theorem}, bipartite two-sided lossless expander]\label{thm: informal bipartite pos and impos}  
    For infinitely many $N$ and all constant $0 < \delta \le 0.99$, there exists an explicit, biregular, two-sided $(K_{L}, \eps_L = 0.01, K_{R}, \eps_{R} = 0.01)$  lossless expander $\GammaL: [N]\times [D_L]\to [M]$ where $D_L = \poly(\log N)$, $N^{1.01\delta - o(1)}\le M \le D_L\cdot N^{1.01 \delta}$, $K_{L} = N^{\delta}$, and $K_{R} = \min\left(O(M / D_L), O(N / (MD_L))\right)$. 
    
    On the other hand, for every such graph, there exists a set $S \subseteq R$, $\abs{S} = O(K_R)$, on the right side which has less than $1/2 \cdot D_R\cdot \abs{S}$ neighbors on the left.
\end{thm}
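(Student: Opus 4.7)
The plan is to work directly with the KT graph: identify $L$ with the space of degree-$\le d$ polynomials over $\F_q$, identify $R$ with $\F_q\times\F_q^m$, and join each $f$ to $(y,f^{\langle m\rangle}(y))$ for all $y\in\F_q$, where $f^{\langle m\rangle}(y)$ denotes the first $m$ Hasse derivatives of $f$ at $y$. With this setup, $N=q^{d+1}$, $M=q^{m+1}$, $D_L=q$, and $D_R=q^{d-m+1}$, so the two thresholds in $K_R$ unwind to $M/D_L=q^m$ and $N/(MD_L)=q^{d-m-1}$. The left-to-right lossless expansion up to $K_L=N^\delta$ is exactly the Kalev--Ta-Shma theorem, which I would invoke as a black box (its proof uses list decodability of multiplicity codes).

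The novel content is right-to-left expansion. Let $T=\GammaR(S_R)$ and $d_f=|\Gamma(f)\cap S_R|$ for $f\in T$. Since $\sum_f d_f=D_R|S_R|$, Cauchy--Schwarz gives $|T|\ge (D_R|S_R|)^2/\sum_f d_f^2$, reducing the task to an upper bound on $P:=\sum_f d_f(d_f-1)$, i.e., the number of triples $(f,v,v')$ with $v\ne v'\in S_R$ and $f$ a common left neighbor. Pairs with equal $y$-coordinate contribute nothing (two distinct right vertices with the same $y$ share no left neighbor), and pairs with distinct $y$-coordinates admit at most $q^{\max(d+1-2m,0)}$ common left neighbors by linear algebra on their Taylor constraints. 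This gives $P\le |S_R|^2\cdot q^{\max(d+1-2m,0)}$, and plugging into Cauchy--Schwarz yields lossless expansion up to $|S_R|\le O(M/D_L)$ in the low-multiplicity regime $2m\le d+1$, but only up to $|S_R|\le O(D_R)$ in the high-multiplicity regime. To sharpen the latter to $O(N/(MD_L))$, I would move to a higher-moment argument---bounding $\sum_f d_f(d_f-1)(d_f-2)\le|S_R|^3\cdot q^{\max(d+1-3m,0)}$ using triples of Taylor conditions and applying a H\"older-type inequality---combined with the capacity bound $|T|\le N$.

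For tightness, I would produce matching explicit bad sets. In the regime $K_R=M/D_L$, take $S_R=\{(y_0,a):a\in\F_q^m\}\cup\{(y_1,a):a\in\F_q^m\}$ of size just above $2M/D_L$: since every polynomial $f$ trivially has $f^{\langle m\rangle}(y_i)\in\F_q^m$, $\GammaR(S_R)=L$ and the expansion ratio is $ND_L/(2M)=D_R/2$, which becomes strict upon adding any extra vertex. In the regime $K_R=N/(MD_L)$ the tight bad set is more subtle: a natural candidate is $S_R=\Gamma(T_0)$ for a small low-dimensional affine family $T_0\subseteq L$ (for instance, $T_0=\{f_0+c h:c\in\F_q\}$ for a well-chosen $h$), and a direct inclusion--exclusion analysis on the Taylor gap conditions describing its left neighborhood should give an expansion ratio strictly below $D_R/2$ at the right scale.

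The main obstacle is closing the gap from $O(D_R)$ down to the tight $O(N/(MD_L))=O(D_R/D_L^2)$ in the high-multiplicity regime. The second moment naturally yields only $O(D_R)$, and bringing it down will require either a higher-moment refinement or a direct combinatorial argument exploiting that over-determined systems of Taylor constraints are rarely simultaneously satisfiable. Finding a matching tight bad set of size $O(N/(MD_L))$ with strictly sub-$D_R/2$ expansion---choosing the dimension and structure of $T_0$ correctly and computing its left neighborhood sharply via inclusion--exclusion---is the corresponding difficulty on the impossibility side.
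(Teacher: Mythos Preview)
Your positive-direction argument is essentially the paper's: the key lemma is the pairwise intersection bound $|\GammaR(v)\cap\GammaR(v')|\le q^{\max(d+1-2m,0)}$ for right vertices with distinct first coordinates, proved via the linear-algebra/Hermite-interpolation observation that the combined Taylor map $\psi_{y_1,y_2}$ is surjective when $d+1\ge 2m$ and injective otherwise. The paper then applies one level of Bonferroni (subtracting $\sum_{i<j}|\GammaR(v_i)\cap\GammaR(v_j)|$ from $D_R|S_R|$) together with a Cauchy--Schwarz bound on the bucket sizes, whereas you route through Cauchy--Schwarz on the degree sequence $\sum_f d_f^2$. Up to constants these are interchangeable and give the same thresholds.

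Your ``main obstacle'' is a misreading of the target. In your own parameters $D_R=q^{d-m+1}$ while $N/(MD_L)=q^{d-m-1}$, so $O(D_R)$ is a \emph{larger} threshold than $O(N/(MD_L))$; the second-moment bound you already have is therefore \emph{stronger} than what the informal theorem asks. There is no gap to close and no need for higher moments. The paper's own analysis in the regime $2m>d+1$ also stops at $K_R=\Theta(D_R)$ (equivalently $\Theta(q^{n-s-1})$); the informal statement just loses a harmless polylog factor $D_L^2$ in the write-up.

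On tightness: your two-whole-buckets argument in the low-multiplicity regime is correct and is essentially the trivial bound the paper invokes. In the high-multiplicity regime, however, your proposal ``take $S_R=\Gamma(T_0)$ for a one-dimensional affine family'' is too vague and aimed at the wrong scale. The paper constructs, for fixed $y_1\ne y_2$, explicit sets $T_1\subset T_{y_1}$ and $T_2\subset T_{y_2}$ of size $\Theta(D_R)$ such that \emph{every} cross pair shares a common left neighbor; this requires a genuine algebraic argument identifying $\psi_{y_1,y_2}(P_{<n})$ as a union of diagonal lines $\{(f,f+\sigma(h))\}$ indexed by an injective homomorphism $\sigma:P_{<n-(s+1)}\to P_{\le s}$, and then choosing $T_1,T_2$ inside $\sigma(P_{<n-(s+1)})$. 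That structural lemma is the actual content on the impossibility side.
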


\begin{remark}\label{rmk:tightness when n>2s+2}
    Because \cite{kalev_unbalanced_2022} has optimal left degree  of their bipartite graph (up to polynomial factors), we achieve optimal left-degree as well and, with respect to this, achieve optimal right degree, optimal expansion constant, optimal size of sets of vertices on left side that losslessly expand, and optimal size of sets of vertices on right side that losslessly expand when $M\le \sqrt{N}$. \footnote{To see that this setting of $K_R$ is indeed optimal, note that in a $(D_L,D_R)$-biregular graph it must be that $N\cdot D_L=M\cdot D_R$ and so $\frac{M}{D_L} = \frac{N}{D_R}$. Hence, $K_R = O(M/D_L) = O(N / D_R)$, the largest possible size.}
\end{remark}

We obtain our second main result by taking the bipartite half (see \cref{subsec: proof overview non-bipartite} for more details) of the \cite{kalev_unbalanced_2022} graph, using the fact that it is a two-sided lossless expander:

\begin{thm}[Informal version of \cref{thm:normal lossless expander}, non-bipartite lossless expander] \label{thm: informal non-bipartite pos and impos}
    For infinitely many $N$ and all constant $0 < \delta < 0.99$, there exists an explicit regular $(K, \eps = 0.01)$ lossless expander $G = (V, E)$ where $|V| = N$, the degree is $D$ where $N^{1 - 1.01\delta}\le D \le N^{1 - 1.01\delta + o(1)}$ and $K = \min\left(N^{\delta}, N^{1 - 1.01\delta - o(1)}\right)$. 
    Furthermore, $G$ with  one vertex is removed, is endowed with a free group action from the multiplicative group $\F_q$, where $q=\poly(\log N)$. 
\end{thm}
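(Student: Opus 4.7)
My plan is to take the \emph{bipartite half} of the KT graph $H=(L\sqcup R,E)$ guaranteed by \cref{thm: informal bipartite pos and impos}. Concretely, I would define $G$ to be the (multi)graph on vertex set $V=L$ (so $|V|=N$) in which, for each $r\in R$ and each pair of distinct $u,v\in \GammaR(r)$, an edge $\{u,v\}$ is included. Every vertex of $G$ then has multi-degree $D=D_L(D_R-1)$, and the biregularity relation $ND_L=MD_R$ combined with the bounds $D_L=\poly(\log N)$ and $M\in[N^{1.01\delta-o(1)},\,D_L\cdot N^{1.01\delta}]$ from \cref{thm: informal bipartite pos and impos} places $D$ in $[N^{1-1.01\delta},\,N^{1-1.01\delta+o(1)}]$, as required. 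Self-loops are excluded by the distinctness condition; any resulting multi-edges can be absorbed by treating $G$ as a multigraph or by allowing a little slack in the expansion constant.

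For lossless expansion, I would fix $S\subseteq V$ with $|S|\le K := \min(K_L,\,K_R/((1-\eps_L)D_L))$. The key observation is that, for $v\ne u$, $v$ is adjacent to $u$ in $G$ iff $u$ and $v$ share a common neighbor in $R$, so $\Gamma_G(S)\supseteq \GammaR(\GammaL(S))\setminus S$. Applying the left expansion of $H$ gives $|\GammaL(S)|\ge (1-\eps_L)D_L|S|$, which by the choice of $K$ is at most $K_R$; the right expansion of $H$ then gives $|\GammaR(\GammaL(S))|\ge (1-\eps_R)D_R|\GammaL(S)|\ge (1-\eps_L)(1-\eps_R)D_LD_R|S|$. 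Hence $|\Gamma_G(S)|\ge (1-\eps_L)(1-\eps_R)D_LD_R|S|-|S|\ge (1-\eps)D|S|$ for $\eps=0.01$, as long as the formal (parameterized) version of \cref{thm: informal bipartite pos and impos} is invoked with $\eps_L,\eps_R$ below $\eps/3$. Substituting $K_R=\min(\Theta(M/D_L),\Theta(N/(MD_L)))$ and $D_L=N^{o(1)}$ collapses $K$ to the claimed $\min(N^\delta,\,N^{1-1.01\delta-o(1)})$.

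For the free $\F_q^*$-action, I would appeal to the algebraic structure of the KT construction (\cref{def: KTGraph full}): left vertices of $H$ are polynomials $f$ over $\F_q$, and right vertices are evaluation tuples $(y,f(y),\partial^{\le s}f(y))$ coming from the multiplicity code of \cite{kopparty_high-rate_2014}. Scalar multiplication $f\mapsto cf$ for $c\in\F_q^*$ commutes with evaluation and differentiation, hence is an automorphism of $H$ whose only fixed point is $f\equiv 0$. Such an automorphism preserves the ``share a common right-neighbor'' relation defining $G$, so it descends to an automorphism of $G$ and induces a free action of $\F_q^*$ on $V\setminus\{0\}$. Since \cref{thm: informal bipartite pos and impos} gives $q=\poly(\log N)$, this produces the claim.

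I expect the main obstacle to be parameter calibration rather than anything conceptually hard. I must check that the combined loss from chaining the two expansion directions and subtracting the $|S|$ self-contribution stays below $\eps=0.01$, which is immediate once the formal version of \cref{thm: informal bipartite pos and impos} is invoked with $\eps_L,\eps_R$ small enough; and that the two constraints on $K$ collapse to the clean minimum via the case split $M\lessgtr\sqrt{N}$ underlying \cref{rmk:tightness when n>2s+2}. Verifying $\F_q^*$-invariance requires only a brief inspection of \cref{def: KTGraph full}.
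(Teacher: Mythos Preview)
Your approach is essentially the paper's: take the bipartite half of the KT graph, expand twice to get lossless expansion, and use scalar multiplication by $\F_q^*$ for the free action. The paper packages the double-expansion step as \cref{lem:Two-sided lossless expander to lossless expander} and the group action as \cref{lem:Fq action on KT bipartite half}, both matching your outline.

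There is one substantive difference worth flagging. You obtain regularity by working with the \emph{multigraph} (one edge per shared right-neighbor), giving multi-degree exactly $D_L(D_R-1)$. The paper instead defines the bipartite half as a \emph{simple} graph---an edge iff some common right-neighbor exists---and then proves regularity via a separate argument specific to the KT construction (\cref{lem:KT bipartite half is regular}): two polynomials $f,g$ are adjacent iff their degree-$s$ Taylor expansions agree at some $y$, and a direct count shows the neighbor set size is independent of $f$. Your multigraph route is more generic and sidesteps this lemma entirely; the paper's route yields a simple graph but needs the extra structural observation. Either is fine for the stated theorem.

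One small slip: you write $K=\min(K_L,\,K_R/((1-\eps_L)D_L))$ and then say ``$|\GammaL(S)|\ge (1-\eps_L)D_L|S|$, which by the choice of $K$ is at most $K_R$.'' But to apply right-expansion you need the \emph{upper} bound $|\GammaL(S)|\le D_L|S|$ to be at most $K_R$, not the lower bound. The correct constraint is $K\le K_R/D_L$, which is exactly what the paper uses in \cref{lem:Two-sided lossless expander to lossless expander}. This is harmless for the final parameters (the factor $(1-\eps_L)$ disappears into the $o(1)$), but the inequality as written does not justify the step.
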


We realize that though the free group action adds to the structure of this graph, the group action is too small for applications. 
\begin{remark}
When $\delta \le 0.49$, the value of $K$ is almost optimal (a trivial upper bound is $K\le N / D$) since in that regime, $K = N^{\delta} \ge \left(N / D\right)^{0.99}$.
\end{remark}

One can show that there exist non-bipartite lossless expanders with even constant degree. So, the degree of our lossless graph obtained is far from optimal. Nevertheless, as far as we know, this is the first explicit construction of a regular lossless (non-bipartite) expander with expanding sets of size $N^{0.49}$. 

\subsection{Lossless expanders in the balanced setting} \label{sec: balanced lossless}
A parallel line of work considers \textit{balanced} bipartite lossless expanders, with motivations from coding theory. There have been explicit constructions of balanced one-sided lossless expanders \cite{capalbo_randomness_2002, cohen_hdx_2023, golowich_new_2024}, with which one can construct good error correcting codes \cite{sipser_expander_1996}. There has also been progress in understanding two-sided lossless expanders in the balanced setting. Lossless expansion was shown to be feasible in high-girth regular graphs \cite{MM21girthramanujan, hsieh_explicit_2024-1}, taking the bipartite double cover of which implies a balanced two-sided lossless expander. It was shown in \cite{lin_good_2022} that balanced two-sided lossless expanders with constant degree, constant imbalance, and certain algebraic properties have applications to good quantum low-density parity check (qLDPC) codes. Towards this direction, \cite{hsieh_explicit_2024-1} constructed explicit two-sided lossless expansion for extremely small sets of size $K=\Omega(\exp(\sqrt{\log\abs{V}}))$, which is still not sufficient for the application in \cite{lin_good_2022}. Our work in the polynomially unbalanced setting is incomparable to the balanced setting, as it is possible to achieve constant degree in the balanced setting, while having a polynomial imbalance in left and right nodes forces the graph to have non-constant degrees.

\paragraph{Concurrent works} Since our paper was made public online, there have been two new papers on constructions of balanced lossless expanders. Chen \cite{chen25unique} achieved balanced two-sided lossless expansion for polynomial-sized sets as one of their several results; however, their size of expanding sets is not optimal in the balanced setting. Our analysis proves optimal expanding set size in the unbalanced setting (see \cref{rmk:tightness when n>2s+2}). 
Hsieh, Lin, Mohanty, O'Donnell and Zhang  \cite{hsieh_explicit_2024} achieved $\frac{3}{5}$-two-sided unique-neighbor expansion in the balanced setting, which is the first explicit construction of balanced two-sided vertex expanders beyond the spectral barrier.

\section{Proof Overview}

In this section, we first outline the proof of \cref{thm: informal bipartite pos and impos}—our two-sided lossless expander. Using it, we construct high degree non-bipartite lossless expanders, proving \cref{thm: informal non-bipartite pos and impos}.

\subsection{Two-sided lossless expander}

We show that the bipartite graph defined in \cite{kalev_unbalanced_2022} based on multiplicity codes is a two-sided lossless expander. The left-to-right lossless expansion was shown in \cite{kalev_unbalanced_2022}. Our main contribution  is showing that the KT graph also expands losslessly from right to left. To do this, we first show that the KT graph is right-regular. Second, for any pair of right vertices, we compute the exact number of common left neighbors they have. Finally, for any not-too-large subset on the right, we lower bound the number of its left neighbors by using the inclusion-exclusion principle  to subtract all possible double counted common left neighbors from the total number of outgoing edges. 
\par We state an informal version of our result and present details on the strategy sketched above.
\begin{theorem}[Informal version of \cref{MainThm}]\label{MainThm-Informal}
            For every field $\F_q$ and $n, s\in \N$ with $15\le (s+1) < n < char(\F_q)$, and any $\delta>0$, there exists an explicit bipartite graph $G = (L\sqcup R, E)$ with $L = \F_q^n, R = \F_q^{s+2}$ with left degree $d_L=q$ and right degree $d_R= q^{n - (s+1)}$ such that $G$ is a
            two-sided $(K_L, A_L, K_R, A_R)$ expander
            where $K_L=\Omega(q^{s+1})$, $A_L = q - n(s+2)$, $K_R=\delta q^{\min(s+1, n - (s+1))}$, $A_R = \left(1 - O\left(\delta\cdot\frac{q-1}{q}\right)\right)q^{n - (s+1)}$.
\end{theorem}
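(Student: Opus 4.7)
The plan is to exploit the algebraic structure of the KT graph, in which the left vertices are polynomials $f\in\F_q[x]$ of degree $<n$, the right vertices are tuples $(y,v_0,\ldots,v_s)\in \F_q\times\F_q^{s+1}$, and $(f,(y,\vec v))$ is an edge precisely when the length-$(s{+}1)$ Hermite data $(f(y),f'(y),\ldots,f^{(s)}(y))$ (suitably normalized) equals $\vec v$. The left-to-right bound $A_L=q-n(s+2)$ for sets of size $K_L=\Omega(q^{s+1})$ is already supplied by \cite{kalev_unbalanced_2022}, so the work is to establish the right-to-left expansion.

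The first step is right-regularity. For any right vertex $(y,\vec v)$, the left neighbours are the preimages under the linear evaluation map $\phi_y:\F_q[x]_{<n}\to \F_q^{s+1}$ sending $f\mapsto (f^{(i)}(y))_{i=0}^{s}$; the $s+1$ coordinate conditions are linearly independent in characteristic $>n$, so every fibre has size $q^{n-(s+1)}$ and $d_R=q^{n-(s+1)}$. Next I would compute the common-neighbour count for two distinct right vertices $(y_1,\vec v^{(1)})$ and $(y_2,\vec v^{(2)})$: if $y_1=y_2$ the intersection is empty, while if $y_1\ne y_2$ the Chinese Remainder Theorem identifies the joint constraint with a single residue class modulo $(x-y_1)^{s+1}(x-y_2)^{s+1}$, a modulus of degree $2(s+1)$. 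Hence the common neighbours number exactly $q^{n-2(s+1)}$ when $n\ge 2(s+1)$, and at most $1$ otherwise.

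With these estimates in hand I apply the Bonferroni lower bound
\[
  \bigl|\GammaR(S)\bigr| \;\ge\; |S|\,d_R \;-\; \sum_{\{r,r'\}\subseteq S,\; y(r)\ne y(r')}\bigl|\GammaR(r)\cap \GammaR(r')\bigr|.
\]
Partitioning $S$ by its $y$-coordinate into buckets of sizes $m_1,\ldots,m_t$ (with $t\le q$), the number of distinct-$y$ pairs is $\sum_{i<j}m_im_j \le \tfrac{|S|^2(q-1)}{2q}$ by Cauchy--Schwarz, which is the source of the $(q-1)/q$ factor in $A_R$. Substituting the common-neighbour bound from the previous step and using $|S|\le \delta q^{\min(s+1,\,n-(s+1))}$ shows that the second term is at most $\tfrac{\delta(q-1)}{2q}\cdot |S|\,d_R$, yielding $A_R=\bigl(1-O(\delta(q-1)/q)\bigr)d_R$ as stated.

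The step that deserves the most care is the case split $n\gtreqless 2(s+1)$, since the common-neighbour count is governed by two qualitatively different formulas in the two regimes---the fibres of a surjective versus an injective evaluation map---and this dichotomy is precisely what produces the $\min(s+1,n-(s+1))$ in the definition of $K_R$. Aside from this, the remainder of the argument is a routine inclusion--exclusion computation, and the proof amounts to collecting the above counts and choosing the hidden constants.
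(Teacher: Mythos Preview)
Your proposal is correct and follows essentially the same approach as the paper: right-regularity via surjectivity of the linear Hermite-evaluation map, the common-neighbour count via CRT (surjective/injective according to whether $n\gtreqless 2(s+1)$), and then a single level of inclusion--exclusion combined with the Cauchy--Schwarz bound $\sum_{i<j}m_im_j\le \tfrac{(q-1)|S|^2}{2q}$ on the bucket sizes. The paper carries out exactly this case split to obtain the $\min(s+1,n-(s+1))$ in $K_R$, so there is no substantive difference between your outline and the paper's argument.
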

 \cref{thm: informal bipartite pos and impos} is obtained from \cref{MainThm-Informal} by instantiating the parameters appropriately (see \cref{sec: plug in param} for more details). We now define the KT graph and then claim that it is a lossless right expander.
\begin{definition}[The KT graph \cite{kalev_unbalanced_2022}]\label{def: KTGraph}
    Let $q, n, s\in \N$ be such that $q$ is a prime power, characteristic of the finite field $\F_q \ge n$ and $s \le n / 2$. We define $G = (L \sqcup R, E)$ where $L = \F_q^n, R = \F_q^{s+2}$. The left degree is $q$ and for any $f\in \F_q^n$ and $y\in \F_q$, the $y$'th neighbor of $f$ is defined as follows: Identify $f$ as member of $\F_q[X]$ with degree of $f$ at most $n-1$ ; then, the neighbor $\GammaL{(f,y)}$ will be $(y, f^{(0)}(y), \dots, f^{(s)}(y))$ where $f^{(i)}$ is the $i$'th iterative derivative of $f$.
\end{definition}

\begin{theorem}[The KT graph losslessly expands from the right]\label{thm: right sided lossless expansion- Informal}
        The KT graph $G$ is a right $(K_R,A_R)$-lossless expander where $K_R=\delta \min(\abs{R}, \abs{L} / \abs{R})$, $\eps_R=O(\delta\cdot\frac{q-1}{q})$ for arbitrary $0 < \delta < 1$. In other words, for any subset $T\subseteq R$, $\abs{T} \leq K_R$, $T$ has at least $(1-\eps_R)d_R|T|$ neighbors on the left.
\end{theorem}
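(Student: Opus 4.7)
The plan is to exploit the fact that, viewed from the right, the KT graph is governed by a Hermite interpolation problem: the left-neighborhood of each right vertex is an affine subspace cut out by Hasse-derivative constraints, and two right vertices share a left-neighbor only when the combined constraints are compatible. First I will establish right-regularity. The left neighbors of a right vertex $(y,v_0,\ldots,v_s)$ are precisely those polynomials $f \in \F_q[X]$ of degree less than $n$ with $f^{(i)}(y) = v_i$ for all $0 \le i \le s$. Since $\mathrm{char}(\F_q) \ge n$ makes the Hasse derivatives $\{f^{(i)}(y)\}_{i<n}$ a change of basis for polynomials of degree less than $n$, these $s+1$ constraints are linearly independent and cut out an affine subspace of dimension $n-(s+1)$, giving $d_R = q^{n-(s+1)}$.

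Next I will compute, for each pair of distinct right vertices $u = (y_1,v^{(1)})$ and $w=(y_2,v^{(2)})$, the exact number $c_{u,w}$ of common left-neighbors. If $y_1=y_2$ then $v^{(1)} \neq v^{(2)}$ forces the constraints to contradict, so $c_{u,w}=0$; if $y_1 \neq y_2$ then the combined $2(s+1)$ constraints form a Hermite interpolation problem at two distinct points, and when $2(s+1) \le n$ this system is always consistent and of full rank, giving $c_{u,w}=q^{n-2(s+1)}$. I will then feed these counts into a Bonferroni lower bound: for each $w \in L$, writing $x_w := |\{u \in T : w \sim u\}|$, the pointwise inequality $\mathbf{1}[x_w \ge 1] \ge x_w - \tbinom{x_w}{2}$ (checked directly for $x_w \in \{0,1,2,\ldots\}$), summed over $w$, yields
$$\abs{\GammaR(T)} \;\ge\; d_R\abs{T} \;-\; \sum_{\{u,w\}\subseteq T} c_{u,w}.$$
Writing $m_y := |\{u \in T : u_1 = y\}|$, only pairs with distinct first coordinates contribute to the sum, and Cauchy--Schwarz ($\sum_{y} m_y^2 \ge \abs{T}^2/q$) yields
$$\sum_{\{u,w\}\subseteq T} c_{u,w} \;=\; \left(\tbinom{\abs{T}}{2} - \sum_{y} \tbinom{m_y}{2}\right) q^{n-2(s+1)} \;\le\; \frac{\abs{T}^2(q-1)}{2q}\, q^{n-2(s+1)}.$$
Substituting $d_R = q^{n-s-1}$ and choosing $\abs{T} \le \delta q^{s+1}$ will then deliver $\eps_R = O(\delta(q-1)/q)$ as claimed. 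The regime $2(s+1) > n$ (permitted by $s \le n/2$) will be handled analogously with the weaker pointwise bound $c_{u,w} \le 1$, yielding the $\delta q^{n-s-1}$ branch of $\min(\abs{R},\abs{L}/\abs{R})$.

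The main obstacle I expect is obtaining the $(q-1)/q$ factor in $\eps_R$ rather than a plain $\delta$: a naive union bound treats every distinct pair as contributing $q^{n-2(s+1)}$ common neighbors and only gives $\eps_R = O(\delta)$. The improvement hinges on the exact cancellation at pairs sharing a first coordinate, which can be harvested only by re-indexing through the fiber sizes $m_y$ and applying Cauchy--Schwarz there. A secondary subtlety will be justifying that the Hermite system at two distinct points is actually full rank on $\F_q^{<n}[X]$ whenever $2(s+1) \le n$, which is where the characteristic hypothesis enters so that Hasse derivatives behave like a genuine Taylor basis.
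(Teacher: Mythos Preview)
Your proposal is correct and follows essentially the same approach as the paper: establish right-regularity via the surjectivity of the derivative-evaluation map, compute pairwise common-neighbor counts via Hermite interpolation (zero for same first coordinate, $q^{n-2(s+1)}$ or $\le 1$ otherwise), and then apply one level of inclusion--exclusion together with the Cauchy--Schwarz bound $\sum_{i<j} m_i m_j \le \tfrac{(q-1)}{2q}\abs{T}^2$ on the bucket sizes to extract the $(q-1)/q$ factor. The only cosmetic differences are that the paper phrases the interpolation step through the Chinese Remainder Theorem rather than a Taylor/Hasse basis argument, and calls the combinatorial step ``inclusion--exclusion'' rather than ``Bonferroni''; the underlying computations are identical.
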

\cref{MainThm-Informal} immediately follows from left expansion shown by \cite{kalev_unbalanced_2022} and \cref{thm: right sided lossless expansion- Informal}. For the rest of this section, we focus on proving \cref{thm: right sided lossless expansion- Informal} that relies on the following two key lemmas.
\begin{lemma}[Right regularity]\label{lem: overview right regular}
    The KT graph $G$ is right-regular and has right-degree $d_R=q^{n-(s+1)}$.
\end{lemma}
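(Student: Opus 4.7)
The plan is to reduce the right-degree count to a routine linear-algebra computation over $\F_q$ by expanding each candidate polynomial in the translated monomial basis centered at $y$.

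Fix an arbitrary right vertex $v=(y,a_0,a_1,\dots,a_s)\in R=\F_q^{s+2}$. Its left neighbors are exactly the polynomials $f\in \F_q[X]$ of degree at most $n-1$ that satisfy the $s+1$ linear conditions $f^{(i)}(y)=a_i$ for $i=0,1,\dots,s$. So it suffices to show that for \emph{every} such $v$ the number of such $f$ equals $q^{n-(s+1)}$.

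First, I would rewrite $f$ in the basis $\{1,(X-y),(X-y)^2,\dots,(X-y)^{n-1}\}$ of the space of polynomials of degree $<n$; write $f(X)=\sum_{j=0}^{n-1} c_j (X-y)^j$. Because this is a change of basis in the $n$-dimensional $\F_q$-vector space, the correspondence $f\leftrightarrow (c_0,\dots,c_{n-1})\in\F_q^n$ is a bijection. Next, I would express each constraint $f^{(i)}(y)=a_i$ as a condition on the $c_j$'s. Since the iterative (Hasse) derivative of $(X-y)^j$ evaluated at $y$ picks out $\binom{j}{i}$ times $(X-y)^{j-i}\big|_{X=y}$, one obtains $f^{(i)}(y)=c_i$ directly; if instead one reads $f^{(i)}$ as the usual $i$-th derivative, one gets $f^{(i)}(y)=i!\, c_i$, and the hypothesis $\mathrm{char}(\F_q)>n>s$ guarantees $i!\neq 0$ in $\F_q$ for $0\le i\le s$, so the relation is still invertible. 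In either reading, the $s+1$ constraints take the form $c_i = \alpha_i$ for some $\alpha_i\in\F_q$ uniquely determined by $a_i$.

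Hence $c_0,\dots,c_s$ are forced to specific values, while the remaining $n-(s+1)$ coefficients $c_{s+1},\dots,c_{n-1}$ are completely unconstrained and may be chosen arbitrarily in $\F_q$. This yields exactly $q^{n-(s+1)}$ preimages, and because this count is independent of the chosen $v\in R$, the graph is right-regular with $d_R=q^{n-(s+1)}$, as claimed.

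There is no real obstacle: the only subtlety is verifying that the derivative-evaluation map on the translated monomials is diagonal and invertible, which is immediate from the characteristic assumption. I would, however, be explicit about which convention of ``iterative derivative'' is used (the Hasse derivative convention used in \cite{kopparty_high-rate_2014} makes the computation cleanest and avoids the $i!$ factors entirely).
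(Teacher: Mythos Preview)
Your proof is correct. It is essentially the same underlying fact as the paper's argument---surjectivity of the map $\psi_y$---but you reach it by a direct Taylor-expansion change of basis, whereas the paper phrases it via Hermite interpolation through the Chinese Remainder Theorem (taking $k=1$ in \cref{lem: hermite general}) to deduce that $\psi_y$ is onto and then counting fibers. Your route is arguably more elementary and self-contained for this single-point statement; the paper's CRT framework, on the other hand, is set up so that the same machinery immediately handles the two-point case $\psi_{y_1,y_2}$ needed for \cref{lem: overview overlap neighbor}, which is the real payoff. Your remark on the derivative convention is well taken: the paper's phrase ``iterative derivative'' means the usual $i$-th derivative, and the hypothesis $\mathrm{char}(\F_q)>n$ is precisely what makes the $i!$ factors invertible.
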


\begin{lemma}[Number of common left neighbors]\label{lem: overview overlap neighbor}
    For any pair of right-vertices $w_1, w_2\in\F_q^{s+2}$ such that $w_1 = (y_1, z_1), w_2 = (y_2, z_2)$ where $y_1\ne y_2\in\F_q$ and $z_1, z_2\in\F_q^{s+1}$, we have $\abs{\GammaR{(y_1,z_1)}\cap \GammaR{(y_2,z_2)}} = q^{n - (2s+2)}$ if $n \ge 2s+2$ and $\abs{\GammaR{(y_1,z_1)}\cap \GammaR{(y_2,z_2)}} \le 1$ if $n\le 2s+2$. 
\end{lemma}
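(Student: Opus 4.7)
The plan is to interpret the number of common left neighbors as the size of the preimage under a single $\F_q$-linear Hermite evaluation map, and then dispatch the two regimes via a dimension count / injectivity argument.

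First, I would identify $L = \F_q^n$ with $\F_q[X]_{<n}$, so that a polynomial $f$ is a common left neighbor of $w_1 = (y_1, z_1)$ and $w_2 = (y_2, z_2)$ (with $y_1 \neq y_2$) if and only if $f^{(i)}(y_j) = z_j[i]$ for all $i \in \{0, 1, \ldots, s\}$ and $j \in \{1, 2\}$—that is, $2(s+1) = 2s+2$ linear constraints on $f$. Assembling them defines an $\F_q$-linear evaluation map
\[ T \colon \F_q[X]_{<n} \to \F_q^{2s+2}, \quad T(f) = \bigl(f^{(0)}(y_1), \ldots, f^{(s)}(y_1),\ f^{(0)}(y_2), \ldots, f^{(s)}(y_2)\bigr), \]
and the count of common left neighbors equals $\abs{T^{-1}(z_1, z_2)}$, which is either $0$ or $q^{\dim \ker T}$ depending on whether $(z_1, z_2) \in \operatorname{Im}(T)$.

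For the case $n \ge 2s+2$, I would invoke classical Hermite interpolation at the distinct nodes $y_1, y_2$ (the assumption $\mathrm{char}(\F_q) > n > s+1$ in the lemma ensures $i!$ is invertible for all $i \le s$, so ordinary and Hasse derivatives agree up to nonzero rescaling of coordinates). Restricting $T$ to the subspace $\F_q[X]_{<2s+2}$ gives a linear map between two $\F_q$-vector spaces of equal dimension $2s+2$, and it is injective because any $f$ in its kernel would vanish to order at least $s+1$ at each $y_j$, hence be divisible by $(X-y_1)^{s+1}(X-y_2)^{s+1}$, a polynomial of degree $2s+2$ strictly exceeding $\deg f$. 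Thus $T|_{\F_q[X]_{<2s+2}}$ is a bijection, so $T$ itself is surjective with $\dim \ker T = n - (2s+2)$, yielding exactly $q^{n - (2s+2)}$ common left neighbors regardless of $(z_1, z_2)$. For the case $n \le 2s+2$, the same divisibility observation applied directly to $\ker T \subseteq \F_q[X]_{<n}$ now forces $f = 0$, so $T$ is injective and each preimage has size at most $1$.

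I do not anticipate a substantive obstacle: the only subtlety is confirming that the ``iterative derivative'' map $f \mapsto (f^{(0)}(y), \ldots, f^{(s)}(y))$ genuinely recovers the first $s+1$ Taylor coefficients of $f$ at $y$, which is automatic from the characteristic hypothesis (or already built into Hasse derivatives). Once that is pinned down, both cases reduce cleanly to the dimension/divisibility argument above.
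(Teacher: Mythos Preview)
Your proposal is correct and follows the same overall strategy as the paper: both identify the common-neighbor count with the size of a fiber of the $\F_q$-linear map $\psi_{y_1,y_2}\colon \F_q[X]_{<n}\to\F_q^{2s+2}$, and then establish surjectivity (when $n\ge 2s+2$) or injectivity (when $n\le 2s+2$) via Hermite interpolation. The only difference is in how the Hermite step is proved: the paper packages it as a Chinese Remainder Theorem statement over $\F_q[Y]$ modulo the coprime ideals $(Y-y_i)^{s+1}$, whereas you argue directly that any $f$ in the kernel must be divisible by $(X-y_1)^{s+1}(X-y_2)^{s+1}$ and hence vanish by degree considerations. Both are standard routes to the same fact; your divisibility argument is perhaps slightly more self-contained, while the paper's CRT formulation generalizes cleanly to $k$ nodes (their \cref{lem: hermite general}) and makes the Taylor-remainder interpretation explicit.
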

\cref{thm: right sided lossless expansion- Informal} then follows by an application of the inclusion-exclusion principle---subtracting the maximum number of common neighbors between any pair of vertices in $T$ from the total number of edges leaving $T$---we get the required lower bound on the size of $T$'s left neighborhood.
\par We now discuss the proof techniques for showing \cref{lem: overview right regular} and \cref{lem: overview overlap neighbor}. We start by making a simple but useful observation on the structure of the KT graph $G$.
\begin{observation}\label{obs: same seed disjoint neighborhood}
    Fix $w=(y,z_0, \cdots, z_s) \in R$ and let $f\in L$ be any left-neighbor of $w$. Then it must be the case that $w$ is the $y$'th neighbor of $f$. Now for any $w'\in R$ such that $w'= (y, z_0', \cdots, z_s')$, it holds that $f \notin \GammaR(w')$. This is saying that any pair of right vertices $(w, w')$ that come from the same seed \footnote{We sometimes refer to $y \in \F_q$ as the ``seed'', like in the condensers literature.} must have disjoint left neighborhoods.
\end{observation}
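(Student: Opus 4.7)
The plan is to derive the observation almost directly from the definition of the KT graph, using two elementary facts: (i) the first coordinate of the $y$-th neighbor of any $f$ is exactly $y$, and (ii) each $f \in L$ has a unique $y$-th neighbor for each $y \in \F_q$.

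First I would unpack what it means for $f$ to lie in $\GammaR(w)$. Since $w = (y, z_0, \dots, z_s)$ has first coordinate $y$, and since the definition of the KT graph forces the $y'$-th neighbor of $f$ to have first coordinate $y'$, the only way for any neighbor of $f$ to coincide with $w$ is through the index $y' = y$. Thus $f \in \GammaR(w)$ if and only if $\GammaL(f,y) = w$, i.e.,
\[
(y, f^{(0)}(y), \dots, f^{(s)}(y)) = (y, z_0, \dots, z_s).
\]
In particular, $w$ is the unique $y$-th neighbor of $f$.

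Now suppose toward contradiction that $f$ is also a left-neighbor of some $w' = (y, z_0', \dots, z_s') \neq w$ having the same first coordinate $y$. By the same argument, the only way $f \in \GammaR(w')$ is for $\GammaL(f, y) = w'$. But $\GammaL(f, y)$ is a single element of $R$ determined by $f$ and $y$ alone, so we cannot simultaneously have $\GammaL(f,y) = w$ and $\GammaL(f,y) = w'$ unless $w = w'$. This contradiction yields $f \notin \GammaR(w')$, which is what we wanted.

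There is essentially no genuine obstacle here: the whole argument turns on the fact that the first coordinate of $\GammaL(f,y)$ is hard-coded to equal $y$ in \cref{def: KTGraph}, which makes the map $(f,y) \mapsto \GammaL(f,y)$ deterministic once $y$ is fixed. The only thing to be careful about while writing it out is to phrase the uniqueness step cleanly so that it is clear why $f$ cannot sneak into $\GammaR(w')$ via some other seed $y'' \neq y$: such a $y''$ would force the first coordinate of $w'$ to be $y''$, contradicting $w' = (y, z_0', \dots, z_s')$.
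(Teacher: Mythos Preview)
Your proposal is correct and matches the paper's reasoning: the paper states this as an observation without a formal proof, relying implicitly on exactly the two facts you identify---that the first coordinate of $\GammaL(f,y)$ is forced to equal $y$, and that $\GammaL(f,y)$ is a single well-defined element once $f$ and $y$ are fixed. Your write-up simply makes explicit what the paper leaves as self-evident from \cref{def: KTGraph}.
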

Central to our analysis of the right degree and the number of common left neighbors are the following linear maps.
\begin{definition}
For $y\in \F_q$, define the map $\seed_y(f): \F_q^n \to \F_q^{s+1}$ as follows: Interpret $f\in \F_q[X]$ as a degree $\le n-1$ polynomial and map it to $(f^{(0)}(y), \dots, f^{(s)}(y))$ where $f^{(i)}$ is the $i$'th iterative derivative of $f$.
\end{definition}
We note that $\seed_y$ is a $\F_q$-linear map, for any $y \in \F_q$.
\begin{definition} 
For $y_1, y_2\in \F_q$, $y_1 \neq y_2$, define the map $\seed_{y_1,y_2}(f): \F_q^n \to \F_q^{2(s+1)}$ as the concatenation of the respective linear maps, that is, $\seed_{y_1,y_2}(f) =  (\seed_{y_1}(f) , \seed_{y_2}(f))$.
\end{definition}

Proving the above lemmas (about the KT graph) now boils down to analyzing both $\seed_y$ and $\seed_{y_1,y_2}$ for all $y, y_1, y_2 \in \F_q$.
\begin{enumerate}
    \item We show that $\psi_y$ for $y\in \F_q$ is surjective, which along with \cref{obs: same seed disjoint neighborhood} implies \cref{lem: overview right regular}: For any $w = (y, z_0, \cdots, z_s) \in R$, the set of its left neighbors is $\{f\in L \mathrel{|} (y, \seed_y(f)) = w\} = \seed^{-1}_y(z_0, \cdots, z_s)$. Therefore, the right degree $D_R = \abs{\seed^{-1}_y(z_0, \cdots, z_s)} = q^n/q^{s+1} = q^{n-(s+1)}$.
    \item We show $\seed_{y_1,y_2}$ is surjective when $n \geq 2s+2$ and injective $n \leq 2s+2$, which implies \cref{lem: overview overlap neighbor}: Similar to above, let $w_1= (y_1, z_1) \in R$ and $w_2 = (y_2, z_2) \in R$, $y_1 \neq y_2$, be any pair of right vertices from \textit{different} seeds. We extend \cref{obs: same seed disjoint neighborhood} to see that the number of $f \in L$ such that $(y_1, \seed_{y_1}(f)) = w_1$ and $(y_2, \seed_{y_2}(f)) = w_2$ is exactly $\abs{\seed_{y_1, y_2}^{-1}(z_1, z_2)}$. When $n\ge 2s+2$, this map is surjective, and the number of left neighbors shared by $w_1$ and $w_2$ is $q^{n - (2s+2)}$. When $n\le 2s+2$, this map is injective and the number of shared neighbors is at most $1$.
\end{enumerate}
To conclude the proof, we carry out Hermite interpolation (see \cref{lem: hermite general}) by applying the Chinese remainder theorem to show the surjectivity and injectivity of the maps $\seed_{y},\seed_{y_1}$ and $\seed_{y_2}$. 

\subsubsection{Tightness of right expansion}

We will show that our parameters from \cref{thm: right sided lossless expansion- Informal} are tight. Throughout this section, we let $P_d\subseteq\F_q[x]$ be the polynomials of degree exactly $d$ and $P_{<d}\subseteq\F_q[x]$ be the vector space over $\F_q$ of polynomials of degree strictly less than $d$. First, notice that when $|R| > |L| / |R|$, the right expansion is optimal by \cref{rmk:tightness when n>2s+2}. Hence, we only focus on the case when $s+1< n< 2s+2$ and show the following:
\begin{theorem}[Informal version of \cref{thm:epsilon and K tradeoff tight when n<2s+2}]
\label{thm-proof overview: tightness}
For $s+1 <n< 2s+2$ and any $0<\delta \le 2$, there exists $T\subseteq R$ such that $\abs{T} = \delta q^{n-s-1}$ and $\abs{\GammaR(T)}=(1-\varepsilon) D_R\abs{T}$ with $\varepsilon = \delta / 4$.
\end{theorem}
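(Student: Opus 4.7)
The plan is to concentrate $T$ on just two distinct seeds, so that the two halves of $T$ share as many left-neighbors as possible. Fix $y_0,y_1\in \F_q$ with $y_0\ne y_1$, and consider the linear map $\seed_{y_0,y_1}:L\to \F_q^{2(s+1)}$; by \cref{lem: overview overlap neighbor} together with $n<2s+2$, this map is injective, so its image $V$ is an $n$-dimensional subspace. Both coordinate projections $\pi_0,\pi_1:V\to \F_q^{s+1}$ are surjective (since each $\seed_{y_i}$ is), hence have fibers of size $q^{n-s-1}=D_R$. One would then introduce the subspace $U_0:=\{z\in\F_q^{s+1}:(z,0)\in V\}$, of dimension $n-s-1$ by a dimension count, and for each $z_0^*\in\F_q^{s+1}$ write $F_1(z_0^*):=\{z_1:(z_0^*,z_1)\in V\}$, a coset of $U_1:=\{z:(0,z)\in V\}$ of size $q^{n-s-1}$. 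The key observation is that $F_1(z_0)=F_1(z_0^*)$ for every $z_0\in z_0^*+U_0$: since $(z_0-z_0^*,0)\in V$, we have $(z_0,z_1)\in V\iff(z_0^*,z_1)\in V$. Consequently the full Cartesian product $(z_0^*+U_0)\times F_1(z_0^*)$ sits inside $V$.

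Next, pick any $z_0^*$ and choose subsets $Z_0\subseteq z_0^*+U_0$ and $Z_1\subseteq F_1(z_0^*)$ with $|Z_0|=|Z_1|=\delta q^{n-s-1}/2$, which fits inside cosets of size $q^{n-s-1}$ precisely because $\delta\le 2$. Setting $T:=(\{y_0\}\times Z_0)\cup(\{y_1\}\times Z_1)$ gives $|T|=\delta q^{n-s-1}$, and $\GammaR(T)=\seed_{y_0}^{-1}(Z_0)\cup \seed_{y_1}^{-1}(Z_1)$. Each preimage has size $|Z_i|\cdot D_R$ by right-regularity (\cref{lem: overview right regular}), while their intersection equals $\seed_{y_0,y_1}^{-1}(Z_0\times Z_1)$, which has exactly $|Z_0|\,|Z_1|$ elements because $Z_0\times Z_1\subseteq V$ and $\seed_{y_0,y_1}$ is a bijection $L\to V$. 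Inclusion--exclusion then yields
\[
|\GammaR(T)|=|Z_0|D_R+|Z_1|D_R-|Z_0|\,|Z_1|=D_R|T|-\bigl(\delta q^{n-s-1}/2\bigr)^{2}=(1-\delta/4)D_R|T|,
\]
so $\varepsilon=\delta/4$ as required.

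The only nontrivial step is the coset identity $F_1(z_0)=F_1(z_0^*)$ on $z_0^*+U_0$; it is immediate from the definition of $U_0$ but is indispensable, since without it $Z_0\times Z_1$ would generically meet $V$ in far fewer than $|Z_0|\,|Z_1|$ points and the resulting loss would be too small to witness tightness. Non-integer values of $\delta q^{n-s-1}/2$ can be handled by rounding with floors and ceilings, contributing only an $o(1)$ correction to $\varepsilon$.
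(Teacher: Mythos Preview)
Your proof is correct and, in fact, substantially more elementary than the paper's. Both arguments reduce to finding subsets $Z_0,Z_1\subseteq\F_q^{s+1}$ of size $\delta q^{n-s-1}/2$ with $Z_0\times Z_1$ contained in the image of $\psi_{y_0,y_1}$, and both then finish with the same inclusion--exclusion computation. The difference lies in how that product is located. The paper passes through the Chinese Remainder map $\varphi:\F_q[x]/(g_1g_2)\to\F_q[x]/(g_1)\times\F_q[x]/(g_2)$, proves a structural lemma describing $\varphi(P_{<n})$ as a union of diagonal lines $\ell_{\sigma(h)}$ indexed by an auxiliary injective homomorphism $\sigma:P_{<n-(s+1)}\to P_{<s+1}$ (itself built from yet another map $\rho$ whose well-definedness requires a linear-algebraic argument), and then takes $S_1,S_2\subseteq\sigma(P_{<n-(s+1)})$. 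You bypass all of this: once the image $V$ is a linear subspace with both projections surjective, the kernels $U_0,U_1$ of the two projections each have dimension $n-(s+1)$, and the identity $F_1(z_0)=F_1(z_0^*)$ for $z_0\in z_0^*+U_0$ immediately embeds a full Cartesian product of the right size inside $V$. Your argument uses only the injectivity and linearity of $\psi_{y_0,y_1}$ together with surjectivity of each $\psi_{y_i}$, which the paper already established; the CRT scaffolding and the maps $\rho,\sigma$ are not needed. What the paper's route buys is an explicit polynomial description of the image in the ``line'' basis, which may be of independent interest, but for the tightness statement itself your linear-algebraic shortcut is both shorter and conceptually cleaner.
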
 

For fixed $y_1\ne y_2\in \F_q$, we will construct $T = T_1 \sqcup T_2$ satisfying the following:
\begin{itemize}
    \item 
    $|T_1| + |T_2| = (\delta/2) q^{n-s-1}$.

    \item 
    The first coordinate of every element of $T_1$ is always $y_1$ and of $T_2$ is always $y_2$.

    \item
    For all $t_1\in T_1$ and $t_2\in T_2$, there exists a degree less than $n$ polynomial $f$ such that $\GammaL(f, y_1) = t_1$ and $\GammaL(f, y_2) = t_2$. 
\end{itemize}

We then observe that $\abs{\GammaR(T)} = \abs{\GammaR(T_1)} + \abs{\GammaR(T_2)} - \abs{T_1}\cdot \abs{T_2}$. Since the right degree of $G$ is $q^{n-s-1}$, we have that $\abs{\GammaR(T_1)} = \abs{\GammaR(T_2)} = (\delta/2) q^{2n-2s-2}$. Hence, we compute that $\abs{\GammaR(T)} = (\delta - \delta^2/4) q^{2n-2s-2} = (1-\delta/4) D_R\abs{T}$, proving \cref{thm-proof overview: tightness}.

To construct such $T_1, T_2$, it suffices to construct $S_1, S_2\subset \F_q^{s+1}$ with $|S_1| = |S_2| = K/2$ such that $S_1\times S_2\subset \psi_{y_1, y_2}(P_{<n})$ (see \cref{cor:lem:S1 and S2 construction for psi} for a formal claim). Indeed, we can let $T_1 = (y_1, S_1)$ and $T_2 = (y_2, S_2)$ and check that $T_1$ and $T_2$ have the desired properties.

Before we show how to construct such $S_1$ and $S_2$, we will need to introduce a few algebraic objects. Let $g_1(x) = (x - y_1)^{s+1}, g_2(x) = (x - y_2)^{s+1}$, and let $\varphi: \F_q[x] \to \F_q[x] /g_1\times \F_q[x] / g_2$ as $\varphi(f) = (f \mod g_1 , f \mod g_2)$. 

We then prove a structural result regarding $\varphi$:
\begin{lemma}[Informal version of \cref{lem:Structure of image of phi}]\label{lemma-proof-overview: structure of image of phi}
For $s+1< d< 2s+2$, we have 
$$\varphi(P_{<d}) = \bigcup_{h\in P_{<d-(s+1)}} \{(f, f + \sigma(h)) \mid f\in P_{\le s}\}.$$
\end{lemma}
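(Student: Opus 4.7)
The plan is to parametrise $P_{<d}$ by Euclidean division by $g_1$ and then read off the two coordinates of $\varphi$ directly. Since $\deg g_1=\deg g_2=s+1$, residues modulo either $g_i$ are represented uniquely by polynomials in $P_{\le s}$; and because $y_1\ne y_2$ the polynomials $g_1,g_2$ are coprime, so by the Chinese Remainder Theorem $\varphi$ is bijective on $P_{<2s+2}$. I will take $\sigma\colon \F_q[x]\to P_{\le s}$ to be the $\F_q$-linear map $\sigma(h):=hg_1\bmod g_2$; the lemma then asserts that $\varphi(P_{<d})$ is precisely the ``slanted'' subspace obtained by translating the diagonal $\{(f,f):f\in P_{\le s}\}$ by elements of $\{0\}\times \sigma(P_{<d-(s+1)})$.

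For the inclusion $\varphi(P_{<d})\subseteq \mathrm{RHS}$: given $F\in P_{<d}$, divide by $g_1$ to write $F=hg_1+r$ uniquely with $r\in P_{\le s}$ and $h\in \F_q[x]$. The constraints $\deg F<d$ and $\deg g_1=s+1$ force $\deg h<d-(s+1)$, so $h\in P_{<d-(s+1)}$. Then $F\bmod g_1=r$, while $F\bmod g_2=r+(hg_1\bmod g_2)=r+\sigma(h)$, so $\varphi(F)=(r,r+\sigma(h))$, which is of the required form (with the ``$f$'' of the statement being $r$).

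For the reverse inclusion: given $f\in P_{\le s}$ and $h\in P_{<d-(s+1)}$, set $F:=f+hg_1$. Then $\deg(hg_1)\le (d-s-2)+(s+1)=d-1$ and $\deg f\le s<d$, so $F\in P_{<d}$; by $\F_q$-linearity of $\varphi$, $\varphi(F)=(f,f)+(0,\sigma(h))=(f,f+\sigma(h))$, as required.

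I do not anticipate a genuine obstacle: the entire argument is polynomial-degree bookkeeping, and the only place the hypothesis $s+1<d<2s+2$ is used is to guarantee that $P_{<d-(s+1)}$ is nontrivial yet a proper subspace of $P_{\le s}$---precisely the regime in which $\varphi|_{P_{<d}}$ is neither trivially injective (as it would be for $d\le s+1$) nor surjective onto $P_{\le s}\times P_{\le s}$ (as it would be for $d\ge 2s+2$ by CRT), but admits the clean graph-like description above. The only mild subtlety is making sure $\sigma$ is well defined as a map into $P_{\le s}$, which is immediate since reduction mod $g_2$ produces a polynomial of degree at most $s$.
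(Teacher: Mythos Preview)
Your proof is correct and follows the same parametrisation as the paper: write $F\in P_{<d}$ via Euclidean division by $g_1$ as $F=hg_1+r$ with $r\in P_{\le s}$ and $h\in P_{<d-(s+1)}$, then read off $\varphi(F)=(r,r+\sigma(h))$. Where you differ is in the definition of $\sigma$. The paper first introduces an auxiliary map $\rho:P_{<n-(s+1)}\to P_{<n-(s+1)}$ sending the quotient $h_1$ of $f$ by $g_1$ to the quotient $h_2$ of $f$ by $g_2$, proves $\rho$ is well defined and an isomorphism via an explicit upper-triangular matrix argument, and only then sets $\sigma(h)=hg_1-\rho(h)g_2$. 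Your definition $\sigma(h)=hg_1\bmod g_2$ is the same map (take $r_1=0$ in the paper's setup), but sidesteps the detour through $\rho$ entirely: linearity is immediate, and well-definedness into $P_{\le s}$ is just the degree bound on a remainder. The paper's route has the side benefit of exhibiting $\rho$ explicitly as an invertible matrix, but for this lemma your argument is cleaner. One thing you do not address is the injectivity of $\sigma$, which the paper establishes at the end by a counting argument (using that $\varphi$ is a bijection on $P_{<2s+2}$); this is needed downstream for the tightness construction but is not part of the displayed identity you were asked to prove.
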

Here, $\sigma: P_{\le s}\to P_{\le s}$ is a specially chosen injective homomorphism that we define later.

To construct such $S_1$ and $S_2$, we construct $R_1\subset \F_q[x]/g_1, R_2\subset F_q[x]/g_2$ such that $R_1\times R_2\subset \varphi(P_{<n})$ (see \cref{lem:S1 and S2 construction for phi} for formal claim). Once we have such $R_1, R_2$, we let $S_1 = \psi_{y_1}(R_1)$ and $S_2 = \psi_{y_2}(R_2)$. We then check that if $f\in P_{<n}$ is such that $\varphi(f) = (r_1, r_2)$, then $f(y_1) =  r_1(y_1) = s_1$ and $f(y_2) = r_2(y_2) = s_2$, showing that $S_1\times S_2\subset\psi_{y_1, y_2}(P_{<d})$, as desired.

We now construct such $R_1$ and $R_2$. Let $R_1$ and $R_2$ be arbitrary size $K/2$ subsets of $\sigma(P_{<n-(s+1)})$.
Note that since $\sigma$ is injective, $\abs{\sigma(P_{<n-(s+1)})}= q^{n-s-1} \ge K/2$ and so we can indeed pick such $R_1$ and $R_2$.
We claim that for any $r_1\in R_1, r_2\in R_2$, it holds that $(r_1, r_2)\in \varphi(P_{< n})$. Using \cref{lemma-proof-overview: structure of image of phi}, it suffices to show that $(r_1, r_2) = (f, f + \sigma(h))$ for some $f\in P_{<n}$ and $h\in P_{n-(s+1)}$. Since $R_1,R_2\subseteq\sigma(P_{<n-(s+1)})$, we know that $r_1 = \sigma(h_1)$ and $r_2 = \sigma(h_2)$ for some $h_1, h_2\in P_{<n-(s+1)}$.
Consequently,
\[
(r_1, r_2) 
= (\sigma(h_1), \sigma(h_2)) 
= (\sigma(h_1), \sigma(h_1) + \sigma(h_2-h_1))
\]
as desired. In the last line, we used the fact that $\sigma$ is a homomorphism and that $h_1, h_2\in P_{< n-(s+1)}$.

We finally define $\sigma$:
\begin{definition}
Recall that $\sigma:P_{<n-(s+1)}\to P_{\le s}$.
To compute $\sigma(h_1)$, we let $f\in P_{<n}$ be any polynomial such that $f = h_1g_1 + r_1$ where $r_1\in P_{\le s}$ is the remainder of $f$ modulo $g_1$. We then write $f = h_2g_2 + r_2$ where $r_2\in P_{\le s}$ is the remainder of $f$ modulo $g_2$. The output of $\sigma(h_1)$ is $r_2 - r_1$.
\end{definition}

At first glance, it seems unclear whether $\sigma$ is even a well defined function. To help show this, we define $\rho: P_{<n-(s+1)}\to P_{<n-(s+1)}$ such that $\rho(h_1) = h_2$ where $h_1, h_2$ are defined as above. We first show that $\rho$ is an isomorphism. To do this, we observe that $h_1g_1 - h_2g_2 = r_2 - r_1$ has degree at most $s$, and so it must be that $h_1g_1$ and $h_2g_2$ agree on all coefficients corresponding to degree $\ge s+1$. We compare these coefficients and, using linear algebra, show that we can indeed obtain a unique $h_2$ from $h_1$, showing it is indeed a function. This argument in fact directly shows that $\rho$ is an isomorphism.

Once we have that $\rho$ is an isomorphism, we can then define $\sigma(h_1) = h_1g_1 - \rho(h_2)g_2$. Using this, it easily follows that $\sigma$ is a homomorphism. From this we obtain \cref{lemma-proof-overview: structure of image of phi}. By a further counting argument, we can show that $\sigma$ is injective. For details, see \cref{subsec:structure of image of phi}.

\subsection{Non-bipartite lossless expander}\label{subsec: proof overview non-bipartite}

We show that the bipartite half of the KT graph (from the previous section) yields a non-bipartite regular lossless expander.
The bipartite half is an operation of bipartite graphs that transforms them into a non-bipartite graph, and is defined as follows: given a bipartite graph $G = (L\sqcup R, E)$, its bipartite half $G^2[L]$ is a graph with vertex set $L$ where there is an edge $(u, v) \in G^2[L]$ iff $u$ and $v$ share a common neighbor in $G$.

One nuance of the bipartite half is that applying it to a biregular bipartite graph does not necessarily mean that the bipartite half will be regular itself (although the graph we obtain from \cite{kalev_unbalanced_2022} will indeed be regular). Thus, we must define what it means for a graph to be lossless in this non-regular setting. A natural definition just involves summing the total number of neighbors of a set.
\begin{definition}\label{def:irregular lossless}
    An irregular graph $G=(V,E)$ is a \emph{$(K,\varepsilon)$-lossless expander}\footnote{We abuse notation between the regular and irregular cases of graphs since this definition of lossless expansion for an irregular graph captures our previous definition of lossless expansion for regular graphs.} if for any set $S\subseteq V$ of size at most $K$ we have that $\abs{\Gamma(S)}\geq (1-\varepsilon)\sum_{v\in S}d(v)$ where $d(v)$ represents the degree of vertex $v$. 
\end{definition}
A stronger notion of lossless expansion is with respect to the highest degree of a node present in a graph. 
\begin{definition}\label{def:max-degree irregular lossless}
    An irregular graph $G=(V,E)$ is a \emph{max-degree $(K,\varepsilon)$-lossless expander} if for any set $S\subseteq V$ of size at most $K$ we have that $\abs{\Gamma(S)}\geq (1-\varepsilon)D\abs{S}$ where $D=\max_{v\in V}d(v)$, the maximum degree of any vertex in $G$.
\end{definition}

Using this definition, our main observation is that the bipartite half of any two-sided lossless bipartite expander yields a non-bipartite, max-degree lossless expander.

\begin{lemma}[\cref{lem:Two-sided lossless expander to lossless expander} restated]\label{lem:overview-Two-sided lossless expander to lossless expander}
Let $G=(L\sqcup R,E)$ be a $(D_L,D_R)$-regular $(K_L,A_L,K_R,A_R)$-two-sided lossless expander. Then $G^2[L]$ is a max-degree $(K,A)$-expander where each node has a degree in $[D_LA_R,D_LD_R]$, and with $K=\min(K_L,K_R/D_L)$ and $A=A_LA_R$.
\end{lemma}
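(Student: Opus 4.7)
The plan is to observe that, up to how one handles self-loops, the $G^2[L]$-neighborhood of any set $S \subseteq L$ is exactly the two-step neighborhood $\GammaR(\GammaL(S))$ in $G$, and then push $S$ through one application of left-lossless expansion followed by one application of right-lossless expansion. I would start with the degree bounds. For a fixed $u \in L$, the set $\GammaR(\GammaL(\{u\}))$ has size at most $D_L \cdot D_R$, by counting paths $u \to w \to v$ (each of the $D_L$ right-neighbors $w$ of $u$ has exactly $D_R$ left-neighbors). For the lower bound, note that $\GammaL(\{u\}) \subseteq R$ has size exactly $D_L$, and since $K = \min(K_L, K_R/D_L) \ge 1$ forces $K_R \ge D_L$, the right-lossless expansion of $G$ applies and gives $|\GammaR(\GammaL(\{u\}))| \ge A_R \cdot D_L$. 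This places every vertex degree in $[D_L A_R, D_L D_R]$, and in particular the maximum degree is $D = D_L D_R$.

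For the main expansion claim, let $S \subseteq L$ with $|S| \le K = \min(K_L, K_R/D_L)$, and set $T := \GammaL(S) \subseteq R$. Since $|S| \le K_L$, left-lossless expansion gives $|T| \ge A_L |S|$; at the same time, $|T| \le D_L |S| \le D_L \cdot (K_R/D_L) = K_R$, so right-lossless expansion applies to $T$ and yields $|\GammaR(T)| \ge A_R |T| \ge A_L A_R |S|$. It remains to verify the inclusion $\GammaR(T) \subseteq \Gamma_{G^2[L]}(S)$: for any $v \in \GammaR(T)$, pick $w \in T$ with $(v,w) \in E$, and then $u \in S$ with $(u,w) \in E$ (such a $u$ exists since $w \in T = \GammaL(S)$); now $u$ and $v$ share the common $G$-neighbor $w$, so by definition $v \in \Gamma_{G^2[L]}(u) \subseteq \Gamma_{G^2[L]}(S)$. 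Chaining the three bounds delivers $|\Gamma_{G^2[L]}(S)| \ge A_L A_R |S| = A|S|$, as required.

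The main obstacle is entirely notational rather than mathematical: the inclusion $\GammaR(T) \subseteq \Gamma_{G^2[L]}(S)$ in the last step implicitly requires each $u \in S$ to count as a $G^2[L]$-neighbor of $S$, which is automatic under the natural convention that the bipartite half retains self-loops (every $u \in L$ is joined to itself through any of its $D_L$ right-neighbors, and this is also the convention that makes the degree computation land in $[D_L A_R, D_L D_R]$ on the nose). Without that convention, the inequality shifts by at most an additive $|S|$, which is harmless in the lossless regime because $A_L A_R \gg 1$ and can be absorbed into a marginally larger $\varepsilon$. Modulo this bookkeeping, the entire proof is the two-line chain above and uses no new ingredient beyond the two-sided lossless expansion of $G$.
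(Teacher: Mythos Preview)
Your proposal is correct and follows essentially the same approach as the paper: bound degrees via one application of right expansion to $\GammaL(\{u\})$, then prove expansion by applying left expansion to $S$ followed by right expansion to $\GammaL(S)$, using $|S|\le K_R/D_L$ to ensure the second step is valid. Your extra care about the inclusion $\GammaR(T)\subseteq \Gamma_{G^2[L]}(S)$ and the self-loop convention is a detail the paper glosses over, and your derivation of $D_L\le K_R$ from $K\ge 1$ recovers a hypothesis the paper states explicitly in its formal version of the lemma.
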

The proof of this lemma essentially follows from expanding twice in the underlying two-sided expander $G$. Since we force our initial set to be at most $K_L$ and $K_R/D_L$, we are guaranteed that we can use the left-to-right expansion of $G$ and then additionally the right-to-left expansion of $G$, where at each step we expand by $A_L$ and $A_R$, respectively. 

Finally, we use the bipartite two-sided lossless expander from \cref{thm: informal bipartite pos and impos} as the base graph in \cref{lem:overview-Two-sided lossless expander to lossless expander} to obtain \cref{thm: informal non-bipartite pos and impos}.
Luckily, if we use the KT graph as our bipartite two-sided lossless expander, then the resultant graph obtained from taking the bipartite half is indeed regular (see \cref{lem:KT bipartite half is regular} for a proof).

\begin{theorem}[Informal version of \cref{thm:normal lossless expander}]\label{thm:overview-normal lossless expander}
  For infinitely many $N$ and all constant $0 < \delta < 0.99$, there exists an explicit regular $(K, \eps = 0.01)$ lossless expander $G = (V, E)$ where $|V| = N$, the degree is $D$ where $N^{1 - 1.01\delta}\le D \le N^{1 - 1.01\delta + o(1)}$ and $K = \min\left(N^{\delta}, N^{1 - 1.01\delta - o(1)}\right)$. Moreover, $G$ is endowed with a free group action from $\F_q$ where $q=\poly(\log N)$ if one vertex is removed.
\end{theorem}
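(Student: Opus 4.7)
The plan is to instantiate the KT graph from \cref{MainThm-Informal} with carefully chosen parameters and then invoke \cref{lem:overview-Two-sided lossless expander to lossless expander} on its bipartite half on the left side. First, I would pick a prime power $q = \poly(\log N)$ whose characteristic exceeds $n$, choose $n$ so that $q^n = N$ (this is where ``infinitely many $N$'' enters), and set $s + 1 = \lceil 1.01\delta \cdot n \rceil$. The constraints $15 \le s+1 < n < \mathrm{char}(\F_q)$ required by \cref{MainThm-Informal} then hold automatically for large enough $N$ whenever $0 < \delta < 0.99$, since $n = \Theta(\log N/\log\log N)$ and $\delta$ is bounded away from $1$.

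With these choices, applying \cref{MainThm-Informal} with internal constant $\delta'$ chosen as a sufficiently small multiple of $0.01$ produces a biregular two-sided lossless bipartite expander $G = (L \sqcup R, E)$ with $|L| = N$, $D_L = q = N^{o(1)}$, $D_R = q^{n-s-1}$, $K_L = \Omega(q^{s+1})$, $K_R = \delta' q^{\min(s+1,\,n-s-1)}$, and both $A_L/D_L$ and $A_R/D_R$ at least $1 - O(\delta')$. Taking the bipartite half $G^2[L]$ (which has $N$ vertices) and applying \cref{lem:overview-Two-sided lossless expander to lossless expander} shows that $G^2[L]$ is a max-degree $(K, A_L A_R)$-expander with $K = \min(K_L,\, K_R/D_L)$ and every vertex degree in $[D_L A_R,\, D_L D_R]$. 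Since the paper separately establishes \cref{lem:KT bipartite half is regular}, this max-degree expander is in fact regular, yielding genuine lossless expansion. A direct parameter calculation then gives degree $D = \Theta(q^{n-s})$ lying in $[N^{1 - 1.01\delta},\, N^{1 - 1.01\delta + o(1)}]$ and $K = \min(N^{\delta},\, N^{1 - 1.01\delta - o(1)})$, with $\eps = O(\delta') \le 0.01$ by choice of $\delta'$.

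For the free group action, I would use scalar multiplication by the multiplicative group $\F_q^{\times}$ on $L = \F_q^n$: for $c \in \F_q^{\times}$ and $f \in L$, set $c \cdot f = cf$. Since the map $\seed_y$ is $\F_q$-linear for every $y \in \F_q$, one has $\seed_y(cf) = c\,\seed_y(f)$, so $\seed_y(cf) = \seed_y(cf')$ iff $\seed_y(f) = \seed_y(f')$. Thus scalar multiplication preserves adjacency in $G^2[L]$ and acts by graph automorphisms, and the action is free on $L \setminus \{0\}$ because $cf = f$ with $c \ne 1$ forces $f = 0$. Removing the single zero vertex therefore yields the claimed free $\F_q^{\times}$-action. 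The only real obstacle is parameter bookkeeping---carefully verifying that the $\poly(\log N)$ factor of $q$ is absorbed by the $o(1)$ terms in the exponents of $D$ and $K$, and that the two regimes of $\min(K_L, K_R/D_L)$ (depending on whether $s+1 \lessgtr n-s-1$) correctly reproduce $\min(N^{\delta},\, N^{1-1.01\delta - o(1)})$ across the full range $0 < \delta < 0.99$. Everything else follows mechanically from the two cited lemmas.
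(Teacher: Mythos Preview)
Your approach is essentially the paper's: instantiate the KT graph with parameters tuned so that $s+1 \approx 1.01\delta n$, apply \cref{lem:overview-Two-sided lossless expander to lossless expander} to the bipartite half, invoke regularity of $G^2[L]$, and use $\F_q$-linearity of $\psi_y$ for the free scalar-multiplication action on $L\setminus\{0\}$. The paper routes the parameter choices through the technical \cref{thm: instantiate two sided lossless expanders} (with $\alpha=0.01$, $\eps_L=\eps_R=0.001$) rather than setting $s$ directly, but the substance is the same.

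One small inaccuracy to fix: in \cref{MainThm-Informal} the internal constant $\delta'$ governs only the right expansion $A_R$; the left expansion $A_L = q - n(s+2)$ is independent of $\delta'$ and is made close to $q$ by taking $q$ large enough relative to $n(s+2)$ (this is exactly what forces $q$ to be a specific $\poly(\log N)$ rather than an arbitrary one). So your sentence ``both $A_L/D_L$ and $A_R/D_R$ at least $1-O(\delta')$'' should instead say that $A_R/D_R \ge 1-O(\delta')$ while $A_L/D_L \ge 1 - n(s+2)/q$, and then choose $q$ so that the latter is at least $0.999$. This is purely bookkeeping, as you anticipated, and does not affect the structure of the argument.
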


In this setting, $A = A_LA_R \approx 0.99 D_LD_R$, implying $G^2[L]$ is indeed a max-degree lossless expander. Additionally, because the vertices in the bipartite half of the KT graph are elements of $\F_q^n$, we get a free group action from $\F_q$ on them by scalar multiplication. One needs to be careful here since $G^2[L]$ contains the zero polynomial vertex; we remove this vertex and observe that removing one vertex still preserves the expansion properties.

\paragraph{Organization} We use \cref{sec:prelim} to introduce necessary preliminaries. Then in \cref{subsec:main theorem} we show how our main theorem is proved assuming right regularity and knowing the overlap between two neighborhoods of right vertices. These facts are then proved in \cref{sec: hermite}. In \cref{sec: plug in param}, we plug in parameters to get our two-sided lossless expander. In \cref{sec:tightness} we prove tightness of our right-to-left expansion analysis of the KT graph. Finally, in \cref{sec:non-bipartite lossless expander} we show how the bipartite half of the KT graph is a non-bipartite lossless expander with a free group action.

We prove that our constructions are explicit in \cref{sec:expliciteness}, and discuss why our techniques do not work for the \cite{guruswami_unbalanced_2009} graph in \cref{sec:GUV not right regular}.

\section{Preliminaries}\label{sec:prelim}

\subsection{Notation}

For a function $f\in \F_q[X]$, we we use $f^{(j)}$ to denote the $j$'th iterated derivative of $f$.
We will often use the notation $b_i$ for $i\in \N$ to refer to the polynomial $x^i\in \F_q[x]$ and we will often use the fact that $(b_0, \dots, b_n)$ form a basis for the polynomials of degree at most $n$.
For a $(d_L, d_R)$-biregular bipartite graph $G = (L\sqcup R)$, we use $\GammaL: L\times [D_L]\to R$ to be the function that maps vertices in $L$ to their neighbors in $R$ as given by $G$; we use $\GammaR: R\times [D_R]\to L$ to be the function that maps vertices in $R$ to their neighbors in $L$ as given by $G$. Often, we will define graph $G$ by only defining the associated $\GammaL$. When clear from context, we sometimes abuse notation and use $\GammaL(w)$ to denote the right neighborhood of $w\in L$, and similarly $\GammaR(w)$ for the left neighborhood of $w\in R$.

\subsection{Lossless expansion}

Throughout this paper, we will be focusing on the notion of vertex expansion as opposed to other definitions (e.g., edge, spectral) of expansion. Defining vertex expansion of a regular graph is straightforward.
\begin{definition}\label{def:prelim-expander}
    A $D$-regular graph $G=(V,E)$ is a \emph{$(K,A)$-expander} if for all $S\subseteq V$ such that $\abs{S}\leq K$ we have that $\abs{\Gamma(S)}\geq A\abs{S}$. If $A=1-\varepsilon$, then we say that $G$ is a \emph{$(K,\varepsilon)$-lossless expander}.
\end{definition}

For biregular bipartite graphs, we must consider the degree of each side to define expansion.
\begin{definition}\label{def:prelim-two-sided lossless expander}
    A $(D_L,D_R)$-biregular graph $G=(L\sqcup R,E)$ is a \emph{$(K_L,A_L,K_R,A_R)$-two-sided expander} if for all $S\subseteq L$ of size at most $K_L$ we have $\abs{\GammaL(S)}\geq A_L\abs{S}$ and for all $S\subseteq R$ of size at most $K_R$ we have $\abs{\GammaR(S)}\geq A_R\abs{S}$. If $A_L=1-\varepsilon_L$ and $A_R=1-\varepsilon_R$, then we call $G$ a \emph{$(K_L,\varepsilon_L,K_R,\varepsilon_R)$-lossless two-sided expander}.
\end{definition}

For irregular graphs, we can generalize \cref{def:prelim-expander} in two ways. The first way is considering expansion with respect to the maximum number of neighbors of a set.

\begin{definition}\label{def:prelim-irregular lossless}
    An irregular graph $G=(V,E)$ is a \emph{$(K,\varepsilon)$-lossless expander} (where we abuse the word ``expander'' for both regular and irregular graphs) if for any set $S\subseteq V$ of size at most $K$ we have that $\abs{\Gamma(S)}\geq (1-\varepsilon)\sum_{v\in S}d(v)$ where $d(v)$ represents the degree of vertex $v$. 
\end{definition}
The second, stronger notion of lossless expansion is with respect to the highest degree of a node present in a graph. 
\begin{definition}\label{def:prelim-max-degree irregular lossless}
    An irregular graph $G=(V,E)$ is a \emph{max-degree $(K,\varepsilon)$-lossless expander} if for any set $S\subseteq V$ of size at most $K$ we have that $\abs{\Gamma(S)}\geq (1-\varepsilon)D\abs{S}$ where $D=\max_{v\in V}d(v)$, the maximum degree of any vertex in $G$.
\end{definition}

\subsection{The KT graph}

Throughout the paper, we will use construction of bipartite (left) lossless expanders from \cite{kalev_unbalanced_2022} based on multiplicity codes from \cite{kopparty_high-rate_2014}.
We will often refer to this graph `the KT graph':
\begin{definition}[The KT graph]\label{def: KTGraph full}
Let $q, n, s\in \N$ be such that $q$ is a prime power, characteristic of the finite field $\F_q \ge n$ and $s \le n / 2$. Define $G = (L \sqcup R, E)$ where $L = \F_q^n, R = \F_q^{s+2}$. The left degree is $q$ and for any $f\in \F_q^n$ and $y\in \F_q$, the $y$'th neighbor of $f$ is defined as follows: Identify $f$ as member of $\F_q[X]$ with degree of $f$ at most $n-1$ ; then, the neighbor $\GammaL{(f,y)}$ will be $(y, f^{(0)}(y), \dots, f^{(s)}(y))$ where $f^{(j)}$ is the $j$'th iterative derivative of $f$.
\end{definition}

\begin{remark}
In the paper \cite{kalev_unbalanced_2022}, the final lossless expander graph construction slightly differs from ours. While they do construct the KT-graph $G$ defined as above and show it has great (left) expanding properties, the final (left) lossless expander graph actually is defined as $H = (L \sqcup R, E)$ where $L = 2^n, R = \F_q^{s+2}$ and the left degree is $q$. $H$ is constructed by considering the subgraph of $G$ induced by vertices on the left side corresponding to $\{0, 1\}^n$. For us, the final two-sided lossless expander graph will be $G$ itself. This is why, our two-sided lossless expander graph has slightly worse parameters (worse constants) compared to the left lossless expander graph from \cite{kalev_unbalanced_2022}.
\end{remark}

\subsection{A useful inequality}
We will use the following inequality based on an application of the Cauchy-Schwarz inequality:
\begin{claim}\label{claim: pairwise sum inequality}
Fix $n\in \N, S\in \R$.
Let $x = (x_1, \dots, x_n)\in \R^n$ be such that $\sum_{1\le i\le n} x_i = S$.
Then,
\[
\sum_{1\le i < j \le n} x_i x_j \le \frac{(n-1)S^2}{2n}
\]
\end{claim}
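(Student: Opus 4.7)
The plan is to use the classical identity relating the square of a sum to the sum of squares and the pairwise products, and then to bound the sum of squares from below using Cauchy--Schwarz (which is exactly the hint given by the hypothesis's label).

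The starting point is the identity
\[
S^2 = \Bigl(\sum_{i=1}^n x_i\Bigr)^2 = \sum_{i=1}^n x_i^2 + 2\sum_{1\le i<j\le n} x_i x_j,
\]
so that the target quantity rearranges to
\[
2\sum_{1\le i<j\le n} x_i x_j = S^2 - \sum_{i=1}^n x_i^2.
\]
Hence proving the claimed upper bound is equivalent to proving the lower bound $\sum_{i=1}^n x_i^2 \ge S^2/n$.

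This lower bound is immediate from Cauchy--Schwarz applied to the vectors $(x_1,\dots,x_n)$ and $(1,\dots,1)$:
\[
S^2 = \Bigl(\sum_{i=1}^n x_i \cdot 1\Bigr)^2 \le \Bigl(\sum_{i=1}^n x_i^2\Bigr)\Bigl(\sum_{i=1}^n 1^2\Bigr) = n \sum_{i=1}^n x_i^2.
\]
Substituting back yields
\[
2\sum_{1\le i<j\le n} x_i x_j \le S^2 - \frac{S^2}{n} = \frac{(n-1)S^2}{n},
\]
and dividing by $2$ gives the claim. There is essentially no obstacle here; the only care needed is to apply Cauchy--Schwarz in the right direction (lower-bounding the sum of squares, not upper-bounding it), which is the only place the constraint $\sum x_i = S$ is used.
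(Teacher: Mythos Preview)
Your proof is correct and follows essentially the same approach as the paper: both use the identity $S^2 = \sum_i x_i^2 + 2\sum_{i<j} x_i x_j$ together with Cauchy--Schwarz applied to $(x_1,\dots,x_n)$ and $(1,\dots,1)$, differing only in the order of presentation.
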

\begin{proof}
Recall the Cauchy-Schwarz inequality:
$\left(\sum_{1\le i\le n} a_i b_i\right)^2 \le \left(\sum_{1\le i\le n} a_i^2\right)\left(\sum_{1\le i\le n} b_i^2\right)$.
We apply this with $a_1 = x_1, \dots, a_n = x_n$ and $b_1 = b_2 = \dots = b_n = 1$ to infer that
\[
S^2 \le \left(\sum_{1\le i\le n} x_i^2\right)\cdot n = \left(S^2 - 2\sum_{1\le i < j \le n} x_ix_j\right)\cdot n
\]
Rearranging, we infer that 
\[
\sum_{1\le i < j \le n} x_i x_j \le \frac{(n-1)S^2}{2n}
\]
as desired.
\end{proof}

\subsection{Free group actions on graphs}\label{subsec:prelim-free group action}
Here we recall basic notions about group actions on graphs. First, we define an abstract group notion.
\begin{definition}\label{def:group action}
    Let $G$ be a group and $X$ a set. A \emph{group action} $\cdot:G\times X\to X$ (where we write the $\cdot$ in infix notation) is a function that has the following two properties:
    \begin{enumerate}
        \item Identity: The identity element $1_G$ of $G$ always acts trivially as $1_G\cdot x=x$ for any $x\in X$.
        \item Compatibility: The group action and multiplication of $G$ are compatible. That is, for any $g,h\in G$ and $x\in X$ we have $(gh)\cdot x=g\cdot(h\cdot x)$ where $gh$ is the product of $g$ and $h$ in $G$.
    \end{enumerate}
\end{definition}

Next, we recall another abstract notion about group actions.
\begin{definition}\label{def:free group action}
    We say that a group action of $G$ on $X$ is \emph{free} if $g\cdot x=x$ for some $x\in X$ implies that $g=1_G$. 
\end{definition}

Finally, we consider what it means for a graph to be invariant with respect to a group action.
\begin{definition}\label{def:G-invariant graph}
    Let $G$ be a group and $H=(V,E)$ a graph with a group action from $G$. We say that $H$ is \emph{$G$-invariant} if for all $(v,w)\in E$ and $g\in G$ we have that $(g\cdot v,g\cdot w)\in E$. 
\end{definition}

\section{An Explicit Two-sided Lossless Expander}
In this section, we first describe how to prove our main theorem using right regularity and the size of the overlap in neighborhoods between any two right vertices. Then we prove these two facts in \cref{sec: hermite}.
    \subsection{Main theorem}\label{subsec:main theorem}
    Putting together all of our results with the left-to-right expansion of \cite{kalev_unbalanced_2022} yields our main theorem.
        \begin{theorem}\label{MainThm}
            For all finite fields $\F_q$ and $n, s\in \N$ with $15\le (s+1) < n < char(\F_q)$, there exists an explicit bipartite graph $G = (L\sqcup R, E)$ with $L = \F_q^n, R = \F_q^{s+2}$, left degree equal to $q$ and right degree $q^{n - (s+1)}$ such that $G$ is a
            two-sided $(K_L, A_L, K_R, A_R)$ expander
            with $A_L = q - \frac{n(s+2)}{2}\cdot \left(qK_L\right)^{1 / (s + 2)}$ and $A_R = \left(1 - \frac{K_R}{q^{\min(s+2, n-s)}}\cdot \frac{q-1}{2}\right)q^{n-(s+1)}$.
        \end{theorem}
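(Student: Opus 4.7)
The plan is to combine the left-to-right lossless expansion already established by \cite{kalev_unbalanced_2022} for the KT graph---which gives the claimed value of $A_L$ as a black box---with a new right-to-left analysis driven by the structural facts sketched in \cref{sec: hermite}: right-regularity $d_R = q^{n-(s+1)}$ (\cref{lem: overview right regular}), the disjoint-neighborhoods fact for same-seed right vertices (\cref{obs: same seed disjoint neighborhood}), and the pairwise overlap bound (\cref{lem: overview overlap neighbor}). Essentially all of the work goes into proving the right expansion, which I would carry out by inclusion-exclusion.

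Concretely, I would fix $T \subseteq R$ with $|T| \le K_R$. By right-regularity the total number of edges leaving $T$ is $|T| \cdot q^{n-(s+1)}$, so inclusion-exclusion yields
\[
|\GammaR(T)| \;\ge\; |T|\, q^{n-(s+1)} \;-\; \sum_{\{w_1, w_2\} \subseteq T} |\GammaR(w_1) \cap \GammaR(w_2)|.
\]
To control the correction term I would partition $T$ by the first coordinate (the ``seed''), writing $T = \bigsqcup_{y \in \F_q} T_y$ with $T_y = \{w \in T : w_1 = y\}$. By \cref{obs: same seed disjoint neighborhood}, any two vertices of $T$ sharing a seed contribute nothing to the correction, so only cross-seed pairs survive. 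By \cref{lem: overview overlap neighbor}, each such pair contributes at most $q^{n-(2s+2)}$ when $n \ge 2s+2$ and at most $1$ when $n \le 2s+2$; both bounds are captured uniformly by the expression $q^{n-(s+1)} / q^{\min(s+1,\, n-s-1)}$.

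For the combinatorial step, I would apply \cref{claim: pairwise sum inequality} to the nonnegative vector $(|T_y|)_{y \in \F_q}$, which has sum $|T|$ and at most $q$ entries, to get $\sum_{y_1 < y_2} |T_{y_1}| |T_{y_2}| \le (q-1)|T|^2/(2q)$. Plugging this and the overlap bound into the displayed inclusion-exclusion inequality and dividing through by $|T|\, q^{n-(s+1)}$ yields
\[
\frac{|\GammaR(T)|}{|T|\, q^{n-(s+1)}} \;\ge\; 1 \;-\; \frac{(q-1)\,|T|}{2\, q^{\min(s+2,\, n-s)}},
\]
using the identity $1 + \min(s+1,\, n-s-1) = \min(s+2,\, n-s)$. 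Since $|T| \le K_R$, this matches the stated $A_R$.

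The routine exponent-bookkeeping here is not a real obstacle. The genuinely nontrivial ingredient is \cref{lem: overview overlap neighbor}, whose proof (carried out in \cref{sec: hermite}) requires establishing the correct injectivity/surjectivity properties of the Hermite-evaluation map $\seed_{y_1, y_2}$ via the Chinese remainder theorem. Assuming that lemma and right-regularity, the main theorem itself reduces to the clean inclusion-exclusion computation above.
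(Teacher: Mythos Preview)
Your proposal is correct and follows essentially the same approach as the paper: cite \cite{kalev_unbalanced_2022} for $A_L$, then for $A_R$ partition $T$ by seed, use right-regularity and the pairwise overlap bound, and apply \cref{claim: pairwise sum inequality} inside inclusion-exclusion. The only cosmetic difference is that you unify the two regimes $n \ge 2s+2$ and $n \le 2s+2$ via the single expression $q^{\min(s+2,\,n-s)}$, whereas the paper splits into two explicit cases; your version is slightly slicker but the content is identical.
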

        \begin{proof}
            The left-to-right expansion follows from Theorem 3 from \cite{kalev_unbalanced_2022}. The right-to-left expansion follows from \cref{thm: right sided lossless expansion} below. The explicitness of $G$ follows from \cref{claim: explicitness of graph}.
        \end{proof}
        
        Our main achievement is showing the right-to-left expansion of the KT graph in \cref{thm: right sided lossless expansion} below.
        \begin{theorem}\label{thm: right sided lossless expansion}
            If $n \ge s+1$, then the KT graph $G$ in \cref{def: KTGraph} is a right $(K_{max},\eps)$-lossless expander for $K_{max}=\delta q^{s+1}$ and $\varepsilon = \frac{\delta(q-1)}{2q}\cdot q^{\max(2s+2-n, 0)}$ where $0 < \delta < 1$ is arbitrary.
        \end{theorem}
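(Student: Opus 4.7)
The plan is a direct inclusion–exclusion argument leveraging the two structural lemmas already sketched in the overview: right-regularity (\cref{lem: overview right regular}) and the bound on pairwise overlaps (\cref{lem: overview overlap neighbor}), together with the same-seed disjointness of \cref{obs: same seed disjoint neighborhood}. Fix $T \subseteq R$ with $|T| \le K_{\max} = \delta q^{s+1}$. I start from the inclusion–exclusion inequality
\[
\abs{\GammaR(T)} \;\ge\; \sum_{w\in T}\abs{\GammaR(w)} \;-\; \sum_{\{w_1,w_2\}\subseteq T}\abs{\GammaR(w_1)\cap \GammaR(w_2)}.
\]
By right-regularity the first sum equals $|T|\,q^{n-(s+1)}$, so all that is left is to upper bound the double-counted mass.

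To bound the overlap sum, I partition $T$ by seed: for each $y \in \F_q$ let $t_y := \abs{\{(y',z)\in T : y' = y\}}$, so $\sum_{y\in \F_q} t_y = |T|$. By \cref{obs: same seed disjoint neighborhood}, any pair sharing a seed contributes $0$ to the overlap sum. By \cref{lem: overview overlap neighbor}, any pair with distinct seeds contributes at most $q^{\max(n-(2s+2),\,0)}$ (this single expression handles both the $n\ge 2s+2$ and the $n\le 2s+2$ regimes). The number of unordered pairs with distinct seeds is exactly $\sum_{y_1<y_2} t_{y_1}t_{y_2}$, which by \cref{claim: pairwise sum inequality} (applied with the $q$ seed-counts summing to $S=|T|$) is at most $\tfrac{(q-1)|T|^2}{2q}$. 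Combining,
\[
\abs{\GammaR(T)} \;\ge\; |T|\,q^{n-(s+1)} \;-\; \frac{(q-1)|T|^2}{2q}\cdot q^{\max(n-(2s+2),\,0)}.
\]

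The last step is to match this with the claimed $(1-\eps)q^{n-(s+1)}|T|$. Factoring out $|T|\,q^{n-(s+1)}$ reveals that the effective loss factor is
\[
\eps \;\le\; \frac{(q-1)|T|}{2q}\cdot q^{\max(n-(2s+2),\,0) - (n-(s+1))} \;=\; \frac{(q-1)|T|}{2q}\cdot q^{-\min(s+1,\,n-(s+1))}.
\]
Plugging in the worst case $|T| = \delta q^{s+1}$ gives
\[
\eps \;\le\; \frac{\delta(q-1)}{2q}\cdot q^{(s+1)-\min(s+1,\,n-(s+1))} \;=\; \frac{\delta(q-1)}{2q}\cdot q^{\max(2s+2-n,\,0)},
\]
which is exactly the bound in the theorem statement.

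I do not expect a genuine obstacle here: the structural inputs (right-regularity, the exact/trivial overlap bound, the same-seed disjointness) are all provided by prior lemmas, and the remainder is bookkeeping. The only place where one must be careful is the exponent arithmetic in the final step, where the two cases $n\ge 2s+2$ and $n<2s+2$ are uniformly captured by $q^{\max(\,\cdot\,,0)}$, and the grouping-by-seed reduction to \cref{claim: pairwise sum inequality} (rather than the trivial bound $|T|^2/2$), which is what lets us save the crucial factor of $(q-1)/q$ in $\eps$.
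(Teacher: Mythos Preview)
Your proposal is correct and follows essentially the same approach as the paper: partition $T$ by seed, use same-seed disjointness plus the pairwise overlap lemma, apply \cref{claim: pairwise sum inequality} to the seed-counts, and finish by inclusion--exclusion. The only difference is cosmetic: the paper splits explicitly into the two cases $n\ge 2s+2$ and $s+1\le n\le 2s+2$, whereas you carry both through uniformly via the expression $q^{\max(n-(2s+2),0)}$ and the identity $(s+1)-\min(s+1,n-(s+1))=\max(2s+2-n,0)$.
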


        
        We prove \cref{thm: right sided lossless expansion} via the following properties of $G$:
        \begin{lemma}\label{thm: G right regular and right degree}
            When $n \ge s+1$, $G$ is right-regular and the right degree is $q^{n - (s+1)}$.
        \end{lemma}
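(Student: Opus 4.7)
The plan is to reduce right-regularity to a statement about a single linear map and then invoke a Hermite-interpolation argument via the Taylor expansion.

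First, I would fix an arbitrary right vertex $w=(y,z_0,\ldots,z_s)\in R=\F_q^{s+2}$ and describe $\GammaR(w)$ in closed form. By \cref{obs: same seed disjoint neighborhood}, every $f\in L$ with $w\in\GammaL(f,\cdot)$ must have $w=\GammaL(f,y)$, so
\[
\GammaR(w)=\{\, f\in\F_q^n \;:\; (f^{(0)}(y),\ldots,f^{(s)}(y))=(z_0,\ldots,z_s)\,\}=\seed_y^{-1}(z_0,\ldots,z_s).
\]
Since $\seed_y:\F_q^n\to\F_q^{s+1}$ is $\F_q$-linear, the entire claim reduces to showing that $\seed_y$ is surjective: once that is established, rank–nullity gives $|\seed_y^{-1}(z_0,\ldots,z_s)|=q^n/q^{s+1}=q^{n-(s+1)}$, uniformly over $y$ and $(z_0,\ldots,z_s)$, which is precisely right-regularity with the claimed degree.

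For surjectivity of $\seed_y$, I would give an explicit preimage using a Taylor polynomial centered at $y$. Given any target $(z_0,\ldots,z_s)\in\F_q^{s+1}$, define
\[
f(x)\;=\;\sum_{i=0}^{s} \frac{z_i}{i!}\,(x-y)^i \;\in\;\F_q[x].
\]
The hypothesis $s+1\le n<\mathrm{char}(\F_q)$ ensures that $i!$ is invertible in $\F_q$ for every $0\le i\le s$, so each coefficient is well defined; moreover $\deg(f)\le s\le n-1$, so $f$ lies in the domain $\F_q^n$ of $\seed_y$. Differentiating $j$ times and evaluating at $x=y$ kills every term except $i=j$, giving $f^{(j)}(y)=z_j$ for $0\le j\le s$, so $\seed_y(f)=(z_0,\ldots,z_s)$ as required. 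Equivalently, this is just the Chinese remainder theorem / Hermite-interpolation observation that the reduction map $\F_q[x]_{<n}\twoheadrightarrow \F_q[x]/(x-y)^{s+1}$ is surjective whenever $n\ge s+1$, combined with the isomorphism $\F_q[x]/(x-y)^{s+1}\cong\F_q^{s+1}$ via $f\mapsto(f^{(0)}(y),\ldots,f^{(s)}(y))$ coming from the Taylor expansion in characteristic greater than $s$.

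There is essentially no obstacle here beyond bookkeeping: the proof is just the reduction via \cref{obs: same seed disjoint neighborhood} followed by a one-line explicit construction. The only point that requires any care is the invertibility of $i!$ for $i\le s$, which is granted by $\mathrm{char}(\F_q)>n>s$, and the fact that $\deg f\le s\le n-1$ so $f$ genuinely lies in the domain $\F_q^n$. Once this is in hand, the right-degree count $q^{n-(s+1)}$ is immediate and independent of $w$, yielding right-regularity.
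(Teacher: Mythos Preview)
Your proposal is correct and follows essentially the same approach as the paper: both reduce the claim to surjectivity of the linear map $\seed_y$ and then count preimage sizes, with the only cosmetic difference being that the paper cites the general Hermite-interpolation lemma (via CRT) with $k=1$, whereas you write down the explicit Taylor polynomial preimage directly. One small bookkeeping point: the KT-graph definition only guarantees $\mathrm{char}(\F_q)\ge n$, not strict inequality, but this still gives $s\le n-1<\mathrm{char}(\F_q)$, so your invertibility argument for $i!$ goes through unchanged.
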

        
        \begin{lemma}\label{thm:two vertex overlap}
            For any pair of right-vertices $w_1, w_2$ such that $w_1 = (y_1, z_1), w_2 = (y_2, z_2)\in\F_q^{s+2}$ where $y_1\ne y_2\in\F_q$ and $z_1, z_2\in\F_q^{s+1}$, we have
            \[
            \abs{\GammaR{(y_1,z_1)}\cap \GammaR{(y_2,z_2)}}\le
            \begin{cases}
                q^{n - (2s+2)} & n \ge 2s+2\\
                1            & n \le 2s+2
            \end{cases}
            \]
        \end{lemma}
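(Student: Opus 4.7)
The plan is to reduce the count of common left neighbors to counting preimages of a single $\mathbb{F}_q$-linear map, and then analyze that map in the two regimes separately. Fix right vertices $w_1=(y_1,z_1)$ and $w_2=(y_2,z_2)$ with $y_1\neq y_2$. By the definition of $\GammaL$ in \cref{def: KTGraph full}, a polynomial $f\in L$ (of degree $<n$) is a common left neighbor of $w_1$ and $w_2$ if and only if its $y_1$-th and $y_2$-th neighbors are $w_1$ and $w_2$ respectively, which is exactly the condition $\seed_{y_1}(f)=z_1$ and $\seed_{y_2}(f)=z_2$, i.e., $\seed_{y_1,y_2}(f)=(z_1,z_2)$. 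So the quantity to bound is $|\seed_{y_1,y_2}^{-1}(z_1,z_2)|$, and it suffices to understand $\seed_{y_1,y_2}$ as an $\mathbb{F}_q$-linear map from $\mathbb{F}_q^n$ (identified with polynomials of degree $<n$) to $\mathbb{F}_q^{2(s+1)}$.

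For the case $n\geq 2s+2$, I would argue that $\seed_{y_1,y_2}$ is surjective. Concretely, for any target $(z_1,z_2)\in \mathbb{F}_q^{s+1}\times\mathbb{F}_q^{s+1}$, Hermite interpolation (i.e., the Chinese remainder theorem applied to the coprime moduli $(x-y_1)^{s+1}$ and $(x-y_2)^{s+1}$, as encapsulated in \cref{lem: hermite general}) produces a polynomial of degree $<2(s+1)\leq n$ whose iterated derivatives of order $0,\dots,s$ match $z_1$ at $y_1$ and $z_2$ at $y_2$. This polynomial lives in the domain of $\seed_{y_1,y_2}$, establishing surjectivity. By rank-nullity, every fiber then has size exactly $q^n/q^{2(s+1)}=q^{n-(2s+2)}$, giving the bound in this regime (with equality).

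For the case $n\leq 2s+2$, I would instead show $\seed_{y_1,y_2}$ is injective, which immediately gives $|\seed_{y_1,y_2}^{-1}(z_1,z_2)|\leq 1$. Suppose $\seed_{y_1,y_2}(f)=\seed_{y_1,y_2}(g)$ for $f,g$ of degree $<n$. Then $h:=f-g$ has $h^{(j)}(y_i)=0$ for $i\in\{1,2\}$ and $0\leq j\leq s$, so $y_1$ and $y_2$ are roots of $h$ of multiplicity at least $s+1$ (here the hypothesis $\operatorname{char}(\mathbb{F}_q)\geq n$ is used so that iterated derivatives detect multiplicities correctly). Since $y_1\neq y_2$, the polynomial $(x-y_1)^{s+1}(x-y_2)^{s+1}$, of degree $2s+2$, divides $h$. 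But $\deg h<n\leq 2s+2$, forcing $h=0$.

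I do not expect any real obstacle, beyond being careful about two points: (i) justifying that iterated derivatives of polynomials of degree $<n$ over $\mathbb{F}_q$ correspond to multiplicities when $\operatorname{char}(\mathbb{F}_q)\geq n$, which is a standard check using the binomial identity for $(x^k)^{(j)}$; and (ii) cleanly invoking the Hermite/CRT statement \cref{lem: hermite general}, since that black box is what powers the surjectivity argument. The two cases patch together consistently at the boundary $n=2s+2$, where both arguments yield exactly one preimage.
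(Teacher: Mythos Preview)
Your proposal is correct and follows essentially the same approach as the paper: reduce to counting fibers of the $\mathbb{F}_q$-linear map $\psi_{y_1,y_2}$, then use Hermite interpolation (the paper's \cref{lem: hermite general}) to establish surjectivity when $n\geq 2s+2$ and injectivity when $n\leq 2s+2$. The only cosmetic difference is that for injectivity the paper cites the uniqueness clause of \cref{lem: hermite general} directly, whereas you unpack that uniqueness as a root-multiplicity argument; these are the same content.
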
 

        We will prove both these lemmas in \cref{sec: hermite}.
        

        With the exact right-regularity of $G$ and the number of common left-neighbors shared by any pair of right-vertices generated by different seeds, we are ready to prove \cref{thm: right sided lossless expansion}.
        \begin{proof}[Proof of \cref{thm: right sided lossless expansion}]
            Our goal is to show that any right subset $T\subseteq \F_q^{s+2}$ of size at most $\delta q^{s+1}$ has a neighborhood of size at least $(1-\varepsilon)q^{n-(s+1)}\abs{T}$ on the left. 
    
            To do this, we consider $T$ as the disjoint union $T=\bigsqcup_{y\in\F_q}T_y$ of buckets $T_y=\{(y,\alpha):\alpha\in\F_q^{s+1}\}$ where $\abs{T_y} = t_y = \delta_y q^{s+1}$.
            Let $\delta = \sum_{y\in \F_q} \delta_y$.
            So, $\abs{T} = \delta q^{s+1}$.
            By \cref{thm: G right regular and right degree}, the number of edges leaving $T$ is $\abs{T}\cdot q^{n-(s+1)} = \delta q^n$.

            We now consider cases on whether $n \ge 2s+2$ or not:
            \begin{enumerate}
                \item[Case 1.] $n\ge 2s+2$.\\
                In this case, $\eps = \frac{\delta (q-1)}{2q}$.
                By \cref{thm:two vertex overlap}, the maximum number of double-counted left vertices is
                \[
                \sum_{\substack{i,j\in[q]\\i<j}} t_it_jq^{n-2(s+1)}
                = \sum_{\substack{i,j\in[q]\\i<j}}\delta_iq^{s+1}\cdot\delta_jq^{s+1}\cdot q^{n-2(s+1)}
                = q^n\sum_{\substack{i,j\in[q]\\i<j}}\delta_i\delta_j\le q^n\cdot \frac{q-1}{2q}\cdot \delta^2
                \]
                where for the last inequality, we used \cref{claim: pairwise sum inequality}.
                Applying one level of inclusion-exclusion reveals that
                \[
                    \abs{\GammaR{(T)}}\geq\delta q^n - q^n\cdot \frac{q-1}{2q}\cdot \delta^2
                    = \left(1 - \frac{\delta (q-1)}{2q}\right)\delta q^n
                    = \left(1 - \eps\right)q^{n-(s+1)}|T|
                \]
                where the last equality follows because $\eps =\frac{\delta(q-1)}{2q}$.
                
                \item[Case 2.] $2s+2\ge n\ge s+1$.\\
                In this case, $\eps = \frac{\delta (q-1)}{2q}\cdot q^{2s+2-n}$.
                By \cref{thm:two vertex overlap}, the maximum number of double-counted left vertices is
                \[
                \sum_{\substack{i,j\in[q]\\i<j}} t_it_j
                = \sum_{\substack{i,j\in[q]\\i<j}}\delta_iq^{s+1}\cdot\delta_jq^{s+1}
                = q^{2s+2}\sum_{\substack{i,j\in[q]\\i<j}}\delta_i\delta_j\le q^{2s+2}\cdot \frac{q-1}{2q}\cdot \delta^2
                \]
                where for the last inequality, we used \cref{claim: pairwise sum inequality}.
                We again apply one level of inclusion-exclusion to conclude that
                \[
                    \abs{\GammaR{(T)}}\geq\delta q^n - q^{2s+2}\cdot \frac{q-1}{2q}\cdot \delta^2
                    = \left(1 - \frac{\delta (q-1)}{2q}\cdot q^{2s+2-n}\right)\delta q^n
                    = \left(1 - \eps\right)q^{n-(s+1)}|T|
                \]
                where the last equality follows because $\eps =\frac{\delta(q-1)}{2q}\cdot q^{2s+2-n}$.
            \end{enumerate}
        \end{proof}

    \subsection{Right regularity and bounding common neighbors: Hermite interpolation}\label{sec: hermite}
        In this section, we show the $(q^{n-(s+1)})$-right-regularity of the KT graph G, and bound the number of common left neighbors shared by any pair of right vertices with different seeds. Both tasks are essentially a question of Hermite interpolation—we wish to find polynomials $f \in \F_q[Y]$ of degree at most $n-1$ such that when evaluating at some point $y\in \F_q$, the function value $f(y)$ and its first $s$ derivatives $(f^{(0)}(y), f^{(1)}(y), \cdots , f^{(s)}(y))$ match the values given by the right vertices. 

        \begin{lemma}[Hermite interpolation] \label{lem: hermite general} 
            Let $y_1, \cdots, y_k \in \F_q$ be distinct, and for $i \in [k]$, let $z_{i,0}, \cdots , z_{i,s} \in \F_q$. Then there exists a unique polynomial $f\in \F_q[Y]$ with degree at most $k(s+1)$ such that $f^{(j)}(y_i) = z_{i,j}$ for $i\in [k]$ and $j\in \{0\}\cup [s]$.
        \end{lemma}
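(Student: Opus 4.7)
The plan is to prove \cref{lem: hermite general} via the Chinese Remainder Theorem applied to the ring $\F_q[Y]$ with respect to the ideals $((Y-y_i)^{s+1})_{i \in [k]}$. The key observation is that, because the characteristic of $\F_q$ is at least $n > s$ (so that $j!$ is invertible in $\F_q$ for all $j \le s$), specifying the values $f^{(0)}(y_i), \ldots, f^{(s)}(y_i)$ is equivalent to specifying the class of $f$ modulo $(Y-y_i)^{s+1}$, via the truncated Taylor expansion around $y_i$:
\[
f(Y) \equiv \sum_{j=0}^{s} \frac{f^{(j)}(y_i)}{j!}\,(Y - y_i)^j \pmod{(Y - y_i)^{s+1}}.
\]

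First I would verify this equivalence: given target values $z_{i,0},\ldots,z_{i,s}$, define the polynomial $g_i(Y) = \sum_{j=0}^{s} \frac{z_{i,j}}{j!}(Y - y_i)^j \in \F_q[Y]$ and check that $g_i^{(j)}(y_i) = z_{i,j}$ for $j \le s$ by direct differentiation. Conversely, any $f$ whose class modulo $(Y-y_i)^{s+1}$ equals that of $g_i$ differs from $g_i$ by a multiple of $(Y-y_i)^{s+1}$, whose first $s$ derivatives vanish at $y_i$, so $f^{(j)}(y_i) = g_i^{(j)}(y_i) = z_{i,j}$.

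Next, since $y_1,\ldots,y_k$ are pairwise distinct, the polynomials $(Y-y_i)^{s+1}$ are pairwise coprime in $\F_q[Y]$, and the Chinese Remainder Theorem gives a ring isomorphism
\[
\F_q[Y]\Big/\prod_{i=1}^k (Y-y_i)^{s+1} \;\cong\; \prod_{i=1}^k \F_q[Y]/(Y-y_i)^{s+1}.
\]
Thus there exists a polynomial $f \in \F_q[Y]$, unique modulo $\prod_i (Y-y_i)^{s+1}$, satisfying $f \equiv g_i \pmod{(Y-y_i)^{s+1}}$ for every $i$. Picking the unique representative of degree strictly less than $k(s+1)$ (which in particular has degree at most $k(s+1)$ as required in the statement) and combining with the equivalence above yields existence.

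For uniqueness within the claimed degree bound, suppose $f_1, f_2$ both have degree at most $k(s+1)-1$ and satisfy the Hermite conditions. Their difference $f_1 - f_2$ then vanishes to order at least $s+1$ at each $y_i$, so $(Y-y_i)^{s+1} \mid (f_1 - f_2)$ for all $i$, and by coprimality the full product $\prod_i (Y-y_i)^{s+1}$ of degree $k(s+1)$ divides $f_1-f_2$; since $\deg(f_1 - f_2) < k(s+1)$, we conclude $f_1 = f_2$. The only subtle point in the whole argument is ensuring that $j!$ is invertible in $\F_q$ for $j \le s$, which is exactly the characteristic assumption built into the KT graph setup; no step beyond this is expected to present a real obstacle.
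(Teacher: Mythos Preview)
Your proposal is correct and follows essentially the same route as the paper's proof: both reduce the Hermite conditions to congruences $f \equiv g_i \pmod{(Y-y_i)^{s+1}}$ via the truncated Taylor expansion and then invoke the Chinese Remainder Theorem for pairwise coprime moduli. Your write-up is in fact more careful than the paper's---you make explicit the role of the characteristic assumption in inverting $j!$, and you correctly isolate uniqueness at degree strictly below $k(s+1)$ (the paper's phrasing ``degree at most $k(s+1)$'' is slightly loose, since adding any scalar multiple of $\prod_i (Y-y_i)^{s+1}$ would violate uniqueness at that bound).
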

        \begin{proof}
            Consider the following $k$ congruences,
            \[
                f(Y) \equiv f_1(Y) \mod (Y-y_1)^{s+1} ,\ 
                  \cdots, \
                 f(Y) \equiv f_k(Y) \mod (Y-y_k)^{s+1}
            \]
            Any polynomial $f$ that satisfies the above $k$ congruences must also satisfy $f^{(j)}(y_i) = z_{i,j}$ for $i\in [k]$ and $j\in \{0\}\cup [s]$— thus solving the interpolation—because $f_i(Y)$ is the order $s$ Taylor polynomial of $f$ at $y_i$.

            We conclude the proof by applying the Chinese remainder theorem for univariate polynomials which asserts that there exists a \textit{unique} polynomial $f \in \F_q[Y]$ of degree at most $k(s+1)$ that satisfies the above congruences. 
            
        \end{proof}

        Recall the maps $\psi_{y_1} : \F_q^n \rightarrow \F_q^{s+1}$, and $\psi_{y_1,y_2}: \F_q^n \rightarrow \F_q^{2s+2}$ where $\seed_y(f)=(f^{(0)}(y), \dots, f^{(s)}(y))$ and $\seed_{y_1, y_2}=  (\seed_{y_1}(f), \seed_{y_2}(f))$.
        We will use the following fact regarding linearity of derivatives:
        \begin{fact}[\cite{ritt1950differential}]\label{fact:derivative is linear}
            For all $\alpha, \beta\in \F_q$, $f, g\in \F_q[X]$ and $j\ge 0$, it holds that $(\alpha f + \beta g)^{(j)} = \alpha f^{(j)} + \beta g^{(j)}$.
        \end{fact}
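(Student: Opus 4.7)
The plan is to reduce to the base case $j=1$ by straightforward induction on $j$, and then verify linearity of the ordinary formal derivative directly from its action on monomials. Conceptually, $f \mapsto f^{(j)}$ is the $j$-fold composition of the single map $D: \F_q[X] \to \F_q[X]$ with $D(f) = f'$, and compositions of $\F_q$-linear maps are $\F_q$-linear, so the entire statement is formal once the $j=1$ case is handled.

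For the base case $j=1$, expand $f$ and $g$ as $\F_q$-linear combinations of the monomial basis $\{X^i\}_{i \geq 0}$, writing $f = \sum_i a_i X^i$ and $g = \sum_i b_i X^i$ (finite sums). The formal derivative is defined on basis elements by $(X^i)' = i \cdot X^{i-1}$ (with $i \in \F_q$) and extended $\F_q$-linearly to all of $\F_q[X]$. Combining coefficients of $X^{i-1}$ on both sides of $(\alpha f + \beta g)'$ gives $\alpha f' + \beta g'$ by a one-line manipulation using distributivity of multiplication over addition in $\F_q[X]$.

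For the inductive step, suppose the claim holds for $j-1$. Then by the inductive hypothesis, $(\alpha f + \beta g)^{(j-1)} = \alpha f^{(j-1)} + \beta g^{(j-1)}$. Differentiating both sides once more and applying the $j=1$ base case to the polynomials $f^{(j-1)}$ and $g^{(j-1)}$ then yields $(\alpha f + \beta g)^{(j)} = \alpha f^{(j)} + \beta g^{(j)}$, closing the induction.

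There is essentially no obstacle to carrying this out; the fact is a purely formal consequence of the $\F_q$-linearity of a single application of the formal derivative. In particular, no characteristic restriction on $\F_q$ is required for this statement in isolation, since the scalar $i \in \F_q$ appearing in $(X^i)' = iX^{i-1}$ is treated as an abstract field element and the linearity argument is oblivious to whether or not it vanishes. (The characteristic assumption elsewhere in the paper is used to ensure repeated derivatives do not collapse, not to secure linearity.)
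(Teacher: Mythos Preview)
Your proof is correct and complete. The paper does not actually give its own proof of this statement: it is recorded as a \emph{Fact} with a citation to \cite{ritt1950differential} and used without further justification. Your induction on $j$, reducing to the $\F_q$-linearity of a single formal derivative verified on the monomial basis, is the standard elementary argument and supplies what the paper simply outsources to the reference.
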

        
        From this fact, we directly obtain that $\seed_y$ is a linear map:
        
        \begin{corollary}\label{cor: seedy is fq linear}
        For all $y\in \F_q$, $\seed_y$ is an $\F_q$-linear map.
        \end{corollary}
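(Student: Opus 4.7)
The proof is essentially an immediate unwinding of the definition combined with \cref{fact:derivative is linear}. The plan is as follows: fix $\alpha, \beta \in \F_q$ and $f, g \in \F_q^n$, which we identify with polynomials in $\F_q[X]$ of degree at most $n-1$. By the definition of $\seed_y$, the $j$-th coordinate of $\seed_y(\alpha f + \beta g)$ is $(\alpha f + \beta g)^{(j)}(y)$. Applying \cref{fact:derivative is linear} gives that the polynomial $(\alpha f + \beta g)^{(j)}$ equals $\alpha f^{(j)} + \beta g^{(j)}$, and evaluating this identity at $y$ yields $\alpha f^{(j)}(y) + \beta g^{(j)}(y)$, which is precisely the $j$-th coordinate of $\alpha \seed_y(f) + \beta \seed_y(g)$. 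Since this holds for every $j \in \{0, 1, \ldots, s\}$, we conclude that $\seed_y(\alpha f + \beta g) = \alpha \seed_y(f) + \beta \seed_y(g)$, establishing $\F_q$-linearity.

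There is no real obstacle: the statement is essentially bookkeeping on top of \cref{fact:derivative is linear}. The only mild point worth flagging is that one implicitly invokes two linear operations, namely formal differentiation $f \mapsto f^{(j)}$ and evaluation $p \mapsto p(y)$ at a fixed field element; both are $\F_q$-linear, so each of the $s+1$ coordinate maps $f \mapsto f^{(j)}(y)$ is linear as a composition of linear maps, and $\seed_y$ is their Cartesian product. This structural observation already makes the corollary transparent and does not require any additional calculation.
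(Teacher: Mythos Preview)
Your proposal is correct and matches the paper's approach exactly: the paper simply states that the corollary is obtained directly from \cref{fact:derivative is linear}, and your write-up just spells out the coordinate-wise verification that the paper leaves implicit.
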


        Now we prove the right-regularity of $G$ (\cref{thm: G right regular and right degree}) and bound the number of overlapping left neighbors (\cref{thm:two vertex overlap}) as special cases of \cref{lem: hermite general}.
        
        \begin{proof}[Proof of \cref{thm: G right regular and right degree}]
            Let $w_1\in \F_q^{s+2}$ where $w_1 = (y_1, z_{1,0}, \cdots, z_{1,s})$.
            Take $k=1$ in \cref{lem: hermite general}, we get that there exists a unique polynomial $f$ of degree at most $s+1$ such that $f^{(j)}(y_1) = z_{1,j}$ for $j \in \{0\}\cup [s]$. This means the linear map $\psi_{y_1} : \F_q^{n} \rightarrow \F_q^{s+1}$ is surjective. Therefore, the number of left neighbors of any right vertex $w_1$ is exactly $|\psi^{-1}_{y_1} (z_{1,0}, \cdots, z_{1,s})| = q^{n-(s+1)}$.

        \end{proof}

        \begin{proof}[Proof of \cref{thm:two vertex overlap}]
           Let $w_1, w_2 \in \F_q^{s+2}$ where $w_1 = (y_1, z_{1,0}, \cdots, z_{1,s})$, $w_2 = (y_2, z_{2,0}, \cdots, z_{2,s})$, $y_1 \neq y_2$.
           Take $k=2$ in \cref{lem: hermite general}, we get that there exists a unique polynomial $f$ of degree at most $2(s+1)$ such that $f^{(j)}(y_i) = z_{i,j}$ for $i \in \{1,2\}, j \in \{0\}\cup [s]$. This means,
           \begin{itemize}
               \item When $n>2s+2$, $\psi_{y_1,y_2}: \F_q^n \rightarrow \F_q^{2s+2}$ is surjective, the number of common neighbors shared by $w_1, w_2$ is exactly $|\psi^{-1}_{y_1,y_2} (z_{1,0}, \cdots, z_{1,s},z_{2,0}, \cdots, z_{2,s})| = q^{n-2(s+1)}$.
               \item When $n\leq 2s+2$, $\psi_{y_1,y_2}: \F_q^n \rightarrow \F_q^{2s+2}$ is injective, $w_1, w_2$ share \textit{at most} 1 common left neighbor.
           \end{itemize}

        \end{proof}

\subsection{Plugging in the Parameters}\label{sec: plug in param}

We record our main results regarding two sided lossless expanders:

\begin{theorem}[Formal version of \cref{thm: informal bipartite pos and impos}]\label{thm: main simplified theorem}
For infinitely many $N$ and all $0 < \delta < 0.99$, there exists an explicit biregular two-sided $(K_{L}, \eps_L = 0.01, K_{R}, \eps_{R} = 0.01)$ lossless expander $\GammaL: [N]\times [D_L]\to [M]$ where $D_L \le O(\log^{204}(N))$, $N^{1.01\cdot \delta - o(1)}\le M \le D_L\cdot N^{1.01\cdot \delta}$, $K_{L} = N^{\delta}$, $K_{R} = \frac{1}{50}\cdot (1 / D_L)\cdot \min(M, N / M)$.
\footnote{Our theorem statement doesn't have any additional constraint on $D_R$ since it can be uniquely inferred from $N, M, D_L$.}
\end{theorem}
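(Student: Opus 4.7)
The plan is to derive \cref{thm: main simplified theorem} as a direct instantiation of \cref{MainThm}, choosing integer parameters $(n, s, q)$ that satisfy the hypothesis $15 \le s+1 < n < \operatorname{char}(\F_q)$ and yield $N = q^n$, $D_L = q$, $M = q^{s+2}$, $K_L = N^\delta$, together with the claimed $K_R$. The key constraint comes from the required form of $M$: since $M / D_L = q^{s+1}$ and the target demands $M \le D_L \cdot N^{1.01\delta}$, I would set $s+1 = \lfloor 1.01\, n\delta \rfloor$. This places $s+1$ strictly above $n\delta$, so the desired cap $K_L = q^{n\delta} = N^\delta$ lies well within the expansion regime governed by $A_L$, while the upper bound $s+1 \le 1.01\, n\delta$ ensures $M \le D_L \cdot N^{1.01\delta}$, and the lower bound $s+1 \ge 1.01\, n\delta - 1$ ensures $M \ge N^{1.01\delta - o(1)}$.

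Next I would choose $q$ to be a prime in a dyadic interval of the form $[C n^{204}, 2C n^{204}]$, whose existence is guaranteed by Bertrand's postulate; this automatically ensures $\operatorname{char}(\F_q) = q > n > s+1$ and gives $D_L = q = O(\log^{204} N)$ after solving $n = \log N / \log q$. Assuming $n$ (hence $N$) is large enough that $s+1 \ge 15$, the hypotheses of \cref{MainThm} are in place. To verify $\eps_L \le 0.01$, I would plug $K_L = q^{n\delta}$ and $s+2 \ge 1.01\, n\delta$ into $A_L = q - \tfrac{n(s+2)}{2}(qK_L)^{1/(s+2)}$, noting that the exponent $(n\delta + 1)/(s+2)$ stays below $1 - 1/101 + o(1)$. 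The resulting saving of $q^{-1/101 + o(1)}$ comfortably absorbs the polynomial $n(s+2)$ once $q = \Omega(n^{204})$. For $\eps_R \le 0.01$, I would take $K_R = \tfrac{1}{50}\min(M/D_L,\, N/(M D_L))$ and invoke \cref{thm: right sided lossless expansion} in both regimes $n \ge 2s+2$ and $n < 2s+2$; the $1/50$ prefactor absorbs the $(q-1)/(2q)$, and in the second regime it also swallows the factor $q^{\max(2s+2-n,\,0)}$ after the cancellation $N/(MD_L) = q^{n - s - 3}$.

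The hard part will be threading the needle between the two $\eps$-conditions while keeping $D_L$ polylogarithmic in $N$. Because $N = q^n$ couples $n$ and $q$, forcing $\eps_L$ small requires $q$ to dominate a fixed polynomial in $n$; but enlarging $q$ shrinks $n = \log N / \log q$, which can endanger the lower bound $s+1 \ge 15$ for small $\delta$ and simultaneously weaken the $D_L$ bound. The exponent $204$ in the theorem statement reflects this tradeoff: making it smaller kills the $\eps_L$ estimate because the saving $q^{-1/101+o(1)}$ cannot dominate $n(s+2)$, while enlarging it only weakens the bound on $D_L$ unnecessarily. Once the parameters are fixed, the rest is routine bookkeeping: \cref{MainThm} furnishes the two-sided lossless expansion with the stated $A_L, A_R$, and the explicitness assertion within \cref{MainThm} transfers to the instantiated parameters.
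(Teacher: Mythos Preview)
Your proposal is correct and follows essentially the same route as the paper: instantiate \cref{MainThm} with $s+1 \approx 1.01\,n\delta$ and $q$ a prime of size polynomial in $n$ (large enough that $q^{1/101}$ dominates $n(s+2)$), then verify the $\eps_L$ and $\eps_R$ bounds directly. The paper packages this same calculation into the intermediate \cref{thm: instantiate two sided lossless expanders} (introducing an auxiliary parameter $h$ with $q\approx h^{1+\alpha}$ and $s+2=\lceil k_L/\log_q h\rceil$, then plugging in $\alpha=0.01$, $k_L=\delta n$), but the arithmetic and the parameter regime are the same as yours.
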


These will follow from the following technical lemma:

\begin{lemma}\label{thm: instantiate two sided lossless expanders}
Let $\alpha, \eps_L, \eps_R \in (0, 1)$ and $K_{R}, n, k_L, q\in \N$ be such that $q$ is a prime number, $\frac{h^{1+\alpha}}{2} \le q\le h^{1+\alpha}$ where $h = (4nk_L/\eps_L)^{1/\alpha}$ and such that both $\frac{4}{k_L}\log(2n/\eps_L) \le \alpha$ and $k_L(1+\alpha)\le n$.
Then, there exists an explicit biregular $(K_{L}, \eps_L, K_{R}, \eps_{R})$ two-sided lossless expander $\GammaL: [N]\times [D_L]\to [M]$ where $N = q^n, K_{L} = q^{k_L}, K_{L}^{1+\alpha - 1 / \log(h)} \le  M \le D_L\cdot K_{L}^{1+\alpha}$, $D_L \le O(\log (N) \log(K_{L}) / \eps_L)^{1 + 1/\alpha + o(1)}$, $\frac{K_{R}}{q^{\min(s+2, n-s)}}\cdot \frac{q-1}{2} \le \eps_R$ where $s+2 = \lceil k_L / \log_q(h) \rceil$.
\end{lemma}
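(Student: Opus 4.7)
The plan is to apply \cref{MainThm} to the KT graph with the given $q$ and $n$, choosing $s$ so that $s+2=\lceil k_L/\log_q(h)\rceil$. This yields $N=q^n$, $M=q^{s+2}$, $D_L=q$, and $D_R=q^{n-(s+1)}$, matching the prescribed form $\GammaL:[N]\times[D_L]\to[M]$. The right-expansion claim is then immediate: substituting the hypothesis $\frac{K_R}{q^{\min(s+2,n-s)}}\cdot\frac{q-1}{2}\le\eps_R$ directly into the expression for $A_R$ from \cref{MainThm} gives $A_R\ge(1-\eps_R)D_R$ with no further work.

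Before invoking \cref{MainThm} I would verify the preconditions $15\le s+1<n<\operatorname{char}(\F_q)=q$. Since $h^{1+\alpha}/2\le q\le h^{1+\alpha}$, the choice of $s+2$ satisfies $s+2\le k_L(1+\alpha)+1$, so the hypothesis $k_L(1+\alpha)\le n$ gives $s+1<n$ after tracking the ceiling carefully, and $s+2\ge k_L(1+\alpha-1/\log h)$ together with the lower bound on $k_L$ implicit in $\frac{4}{k_L}\log(2n/\eps_L)\le\alpha$ ensures $s+1\ge 15$. The inequality $n<q$ is immediate from $h\ge 4nk_L/\eps_L$ and $q\ge h^{1+\alpha}/2$.

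The main technical step, and the central obstacle, is the left-expansion bound $A_L\ge(1-\eps_L)q$, equivalently $\frac{n(s+2)}{2}(qK_L)^{1/(s+2)}\le\eps_L\cdot q$. Because $qK_L=q^{k_L+1}$ and $s+2\ge k_L\log q/\log h$, we get $(qK_L)^{1/(s+2)}\le q^{(k_L+1)/(s+2)}\le h^{(k_L+1)/k_L}\le h\cdot h^{1/k_L}$. Using $q\ge h^{1+\alpha}/2$, the target inequality reduces to $n(s+2)\cdot h^{1/k_L}\le\eps_L\cdot h^{\alpha}$; taking logarithms and substituting $\log h=(1/\alpha)\log(4nk_L/\eps_L)$, the hypothesis $\frac{4}{k_L}\log(2n/\eps_L)\le\alpha$ is exactly what is required to absorb both the $n(s+2)$ factor and the $h^{1/k_L}$ slack. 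All three hypotheses on $\alpha$, $n$, and $k_L$ are calibrated precisely so that this single inequality holds.

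Finally I would verify the parameter ranges. The upper bound $M\le D_L\cdot K_L^{1+\alpha}$ follows from $q^{s+2}\le q\cdot q^{k_L(1+\alpha)}$ using $s+2\le k_L(1+\alpha)+1$, and the lower bound $M\ge K_L^{1+\alpha-1/\log h}$ follows from $s+2\ge k_L(1+\alpha-1/\log h)$. For the degree bound, $D_L=q\le h^{1+\alpha}=(4nk_L/\eps_L)^{1+1/\alpha}$, and since $n=\log_q N$ and $k_L=\log_q K_L$ both satisfy $n\le\log N$ and $k_L\le\log K_L$ (base $2$, say), this converts to the claimed estimate $D_L\le O(\log N\cdot\log K_L/\eps_L)^{1+1/\alpha+o(1)}$. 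Explicitness is inherited directly from \cref{MainThm}.
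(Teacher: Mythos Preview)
Your proposal is correct and follows essentially the same route as the paper. You invoke \cref{MainThm} with $s+2=\lceil k_L/\log_q h\rceil$, read off the right expansion directly from the hypothesis on $K_R$, and verify the parameter ranges on $M$ and $D_L$ by the same inequalities. For the left expansion you bound $(qK_L)^{1/(s+2)}\le h\cdot h^{1/k_L}$ and reduce to $n(s+2)\,h^{1/k_L}\le\eps_L h^{\alpha}$, whereas the paper substitutes $K_L\le h^{s+2}$ first to get $(qh^{s+2})^{1/(s+2)}=h\cdot q^{1/(s+2)}$ and then shows $q^{1/(s+2)}\le 2$ via $s+2\ge\log q$; since $q^{1/(s+2)}$ and $h^{1/k_L}$ are essentially the same quantity (both controlled by the hypothesis $\frac{4}{k_L}\log(2n/\eps_L)\le\alpha$ showing $k_L\gtrsim\log h$), this is only a cosmetic difference in how the slack term is packaged.
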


We will instantiate this lemma using simple parameters to obtain our main theorems:

\begin{proof}[Proof of \cref{thm: main simplified theorem}]
We plug in $\alpha = 0.01, \eps_L = 0.01, \eps_R = 0.01, k_L = \delta n$ in \cref{thm: instantiate two sided lossless expanders} to obtain the desired lossless expander.
\end{proof}

We finally prove our main technical lemma using two-sided expander from \cref{MainThm}:

\begin{proof}[Proof of \cref{thm: instantiate two sided lossless expanders}]
As $s + 2 = \lceil k_L / \log_q(h) \rceil$, we have that $h^{s+1}\le K_{L}\le h^{s+2}$.
Observe that 
\[
s+1 < \frac{k_L\log(q)}{\log(h)} \le k_L(1+\alpha) \le n
\]
So, we can apply \cref{MainThm} and infer that there exists a graph $\GammaL: \F_q^n\times \F_q\to \F_q^{s+2}$ that is a $(\le h^{s+2}, A_L)$ left expander and $(\le K_R, A_R)$ right expander where $A_L = q - \frac{n(s+2)}{2}\cdot (qh^{s+2})^{1/(s+2)}$ and $A_R = \left(1 - \frac{K_{R}}{q^{\min(s+2, n-s)}}\cdot \frac{q-1}{2}\right)q^{n-(s+1)}$.
Notice that as $K_{L}\le h^{s+2}$, $\GammaL$ is indeed a $(K_{L}, A_L)$ expander.

\begin{itemize}
\item 
We first bound the left degree $D_L$:
\begin{align*}
D_L 
= q 
& \le h^{1+\alpha} 
= (4nk_L / \eps_L)^{1 + 1/\alpha} 
= (4\log(N)\log(K_{L}) / \log^2(q)\eps_L)^{1 + 1/\alpha}\\
& = (4\log(N)\log(K_{L}) / \eps_L)^{1 + 1/\alpha}\cdot \log^{2+2/\alpha}(q)
\end{align*}
This implies that 
\[
D_L = q\le (4\log(N)\log(K_{L}) / \eps_L)^{1 + 1/\alpha}\left(\log\left(8\log(N)\log(K_{L}) / \eps_L\right)\right)^{2+2/\alpha}
\]
Then indeed, $D_L \le O(\log (N) \log(K_{L}) / \eps_L)^{1 + 1/\alpha + o(1)}$.

\item
We now bound the number of right vertices $M$:
\begin{align*}
M 
= q^{s+2}
\le q\cdot h^{(1+\alpha)(s+1)}
\le q\cdot K_{L}^{1+\alpha}
\end{align*}
Additionally,  
\begin{align*}
M 
= q^{s+2}
\ge q^{K_L\log(q) / \log(h)}
\ge q^{K_L((1+\alpha)(\log h) - 1) / \log(h)}
= q^{K_L(1+\alpha) - K_L / \log(h)}
\end{align*}

\item
We now show lossless expansion from the right side:
\[
A_R = \left(1 - \frac{K_{R}}{q^{s+2}}\cdot \frac{q-1}{2}\right)q^{n-(s+1)} \ge (1 - \eps_R)D_R
\]
where the last inequality follows because $\frac{K_{R}}{q^{\min(s+2, n-s)}}\cdot \frac{q-1}{2} \le \eps_R$.

\item
We finally show lossless expansion from the left side:
First, we note that $s+2\le 2k_L$. Indeed, $s+2 \le k_L\log_q(h) + 1 = k_L\frac{\log(h)}{\log(q)} + 1 \le k_L(1+\alpha) + 1 \le 2k_L$.
Then,
\begin{align*}
A_L
& = q - \frac{n(s+2)}{2}\cdot (qh^{s+2})^{1/(s+2)}\\
& =  q - \frac{n(s+2)h}{2}\cdot (q)^{1/(s+2)}\\
& \ge  q - nk_Lh\cdot (q)^{1/(s+2)} & \textrm{(since $s+2 \le 2k_L$)}\\
& =  q - \frac{\eps_L\cdot h^{\alpha}}{4}\cdot h\cdot (q)^{1/(s+2)} & \textrm{(since $nk_L = (\eps_L\cdot h^{\alpha}) / 4$)}\\
& =  q - \eps_L\cdot\frac{h^{1+\alpha}}{4}\cdot (q)^{1/(s+2)}\\
& \ge  q - \frac{\eps_L}{2}\cdot q \cdot (q)^{1/(s+2)} & \textrm{(since $h^{1+\alpha}/2 \le q$)}\\
& =  q\left(1 - \eps_L\cdot \frac{(q)^{1/(s+2)}}{2}\right)\\
& \ge  q(1 - \eps_L)\\
\end{align*}
The last inequality $(q)^{1/(s+2)} \le 2$ follows because we claim that $s+2 \ge \log(q)$. This suffices to prove the last inequality since then $(q)^{1/(s+2)} \le q^{1 / \log(q)} \le 2$.
We indeed compute that
\[ 
s+2 
\ge \frac{k_L}{\log_q(h)} 
\ge \frac{k_L((1+\alpha)\log(h)-1))}{\log(h)} 
\ge k_L 
\]
Moreover, as $\alpha \ge \frac{4}{k_L}\log(2n/\eps_L)$, we infer that 
$k_L \ge \frac{4}{\alpha}\cdot \log(2n/\eps_L)$.
Hence indeed,
\[
s+2
\ge \frac{4}{\alpha}\cdot \log(2n/\eps_L)
\ge \frac{2}{\alpha}\cdot \log(2nk_L/\eps_L)
\ge 2\log(h)
\ge (1+\alpha)\log(h)
\ge \log(q)
\]

\end{itemize}
\end{proof}

\section{Tightness of Our Construction}\label{sec:tightness}
In this section we show that the right-to-left expansion of \cref{thm: right sided lossless expansion} is tight. In particular, when $n< 2s+2$ \cref{thm: right sided lossless expansion} gives a trade-off between the expansion parameter $\varepsilon$ and the max size of expanding sets $K_{max}$. We show that this trade-off is tight up to constants, and thus fully characterize the behavior of the KT graph. Importantly, we show that in the balanced setting where $n=s+O(1)$, the KT graph is \emph{not} a two-sided lossless expander.

Recall that when $n\geq 2s+2$ we know that our result is tight as stated in \cref{rmk:tightness when n>2s+2}. Consequently, our main theorem in this section deals with the regime where $s+1 < n < 2s+2$. In this setting, \cref{thm: right sided lossless expansion} gives us that sets $S\subseteq R$ on the right of size at most $K_{max}=\delta q^{s+1}$ expand with parameter $\varepsilon=\frac{\delta(q-1)}{2q}\cdot q^{2s+2-n}$. Equivalently, it gives us that sets of size at most $K_{max} = \delta q^{n-s-1}$ expand with parameter $\varepsilon=\frac{\delta(q-1)}{2q}$. Our main theorem in this section upper bounds this expansion.

\begin{theorem}\label{thm:epsilon and K tradeoff tight when n<2s+2}
    When $s+1< n< 2s+2$, there exists a subset $S\subseteq R$ of the right vertices such that $\abs{S}= K_{max}=\delta q^{n-s-1}$ and $\abs{\GammaR(S)}=(1-\varepsilon) D_R\abs{S}$ with $\varepsilon=\frac{\delta}{4}$ where $\delta>0$. 
\end{theorem}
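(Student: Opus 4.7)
The plan is to explicitly construct a witness set $T \subseteq R$ realizing the claimed trade-off. Fix two distinct seeds $y_1, y_2 \in \F_q$ and write $T = T_1 \sqcup T_2$ where $T_i = \{y_i\} \times S_i$ for sets $S_i \subseteq \F_q^{s+1}$ with $|S_1| = |S_2| = K/2 = (\delta/2)\, q^{n-s-1}$. By \cref{obs: same seed disjoint neighborhood}, vertices sharing a seed have disjoint left neighborhoods, so $|\GammaR(T_i)| = (K/2)\, D_R$. The only possible double counting comes from cross pairs $(t_1, t_2) \in T_1 \times T_2$, and by \cref{thm:two vertex overlap} each such pair contributes at most one common left neighbor (since $n < 2s+2$). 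If the $S_i$ are built so that every cross pair contributes \emph{exactly} one common neighbor, a single round of inclusion--exclusion yields
\[
|\GammaR(T)| \,=\, K\,D_R - (K/2)^2 \,=\, (1 - \delta/4)\,D_R\,|T|,
\]
matching $\varepsilon = \delta/4$.

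All remaining work is to build the $S_i$. A cross pair $((y_1, s_1), (y_2, s_2))$ has a shared left neighbor iff $(s_1, s_2) \in \psi_{y_1, y_2}(P_{<n})$. Since $\psi_{y_i}(f)$ depends on $f$ only through the residue $f \bmod g_i$ for $g_i(x) := (x - y_i)^{s+1}$, it suffices to produce subsets $R_1 \subseteq \F_q[x]/g_1$ and $R_2 \subseteq \F_q[x]/g_2$, each of size $K/2$, with $R_1 \times R_2 \subseteq \varphi(P_{<n})$ for $\varphi(f) := (f \bmod g_1, f \bmod g_2)$; pushing the $R_i$ through the coordinate-wise evaluation maps then yields the desired $S_i$.

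The main obstacle, and the real technical content, is pinning down $\varphi(P_{<n})$ in the regime $s+1 < n < 2s+2$, where $\varphi|_{P_{<n}}$ is injective (by the CRT calculation of \cref{lem: hermite general}) but no longer surjective. I plan to establish the structural identity
\[
\varphi(P_{<n}) \,=\, \bigcup_{h \in P_{<n-(s+1)}} \{(f,\, f + \sigma(h)) : f \in P_{\le s}\}
\]
for a particular injective $\F_q$-linear map $\sigma : P_{<n-(s+1)} \to P_{\le s}$. To construct $\sigma$, I first build an auxiliary $\rho : P_{<n-(s+1)} \to P_{<n-(s+1)}$ as follows: lift $h_1$ to any $f = h_1 g_1 + r_1 \in P_{<n}$ and let $\rho(h_1)$ be the quotient of $f$ upon division by $g_2$. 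Any two lifts of $h_1$ differ by an element of $P_{\le s}$, which cannot alter the degree-$\ge s+1$ coefficients of the quotient, so $\rho$ is well defined; comparing coefficients of $h_1 g_1 - \rho(h_1)\, g_2$ in degrees $s+1, \ldots, n-1$ yields a triangular linear system whose diagonal entries equal $1$ (both $g_1, g_2$ are monic of degree $s+1$), which simultaneously uniquely determines $\rho(h_1)$ from $h_1$ and shows $\rho$ is an $\F_q$-linear isomorphism. Define $\sigma(h_1) := h_1 g_1 - \rho(h_1)\, g_2$; this lies in $P_{\le s}$ by construction of $\rho$ and is $\F_q$-linear. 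A counting argument matching $|\varphi(P_{<n})| = q^n$ against the cardinality of the right-hand side forces $\sigma$ to be injective and confirms the displayed identity.

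With the structural lemma in hand, the construction is immediate: take $R_1, R_2$ to be any two size-$(K/2)$ subsets of $\sigma(P_{<n-(s+1)}) \subseteq P_{\le s}$, which is possible because injectivity gives $|\sigma(P_{<n-(s+1)})| = q^{n-s-1} \ge K/2$. For any $r_1 = \sigma(h_1) \in R_1$ and $r_2 = \sigma(h_2) \in R_2$, the rewriting $(r_1, r_2) = (\sigma(h_1),\, \sigma(h_1) + \sigma(h_2 - h_1))$ together with $h_2 - h_1 \in P_{<n-(s+1)}$ places $(r_1, r_2)$ in $\varphi(P_{<n})$ via the structural identity. This finishes the construction of $T$ and hence of the tight example, with the only delicate step being the well-definedness and invertibility of $\rho$; everything before and after it is bookkeeping.
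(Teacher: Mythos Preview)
Your proposal is correct and follows essentially the same approach as the paper: the same two-bucket construction $T=T_1\sqcup T_2$, the same reduction from $\psi_{y_1,y_2}$ to the CRT map $\varphi$, the same structural identity $\varphi(P_{<n})=\bigcup_{h}\ell_{\sigma(h)}$ established via the auxiliary isomorphism $\rho$ (with the identical triangular/monic argument for well-definedness), and the same choice $R_1,R_2\subseteq\sigma(P_{<n-(s+1)})$ verified by the rewriting $(\sigma(h_1),\sigma(h_1)+\sigma(h_2-h_1))$. The only point you leave implicit is that distinct cross pairs yield distinct common neighbors, but this is immediate from the injectivity of $\psi_{y_1,y_2}$ on $P_{<n}$ that you already invoke.
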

This means that our right-to-left expansion of $\cref{thm: right sided lossless expansion}$ is tight up to a constant factor of $1/2$.
The proof of our main theorem comes from the construction of two disjoint subsets of right vertices each in different buckets that have the maximum number of overlapping neighbors on the left. Recall that the $y$-th bucket $T_y$ is defined as $T_y=\{(y,\alpha):\alpha\in\F_q^{s+1}\}$

\begin{lemma}\label{lem:exist bad S1 and S2}
    Let $y_1,y_2\in\F_q$ be arbitrary such that $y_1\neq y_2$. Then there exist sets $S_1\subseteq T_{y_1}$ and $S_2\subseteq T_{y_2}$ such that $\abs{S_1}=\abs{S_2}= \frac{K_{max}}{2}$ and $\abs{\GammaR(S_1)\cap\GammaR(S_2)}=\abs{S_1}\cdot\abs{S_2}$.
\end{lemma}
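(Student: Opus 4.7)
The plan is to reduce the geometric problem to an algebraic one. A pair of right vertices $(y_1, z_1) \in T_{y_1}$ and $(y_2, z_2) \in T_{y_2}$ shares a common left neighbor iff $(z_1, z_2) \in \psi_{y_1, y_2}(P_{<n})$, and in the regime $s+1 < n < 2s+2$ such a common neighbor, when it exists, is unique by \cref{thm:two vertex overlap}. So it suffices to produce subsets $Z_1, Z_2 \subseteq \F_q^{s+1}$ of size $K_{max}/2$ with $Z_1 \times Z_2 \subseteq \psi_{y_1, y_2}(P_{<n})$ and set $S_1 = \{y_1\} \times Z_1$, $S_2 = \{y_2\} \times Z_2$; the injectivity of $\psi_{y_1,y_2}$ on $P_{<n}$ then guarantees distinct pairs contribute distinct common neighbors, so $\abs{\GammaR(S_1) \cap \GammaR(S_2)} = \abs{S_1}\cdot\abs{S_2}$ as desired.

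I would use Chinese remaindering to translate this condition. Let $g_i = (x-y_i)^{s+1}$ and define $\varphi(p) = (p \bmod g_1,\ p \bmod g_2) \in P_{\le s} \times P_{\le s}$. Since $g_i$ vanishes to order $s+1$ at $y_i$, the reduction $p \bmod g_i$ has the same Hermite data at $y_i$ as $p$, so $\psi_{y_i}$ factors through reduction modulo $g_i$. The problem becomes finding $R_1, R_2 \subseteq P_{\le s}$ of size $K_{max}/2$ with $R_1 \times R_2 \subseteq \varphi(P_{<n})$, from which we recover $Z_i = \psi_{y_i}(R_i)$ with $\abs{Z_i} = \abs{R_i}$ (since $\psi_{y_i}$ restricted to $P_{\le s}$ is a bijection by nonsingular Taylor expansion in characteristic exceeding $n$).

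To parametrize $\varphi(P_{<n})$, I would introduce the linear map $\rho: P_{<n-(s+1)} \to P_{<n-(s+1)}$ determined by requiring $h g_1 - \rho(h) g_2 \in P_{\le s}$. Matching the coefficients of degrees $s+1, \ldots, n-1$ on the two sides yields a triangular linear system (since both $g_1$ and $g_2$ are monic of degree $s+1$), so $\rho$ is well-defined and an isomorphism; setting $\sigma(h) := h g_1 - \rho(h) g_2 \in P_{\le s}$ gives an $\F_q$-linear map which is injective because $\sigma(h)=0$ implies $g_2 \mid h g_1$, hence $g_2 \mid h$ (as $\gcd(g_1, g_2) = 1$), which in our regime $n - (s+1) \le s+1$ forces $h = 0$. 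Writing any $p \in P_{<n}$ as $p = h g_1 + f$ with $f = p \bmod g_1 \in P_{\le s}$ and $h \in P_{<n-(s+1)}$ and reducing modulo $g_2$ yields $p \bmod g_2 = f + \sigma(h)$, which gives the structure lemma $\varphi(P_{<n}) = \{(f, f + \sigma(h)) : f \in P_{\le s},\ h \in P_{<n-(s+1)}\}$ (the reverse inclusion is immediate from the same identity).

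To finish, I pick $R_1$ and $R_2$ as arbitrary subsets of $\sigma(P_{<n-(s+1)})$ of size $K_{max}/2 = (\delta/2) q^{n-s-1}$; this is feasible when $\delta \le 2$ since $\sigma$ is injective and $\abs{P_{<n-(s+1)}} = q^{n-s-1}$. For any $r_1 = \sigma(h_1)$ and $r_2 = \sigma(h_2)$, $\F_q$-linearity of $\sigma$ gives $(r_1, r_2) = (\sigma(h_1), \sigma(h_1) + \sigma(h_2-h_1))$, which lies in $\varphi(P_{<n})$ by the structure lemma with $f = \sigma(h_1)$ and $h = h_2 - h_1 \in P_{<n-(s+1)}$. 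The main technical obstacle is verifying that $\rho$ is a well-defined isomorphism and deducing the injectivity of $\sigma$; once those are in place, the structure lemma and the construction of $R_1, R_2$ are routine linear algebra.
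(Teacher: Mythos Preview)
Your proposal is correct and follows essentially the same route as the paper: reduce to finding $R_1,R_2\subseteq P_{\le s}$ with $R_1\times R_2\subseteq\varphi(P_{<n})$ via the Chinese-remainder map, establish the structure lemma $\varphi(P_{<n})=\{(f,f+\sigma(h)):f\in P_{\le s},\ h\in P_{<n-(s+1)}\}$ using the auxiliary isomorphism $\rho$, and then take $R_1,R_2\subseteq\sigma(P_{<n-(s+1)})$. The one noteworthy difference is your proof that $\sigma$ is injective: you argue directly that $\sigma(h)=0$ forces $g_2\mid hg_1$, hence $g_2\mid h$, hence $h=0$ by degree, whereas the paper deduces injectivity by a counting argument comparing $\abs{\varphi(P_{<n})}=q^n$ against $\abs{\sigma(P_{<n-(s+1)})}\cdot q^{s+1}$; your argument is slightly more direct.
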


Using this lemma, our main theorem is a result of straightforward computations.
\begin{proof}[Proof of \cref{thm:epsilon and K tradeoff tight when n<2s+2}]
    Take any $y_1,y_2\in\F_q$ such that $y_1\neq y_2$ and the $S_1$ and $S_2$ from \cref{lem:exist bad S1 and S2}. Let $S=S_1\cup S_2$, so indeed $\abs{S}=K_{max}$. To count $\abs{\GammaR(S)}$, all we need to do is to subtract the number of 2-wise overlaps of left neighbors $S_1$ and $S_2$ from the total number of possible neighbors of $S_1$ and $S_2$. The latter value is simply $\abs{S}\cdot D_R=K_{max}\cdot D_R$, and the former is given to us by \cref{lem:exist bad S1 and S2} as $\abs{S_1}\cdot\abs{S_2}=\frac{K_{max}^2}{4}$. Therefore, we can compute
    \begin{align*}
        \abs{\GammaR(S)}&=K_{max}\cdot D_R-\frac{K_{max}^2}{4}=K_{max}\cdot D_R\left(1-\frac{K_{max}}{4 D_R}\right),
    \end{align*}
    showing that
    \begin{align*}
        \varepsilon&=\frac{K_{max}}{4D_R}
        =\frac{\delta q^{n-(s+1)}}{4 q^{n-(s+1)}}
        =\frac{\delta}{4}.
    \end{align*}
\end{proof}

The proof of \cref{lem:exist bad S1 and S2} relies on choosing $S_1$ and $S_2$ to contain only the derivatives of polynomials of degree at most $n-(s+1)$.
In the following proofs, we will consider vector spaces of polynomials over $\F_q$.

\begin{definition}
    Define the set $P_d=\{f\in\F_q[x]\mid \deg(f)=d\}$ and let $P_{< d}=\{f\in\F_q[x]\mid \deg(f)< d\}$ be a vector space over $\F_q$ with addition and multiplication coming from $\F_q[x]$. 
\end{definition}

\subsection{Constructing sets with the smallest expansion possible}
We prove \cref{lem:exist bad S1 and S2} by restricting the KT graph to the subgraph containing the two buckets $T_{y_1}$ and $T_{y_2}$ and analyzing the behavior of this subgraph. Consequently, we rephrase \cref{lem:exist bad S1 and S2} as follows.

\begin{lemma}[Technical version of \cref{lem:exist bad S1 and S2}]\label{cor:lem:S1 and S2 construction for psi}
    For any $K\in[q^{n-(s+1)}]$ and distinct $y_1,y_2\in\F_q$, there exist sets $S_1,S_2\subseteq\F_q^{s+1}$ such that $\abs{S_1}+\abs{S_2}=K$ and $S_1\times S_2\in\psi_{y_1,y_2}(P_{<n})$.
\end{lemma}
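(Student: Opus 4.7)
The plan is to reduce the problem, via a change of coordinates, from constructing subsets of $\F_q^{s+1}$ directly to constructing subsets of the quotient rings $\F_q[x]/g_1$ and $\F_q[x]/g_2$, where $g_i(x) = (x - y_i)^{s+1}$. Concretely, applying \cref{lem: hermite general} with $k = 1$ on the $(s+1)$-dimensional space $\F_q[x]/g_i$ shows that the derivative-evaluation map $\operatorname{ev}_i: \F_q[x]/g_i \to \F_q^{s+1}$ sending $r \mapsto (r^{(0)}(y_i), \ldots, r^{(s)}(y_i))$ is an $\F_q$-linear bijection. Since $\seed_{y_i}(f)$ depends only on $f \bmod g_i$, we get
\[
\seed_{y_1, y_2}(f) \;=\; \bigl(\operatorname{ev}_1(f \bmod g_1),\; \operatorname{ev}_2(f \bmod g_2)\bigr).
\]
It therefore suffices to produce $R_1 \subseteq \F_q[x]/g_1$ and $R_2 \subseteq \F_q[x]/g_2$ with $|R_1| + |R_2| = K$ and $R_1 \times R_2 \subseteq \varphi(P_{<n})$, where $\varphi(f) = (f \bmod g_1, f \bmod g_2)$; the desired sets are then $S_i := \operatorname{ev}_i(R_i)$.

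With the reduction in place, I would invoke \cref{lem:Structure of image of phi}, which for $s+1 < n < 2s+2$ yields the explicit description
\[
\varphi(P_{<n}) \;=\; \bigcup_{h \in P_{<n-(s+1)}} \bigl\{(f,\; f + \sigma(h)) \,:\, f \in P_{\le s}\bigr\},
\]
for a specific injective $\F_q$-linear homomorphism $\sigma: P_{<n-(s+1)} \to P_{\le s}$. Because $\sigma$ is injective and $n - (s+1) \ge 1$, the image $\sigma(P_{<n-(s+1)}) \subseteq P_{\le s}$ has cardinality exactly $q^{n-(s+1)} \ge K$. I will therefore pick $R_1$ and $R_2$ to be arbitrary subsets of $\sigma(P_{<n-(s+1)})$ with $|R_1| = \lceil K/2 \rceil$ and $|R_2| = \lfloor K/2 \rfloor$ (viewing $P_{\le s}$ as embedded into each $\F_q[x]/g_i$ in the natural way, since elements of $\F_q[x]/g_i$ have unique representatives of degree $\le s$).

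To verify the product condition, take arbitrary $r_1 = \sigma(h_1) \in R_1$ and $r_2 = \sigma(h_2) \in R_2$ with $h_1, h_2 \in P_{<n-(s+1)}$. Since $\sigma$ is a homomorphism,
\[
r_2 \;=\; \sigma(h_2) \;=\; \sigma(h_1) + \sigma(h_2 - h_1) \;=\; r_1 + \sigma(h_2 - h_1),
\]
and $h_2 - h_1 \in P_{<n-(s+1)}$. Applying the structural description above with $f = r_1 \in P_{\le s}$ and $h = h_2 - h_1$ places $(r_1, r_2)$ inside $\varphi(P_{<n})$, completing the product-containment check.

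The main obstacle is therefore not the present lemma, which is essentially a packaging exercise once the algebraic framework is in place, but rather the earlier technical input \cref{lem:Structure of image of phi} together with the well-definedness and injectivity of $\sigma$. Those in turn require defining the auxiliary map $\rho$ on quotients by $g_1$ and $g_2$, showing that $\rho$ is a genuine isomorphism by comparing the high-degree coefficients of $h_1 g_1 - h_2 g_2$ (which must vanish above degree $s$) and doing a short linear-algebra argument, and finally deducing from this that $\sigma(h_1) := h_1 g_1 - \rho(h_1) g_2$ is an injective $\F_q$-linear map. Once these technical ingredients are granted, the construction outlined above delivers the lemma directly.
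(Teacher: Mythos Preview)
Your proposal is correct and follows essentially the same approach as the paper: reduce from $\psi_{y_1,y_2}$ to the CRT map $\varphi$ via the bijections $\operatorname{ev}_i$, then choose $R_1,R_2$ inside $\sigma(P_{<n-(s+1)})$ and verify the product condition using the structural description of $\varphi(P_{<n})$ from \cref{lem:Structure of image of phi} together with the homomorphism property of $\sigma$. The only cosmetic difference is that the paper splits this into two lemmas (first building $R_1,R_2$ for $\varphi$, then transporting to $S_1,S_2$ for $\psi$), whereas you do both steps in one pass.
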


To create these $S_1$ and $S_2$, we actually first create analogous sets in the image of the Chinese Remainder Theorem map, defined below.

\begin{definition}
    For distinct $y_1,y_2\in\F_q$ let $g_1(x)=(x-y_1)^{s+1}$ and $g_2(x)=(x-y_2)^{s+1}$, let $\pi_1:\F_q[x]\to\F_q[x]/(g_1)$ and $\pi_2:\F_q[x]\to\F_q[x]/(g_2)$ be the associated quotient maps. Then let $\varphi:\F_q[x]/(g_1g_2)\to\F_q[x]/(g_1)\times\F_q[x]/(g_2)$ be defined as $\varphi=\pi_1\otimes\pi_2$. 
\end{definition}
This is exactly the map that the Chinese Remainder Theorem acts on.
\begin{claim}
    The Chinese Remainder Theorem says that $\varphi$ is an isomorphism of rings.\footnote{For an introduction to the Chinese Remainder Theorem, see Chapter 7.6 of \cite{dummit_abstract_2003}.}
\end{claim}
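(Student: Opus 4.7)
The plan is to give a direct proof by verifying coprimality of $g_1$ and $g_2$, and then checking the three standard properties that make $\varphi$ a ring isomorphism: that it is a well-defined ring homomorphism, that it is injective, and that it is surjective (or, equivalently, a dimension count).

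First I would observe that since $y_1 \neq y_2$ in $\F_q$, the polynomials $g_1(x) = (x-y_1)^{s+1}$ and $g_2(x) = (x-y_2)^{s+1}$ are coprime in $\F_q[x]$: their only possible common irreducible factors would be $(x-y_1)$ or $(x-y_2)$, and neither divides both. Because $\F_q[x]$ is a Euclidean domain, coprimality gives us a Bezout identity $a(x)g_1(x) + b(x)g_2(x) = 1$ for some $a, b \in \F_q[x]$, which is the tool that drives surjectivity.

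Next I would verify that $\varphi$ is a well-defined ring homomorphism. The quotient maps $\pi_1$ and $\pi_2$ each vanish on $(g_1 g_2) \subseteq (g_1)\cap(g_2)$, so $\varphi$ descends to $\F_q[x]/(g_1 g_2)$; being a product of the ring homomorphisms $\pi_1$ and $\pi_2$, it is itself a ring homomorphism. For injectivity, suppose $\varphi(f) = (0,0)$: then $g_1 \mid f$ and $g_2 \mid f$ in $\F_q[x]$, and by coprimality $g_1 g_2 \mid f$, so $f = 0$ in $\F_q[x]/(g_1 g_2)$. For surjectivity, given any $(r_1, r_2) \in \F_q[x]/(g_1) \times \F_q[x]/(g_2)$, I set
\[
f \;=\; r_1 \cdot b(x) g_2(x) \;+\; r_2 \cdot a(x) g_1(x),
\]
and compute $f \equiv r_1 \cdot (1 - a g_1) \equiv r_1 \pmod{g_1}$ and similarly $f \equiv r_2 \pmod{g_2}$, so $\varphi(f) = (r_1, r_2)$.

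There is no real obstacle here, as this is the classical Chinese Remainder Theorem applied to the two coprime ideals $(g_1), (g_2) \trianglelefteq \F_q[x]$; the only content specific to our setting is the coprimality check, which rests entirely on $y_1 \neq y_2$. If one prefers a tidier finish, one can skip the explicit Bezout construction and instead observe that both $\F_q[x]/(g_1 g_2)$ and $\F_q[x]/(g_1) \times \F_q[x]/(g_2)$ are $\F_q$-vector spaces of dimension $2(s+1)$, so the injective linear map $\varphi$ is automatically surjective.
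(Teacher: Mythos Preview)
Your proof is correct: coprimality of $g_1$ and $g_2$ follows from $y_1\neq y_2$, and the rest is the standard CRT argument (homomorphism, injectivity from $\gcd(g_1,g_2)=1$, surjectivity either via B\'ezout or a dimension count). The paper, however, does not give a proof at all---it simply invokes the Chinese Remainder Theorem as a black box with a citation to Dummit and Foote---so there is no argument to compare against; you have supplied the details the paper omits.
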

\begin{corollary}
    As a consequence, we may also think of $\varphi$ as an isomorphism of $\F_q$-vector spaces $\varphi:P_{<2s+2}\to P_{<s+1}\times P_{<s+1}$.
\end{corollary}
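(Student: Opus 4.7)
The plan is to deduce this corollary directly from the Chinese Remainder Theorem isomorphism by observing that each quotient ring $\F_q[x]/(g)$ admits a canonical $\F_q$-linear identification with polynomials of degree strictly less than $\deg(g)$. The main step is to make this identification explicit and then simply transfer the ring isomorphism $\varphi$ along it.

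First, I would record the standard fact from the division algorithm: for any nonzero $g \in \F_q[x]$ of degree $d$, every coset of $(g)$ in $\F_q[x]$ has a unique representative in $P_{<d}$, obtained as the remainder after dividing by $g$. The resulting map $P_{<d} \to \F_q[x]/(g)$ (inclusion followed by the quotient) is $\F_q$-linear (since polynomial division commutes with $\F_q$-scalar multiplication and with addition modulo degree cancellation) and is a bijection, hence an $\F_q$-vector space isomorphism. I would apply this observation in three places: to $g_1 g_2$ (which has degree $2s+2$), to $g_1$, and to $g_2$ (both of degree $s+1$), obtaining isomorphisms $P_{<2s+2} \cong \F_q[x]/(g_1 g_2)$ and $P_{<s+1} \cong \F_q[x]/(g_i)$ for $i \in \{1,2\}$.

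Next, I would compose. The Chinese Remainder Theorem gives that $\varphi : \F_q[x]/(g_1 g_2) \to \F_q[x]/(g_1) \times \F_q[x]/(g_2)$ is a ring isomorphism; in particular, as $\F_q$ sits inside all three rings compatibly with $\varphi$, it is an isomorphism of $\F_q$-algebras, hence of $\F_q$-vector spaces. Composing with the identifications from the previous step yields a linear bijection
\[
P_{<2s+2} \;\xrightarrow{\;\sim\;}\; \F_q[x]/(g_1 g_2) \;\xrightarrow{\;\varphi\;}\; \F_q[x]/(g_1) \times \F_q[x]/(g_2) \;\xrightarrow{\;\sim\;}\; P_{<s+1} \times P_{<s+1},
\]
which is the desired $\F_q$-linear isomorphism. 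Concretely, $f \in P_{<2s+2}$ is sent to the pair of remainders $(f \bmod g_1,\, f \bmod g_2)$, viewed canonically as elements of $P_{<s+1} \times P_{<s+1}$ via the degree-bounded representatives — and one may sanity-check dimensions: both sides have $\F_q$-dimension $2s+2$.

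There is no real obstacle here; the only subtlety to be careful about is distinguishing the formal quotient-ring statement of CRT from its vector-space incarnation, and verifying that the "take the remainder" identification is genuinely $\F_q$-linear (which follows immediately from uniqueness of the remainder plus linearity of the division map in its dividend). Once the identification is in place, the corollary is automatic.
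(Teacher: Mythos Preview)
Your proposal is correct and matches the paper's approach: the paper states this corollary without proof, treating it as an immediate consequence of the CRT ring isomorphism together with the standard identification of $\F_q[x]/(g)$ with $P_{<\deg g}$ via unique remainders. Your write-up simply makes explicit the (routine) details the paper leaves implicit.
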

Whereas $\psi_{y_1,y_2}(f)$ tells us about the first $s$ derivatives of $f$ at $y_1$ and $y_2$, $\varphi(f)$ tells us about $f$ quotiented by $g_1$ and $g_2$. In fact, as we will see, $\varphi$ and $\psi_{y_1,y_2}$ provide us with the same information about a particular polynomial. With this in mind, we prove a lemma similar to \cref{cor:lem:S1 and S2 construction for psi} but for $\varphi$.

\begin{lemma}\label{lem:S1 and S2 construction for phi}
    For any $K\in[q^{n-(s+1)}]$ and distinct $y_1,y_2\in\F_q$, there exist sets $S_1\subseteq\F_q[x]/(g_1)$ and $S_2\subseteq\F_q[x]/(g_2)$ such that $\abs{S_1}+\abs{S_2}=K$ and $S_1\times S_2\in\varphi(P_{<n})$.
\end{lemma}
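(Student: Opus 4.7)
The plan is to follow the construction sketched in the proof overview. The key algebraic object is an injective \(\F_q\)-linear map \(\sigma : P_{<n-(s+1)} \to P_{\le s}\), which I would first establish rigorously. For each \(h_1 \in P_{<n-(s+1)}\), define \(\rho(h_1) := h_2\) as the unique polynomial in \(P_{<n-(s+1)}\) for which \(h_1 g_1 - h_2 g_2 \in P_{\le s}\); existence and uniqueness come from the fact that requiring the \(n-(s+1)\) coefficients at degrees \(s+1, \dots, n-1\) of \(h_1 g_1 - h_2 g_2\) to vanish yields a triangular linear system in the \(n-(s+1)\) coefficients of \(h_2\) whose diagonal entries are the leading coefficient of the monic polynomial \(g_2\), hence nonzero. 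Thus \(\rho\) is an \(\F_q\)-linear bijection, and \(\sigma(h_1) := h_1 g_1 - \rho(h_1) g_2\) is a well-defined \(\F_q\)-linear map into \(P_{\le s}\). Injectivity of \(\sigma\) follows because \(\sigma(h_1) = 0\) gives \(h_1 g_1 = \rho(h_1) g_2\), and \(\gcd(g_1, g_2) = 1\) then forces \(g_2 \mid h_1\); combined with \(\deg h_1 < n - (s+1) \le s+1 = \deg g_2\), this forces \(h_1 = 0\).

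With \(\sigma\) in hand, I would invoke the structural description of \(\varphi(P_{<n})\) from the overview:
\[
\varphi(P_{<n}) \;=\; \bigcup_{h \in P_{<n-(s+1)}} \bigl\{ (f,\, f + \sigma(h)) \,\bigm|\, f \in P_{\le s}\bigr\},
\]
where I identify \(P_{\le s}\) with both \(\F_q[x]/(g_1)\) and \(\F_q[x]/(g_2)\) via the canonical \(\F_q\)-linear bijection sending a polynomial of degree \(\le s\) to its residue class (bijective because \(g_1, g_2\) each have degree \(s+1\)). Now choose arbitrary subsets \(R_1, R_2 \subseteq \sigma(P_{<n-(s+1)}) \subseteq P_{\le s}\) with \(|R_1| = \lfloor K/2 \rfloor\) and \(|R_2| = \lceil K/2 \rceil\). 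These are feasible because injectivity of \(\sigma\) gives \(|\sigma(P_{<n-(s+1)})| = q^{n-(s+1)} \ge K\). Set \(S_1 := R_1\) (viewed in \(\F_q[x]/(g_1)\)) and \(S_2 := R_2\) (viewed in \(\F_q[x]/(g_2)\)); then \(|S_1| + |S_2| = K\) as required.

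To verify \(S_1 \times S_2 \subseteq \varphi(P_{<n})\), take any \(r_1 \in S_1\) and \(r_2 \in S_2\) and write \(r_1 = \sigma(h_1)\), \(r_2 = \sigma(h_2)\) with \(h_1, h_2 \in P_{<n-(s+1)}\). Then by \(\F_q\)-linearity of \(\sigma\),
\[
(r_1, r_2) \;=\; \bigl(\sigma(h_1),\, \sigma(h_1) + \sigma(h_2 - h_1)\bigr),
\]
and since \(h_2 - h_1 \in P_{<n-(s+1)}\), the structural description applied with \(f := \sigma(h_1) \in P_{\le s}\) and \(h := h_2 - h_1\) shows \((r_1, r_2) \in \varphi(P_{<n})\), completing the proof. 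The main obstacle in this argument is really the setup of \(\sigma\) and the structural description of \(\varphi(P_{<n})\)---deferred to the dedicated subsection on the image of \(\varphi\). Once those are in place, the construction of \(S_1, S_2\) and the verification are immediate consequences of the \(\F_q\)-linearity of \(\sigma\) and the fact that the \(f\)-coordinate in the structural description ranges freely over \(P_{\le s}\).
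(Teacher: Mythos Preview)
Your proposal is correct and follows essentially the same approach as the paper: choose $S_1,S_2$ as subsets of $\sigma(P_{<n-(s+1)})$, invoke the structural description $\varphi(P_{<n})=\bigcup_{h}\ell_{\sigma(h)}$, and verify $(r_1,r_2)\in\varphi(P_{<n})$ via the linearity of $\sigma$ applied to $(\sigma(h_1),\sigma(h_1)+\sigma(h_2-h_1))$. The only minor differences are that you handle odd $K$ via $\lfloor K/2\rfloor,\lceil K/2\rceil$ (the paper just writes $K_{\max}/2$), and you prove injectivity of $\sigma$ directly from $\gcd(g_1,g_2)=1$ and the degree bound $\deg h_1<n-(s+1)\le s+1$, whereas the paper deduces it by a cardinality count using that $\varphi$ is an isomorphism.
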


Using this relation between $\varphi$ and $\psi_{y_1,y_2}$, we prove \cref{cor:lem:S1 and S2 construction for psi}.

\begin{proof}[Proof of \cref{cor:lem:S1 and S2 construction for psi}]
    Use \cref{lem:S1 and S2 construction for phi} to create $R_1\subseteq\F_q[x]/(g_1)$ and $R_2\subseteq\F_q[x]/(g_2)$ such that $\abs{R_1}+\abs{R_2}=K$ and $R_1\times R_2\in\varphi(P_{<n})$. Let $S_1=\psi_{y_1}\circ\pi_1^{-1}(R_1)$ and $S_2=\psi_{y_2}\circ\pi_2^{-1}(R_2)$ where $\pi_1^{-1}$ and $\pi_2^{-1}$ are the natural inclusions of $\F_q[x]/(g_1)$ and $\F_q[x]/(g_2)$ into $\F_q[x]$, respectively. We claim that $S_1\times S_2\in\psi_{y_1,y_2}(P_{<n})$.

    Since the polynomials in the images of $\pi_1^{-1}$ and $\pi_2^{-1}$ are of degree strictly less than $s+1$, we have that $\psi_{y_1}\circ\pi_1^{-1}$ and $\psi_{y_2}\circ\pi_2^{-1}$ are injective, so $\abs{S_1}+\abs{S_2}=\abs{R_1}+\abs{R_2}=K$. Moreover, for a pair $(s_1,s_2)\in S_1\times S_2$, we can take the unique $(r_1,r_2)\in R_1\times R_2$ such that $s_1=\psi_{y_1}\circ\pi_1^{-1}(r_1)$ and $s_2=\psi_{y_2}\circ\pi_2^{-1}(r_2)$. Thus, $(s_1,s_2)=\psi_{y_1,y_2}(\pi^{-1}_1(r_1),\pi_2^{-1}(r_2))\in\psi_{y_1,y_2}(P_{<n})$, as claimed.
\end{proof}

Our proof of \cref{lem:S1 and S2 construction for phi} relies on a result about the structure of the image of $\varphi$ which we prove in \cref{subsec:structure of image of phi} but state here.

\begin{definition}
    For any $b\in\F_q^{s+1}$ define the line $\ell_b=\{(f,f+b)\mid f\in\F_q^{s+1}\}$.
\end{definition}
We now show that the image of $\varphi$ is actually composed of many of these lines with shifts given by the injective homomorphism $\sigma:P_{<n-(s+1)}\to P_{<s+1}$ that we introduce later in \cref{def:sigma}.

\begin{lemma}\label{lem:Structure of image of phi}
We show that:
    \begin{enumerate}
        \item $\varphi(P_{<s+1})=\ell_0$
        \item For $2s+2>d\geq s+1$ we have $\varphi(P_d)=\bigcup_{h\in P_{d-(s+1)}}\ell_{\sigma(h)}$.
    \end{enumerate}
    Thus, we can conclude that $\varphi(P_{<n})=\bigcup_{h\in P_{< n-(s+1)}}\ell_{\sigma(h)}$ where $\sigma$ is an injective homomorphism.
\end{lemma}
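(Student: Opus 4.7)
My plan is to establish the three claims of the lemma in order, with the map $\sigma : P_{<n-(s+1)} \to P_{<s+1}$ doing all the work. I would begin by verifying that $\sigma$ is well-defined, is an $\F_q$-homomorphism, and is injective, using the isomorphism $\rho$ already in hand. After that, parts $(1)$ and $(2)$ of the lemma reduce to direct computations with polynomial division, and the final conclusion follows by taking unions.

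For part $(1)$, every $f\in P_{<s+1}$ has degree strictly below $\deg g_1=\deg g_2=s+1$, so $\varphi(f)=(f,f)\in\ell_0$, and every point of $\ell_0$ is hit; thus $\varphi(P_{<s+1})=\ell_0$. For part $(2)$ with $s+1\le d<2s+2$, I handle both inclusions. Given $f\in P_d$, I divide by $g_1$ and by $g_2$ to obtain $f=h_1 g_1+r_1=h_2 g_2+r_2$ with $r_i\in P_{<s+1}$; the hypothesis $s+1\le d<2s+2$ forces $\deg h_1=\deg h_2=d-(s+1)\in\{0,\dots,s\}$, so $h_1\in P_{d-(s+1)}$. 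By the definition of $\sigma$, $r_2-r_1=\sigma(h_1)$, so $\varphi(f)=(r_1,r_1+\sigma(h_1))\in\ell_{\sigma(h_1)}$. Conversely, given any $h\in P_{d-(s+1)}$ and any $r_1\in P_{<s+1}$, the polynomial $f:=hg_1+r_1\in P_d$ satisfies $\varphi(f)=(r_1,r_1+\sigma(h))$, which is an arbitrary element of $\ell_{\sigma(h)}$.

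The key technical step is the verification of $\sigma$. For well-definedness, replacing the auxiliary $r_1$ by $r_1'=r_1+\delta$ with $\delta\in P_{<s+1}$ shifts $f$ by $\delta$, and since $\deg\delta<\deg g_2$, the new remainder modulo $g_2$ is $r_2+\delta$; hence $r_2'-r_1'=r_2-r_1$ is unchanged. $\F_q$-linearity follows by adding and scaling the defining equations. For injectivity, suppose $\sigma(h)=0$; then $hg_1=\rho(h)g_2$, and since $\gcd(g_1,g_2)=1$ (using $y_1\ne y_2$), $g_1$ divides $\rho(h)$. But $\deg\rho(h)<n-(s+1)<s+1=\deg g_1$ by the hypothesis $n<2s+2$, so $\rho(h)=0$, and then $h=0$ because $\rho$ is an isomorphism.

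To finish, decompose $P_{<n}=P_{<s+1}\sqcup\bigsqcup_{d=s+1}^{n-1}P_d$. Applying $\varphi$ and using parts $(1)$ and $(2)$ gives $\varphi(P_{<n})=\ell_0\cup\bigcup_{d=s+1}^{n-1}\bigcup_{h\in P_{d-(s+1)}}\ell_{\sigma(h)}$; since $\sigma(0)=0$ makes the $h=0$ term contribute exactly $\ell_0$, this collapses to $\bigcup_{h\in P_{<n-(s+1)}}\ell_{\sigma(h)}$. The main obstacle I anticipate is the well-definedness of $\sigma$ — it is not obvious from the bare definition — together with the careful use of $n<2s+2$ in the injectivity argument, which is precisely the degree regime in which the tightness construction lives.
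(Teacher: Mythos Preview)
Your argument is correct and tracks the paper's proof closely for parts (1), (2), and the final union decomposition. The one substantive difference is how you establish injectivity of $\sigma$. The paper argues by counting: since $\varphi$ is injective on $P_{<2s+2}\supseteq P_{<n}$, one has $|\varphi(P_{<n})|=q^n$; the lines $\ell_b$ are pairwise disjoint of size $q^{s+1}$, so the right-hand union has cardinality $|\sigma(P_{<n-(s+1)})|\cdot q^{s+1}$, forcing $|\sigma(P_{<n-(s+1)})|=q^{n-(s+1)}$. Your route via $\gcd(g_1,g_2)=1$ and the degree bound $\deg\rho(h)<n-(s+1)<s+1$ is more direct and makes the role of the hypothesis $n<2s+2$ explicit, whereas the paper's counting uses it only implicitly through the injectivity of $\varphi$ on $P_{<n}$. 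Your separate well-definedness check for $\sigma$ is also not needed in the paper's setup, since there $\sigma$ is \emph{defined} as $\sigma(h)=hg_1-\rho(h)g_2$ once $\rho$ is available, making well-definedness automatic.
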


This structural result on the image of $\varphi$ allows us to prove \cref{lem:S1 and S2 construction for phi}.

\begin{proof}[Proof of \cref{lem:S1 and S2 construction for phi}]

    Let $S_1,S_2\subseteq\sigma(P_{< n-(s+1)})$ such that $\abs{S_1}=\abs{S_2}=\frac{K_{max}}{2}$. Recall that $\sigma$ is injective so $\abs{\sigma(P_{< n-(s+1)})}=\abs{P_{< n-(s+1)}}=q^{n-(s+1)}$ and we have enough elements to choose from.
    We claim that any such choice of $S_1$ and $S_2$ satisfies the lemma statement.
    
    To prove this claim, we have to show that any pair $(s_1,s_2)\in S_1\times S_2$ lies in $\varphi(P_{<n})$. By \cref{lem:Structure of image of phi}, this is equivalent to saying that $(s_1,s_2)$ is of the form $(f,f+\sigma(h))$ for some $f\in P_{<s+1}$ and $h\in P_{< n-(s+1)}$. From our construction, we have that $s_1=\sigma(h_1)$ and $s_2=\sigma(h_2)$ for $h_1,h_2\in P_{< n-(s+1)}$. 
    So we are considering the point $(\sigma(h_1),\sigma(h_2))$. Using the fact that $\sigma$ is a homomorphism from \cref{claim: sigma is hom}, we can rewrite this point in our desired form:
    \begin{align*}
        (\sigma(h_1),\sigma(h_2))&=(\sigma(h_1),\sigma(h_1)+\sigma(h_2)-\sigma(h_1))\\
        &=(\sigma(h_1),\sigma(h_1)+\sigma(h_2-h_1)).
    \end{align*}
    Thus, $(s_1,s_2)\in\ell_{\sigma(h_2-h_1)}\subseteq\varphi(P_{<n})$, as claimed.
\end{proof}

\subsection{The structure of the image of \texorpdfstring{$\varphi$}{phi}}\label{subsec:structure of image of phi}

We end \cref{sec:tightness} by proving \cref{lem:Structure of image of phi}, that $\varphi(P_{<n})=\bigcup_{h\in P_{< n-(s+1)}}\ell_{\sigma(h)}$. To do so we first define a new homomorphism $\rho$ which we will use to define $\sigma$ later on.

\begin{definition}
    Let $\rho:P_{< n-(s+1)}\to P_{<n-(s+1)}$ be defined as follows. Given $h_1\in P_{<n-(s+1)}$, let $f\in P_{<n}$ be such that $f=h_1g_1+r_1$ for some $r_1\in P_{<s+1}$. Then let $h_2\in P_{\deg(f)-(s+1)}$ and $r_2\in P_{<s+1}$ be the unique polynomials such that $f=h_2g_2+r_2$. We define $\rho(h_1):=h_2$.
\end{definition}
Given this definition, the natural first question is whether $\rho$ is even well-defined since there are $q^{s+1}$ many choices of $f$ that could be used. \cref{lem:h_1 defines h_2} shows that $\rho$ is well-defined and that it is, in fact, just an invertible linear operator, meaning that $\rho$ is an automorphism of $P_{< n-(s+1)}$. 
\begin{lemma}\label{lem:h_1 defines h_2}
    Let $f\in P_{<n}$ and take $h_1,h_2,r_1,r_2\in\F_q[x]$ to be the unique polynomials such that
    \begin{align*}
        f&=h_1g_1+r_1\\
        f&=h_2g_2+r_2
    \end{align*}
    where $\deg(h_1)=\deg(h_2)=\deg(f)-(s+1)$ and $\deg(r_1),\deg(r_2)<s+1$. 

   Then $h_2$ can be determined uniquely from $h_1$, $g_1$, and $g_2$. Similarly, $h_1$ can be determined uniquely from $h_2$, $g_1$, and $g_2$.
\end{lemma}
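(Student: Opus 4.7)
The plan is to extract a constraint between $h_1$ and $h_2$ by subtracting the two decompositions of $f$, obtaining
\[
h_1 g_1 - h_2 g_2 \;=\; r_2 - r_1.
\]
Since $r_1, r_2 \in P_{<s+1}$, the right-hand side has degree at most $s$, so $h_1 g_1$ and $h_2 g_2$ must agree in every coefficient of $x^j$ for $j \geq s+1$. This is the only consequence of the two equations that I will actually use; the residues $r_1, r_2$ will drop out of the discussion after this point.

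Next, I would carry out a straightforward coefficient-matching argument. Set $k := \deg(f)-(s+1)$ and write $h_1 = \sum_{j=0}^{k} a_j x^j$ and $h_2 = \sum_{j=0}^{k} b_j x^j$. The requirement above becomes $k+1$ linear equations in the unknowns $b_0,\ldots,b_k$, given $a_0,\ldots,a_k$. Because $g_1$ and $g_2$ are both monic of degree $s+1$, matching the coefficient of $x^{k+s+1}$ immediately yields $b_k = a_k$. Matching the coefficient of $x^{k+s+1-\ell}$ for $\ell = 1, 2, \ldots, k$ in turn expresses $b_{k-\ell}$ as an explicit $\F_q$-linear combination of $a_k, \ldots, a_{k-\ell}$ and of the already-determined $b_k, \ldots, b_{k-\ell+1}$, using only the (fixed) non-leading coefficients of $g_1$ and $g_2$. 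The resulting system is triangular with $1$'s on the diagonal and is therefore uniquely solvable; moreover, its solution depends only on $h_1$, $g_1$, and $g_2$, not on the residues $r_1,r_2$ or on $f$ itself.

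This simultaneously gives the desired uniqueness of $h_2$ from $h_1$, and by the symmetric argument swapping the roles of $(h_1,g_1)$ and $(h_2,g_2)$, the uniqueness of $h_1$ from $h_2$. A convenient byproduct for the subsequent \cref{lem:Structure of image of phi} is that the same triangular formula witnesses that the induced map $\rho\colon h_1 \mapsto h_2$ is $\F_q$-linear and invertible, hence an automorphism of $P_{<n-(s+1)}$. I do not anticipate a serious obstacle beyond the index bookkeeping: everything reduces to the fact that multiplication by a monic polynomial is triangular on coefficient vectors with unit diagonal, and so is inversion.
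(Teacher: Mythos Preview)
Your proposal is correct and follows essentially the same approach as the paper: subtract the two decompositions, observe that the degree bound on $r_2-r_1$ forces the high-degree coefficients of $h_1g_1$ and $h_2g_2$ to match, and then exploit that multiplication by a monic polynomial is triangular with unit diagonal to solve uniquely for $h_2$ (and symmetrically for $h_1$). The paper packages the coefficient-matching step as a matrix identity $M^{(1)}\eta^{(1)}=M^{(2)}\eta^{(2)}$ with each $M^{(t)}$ upper triangular and unit-diagonal, whereas you spell out the back-substitution directly, but the content is identical.
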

\begin{proof}[Proof of \cref{lem:h_1 defines h_2}]
    Rearranging the equations in the lemma statement gives us that
    \begin{align*}
        h_1g_1-h_2g_2&=r_2-r_1.
    \end{align*}
    Since the degrees of $r_1$ and $r_2$ are at most $s$, we have that $\deg(r_2-r_1)<s+1$ as well, meaning that $\deg(h_1g_1-h_2g_2)<s+1$. Consequently, the coefficients of $x^k$ for $k\in\{s+1,\dots,n-1\}$ of $h_1g_1$ and $h_2g_2$ must match, which uniquely determines $h_2$ from $h_1$. 

    Formally, if for $t\in\{1,2\}$ we write $h_t(x)=\sum_{k=0}^{n-1-(s+1)}\eta_k^{(t)} x^k$ for $\eta_k^{(t)}\in\F_q$ and expand out $g_t(x)=\sum_{k=0}^{s+1}\gamma_k^{(t)} x^k$  where $\gamma_k^{(t)}\in\F_q$ and $\gamma_{s+1}^{(t)}=1$, then for each $k\in\{s+1,\dots,n-1\}$ we have the linear equation
    \begin{align*}
        \sum_{j=k-(s+1)}^{n-1-(s+1)}\gamma_{k-j}^{(1)}\eta_{j+s+1}^{(1)}&=\sum_{j=k-(s+1)}^{n-1-(s+1)}\gamma_{k-j}^{(2)}\eta_{j+s+1}^{2)}.
    \end{align*}
    This yields $n-s-1$ linear equations, which we can write as $M^{(1)}\eta^{(1)}=M^{(2)}\eta^{(2)}$ where for $t\in\{1,2\}$ we let $\eta^{(t)}=(\eta^{(t)}_0,\dots,\eta^{(t)}_{n-s-2})^\intercal$ and $M^{(t)}\in\F_q^{(n-s-1)\times(n-s-1)}$ is the upper triangular matrix defined as 
    \begin{align*}
        M^{(t)}_{i,j}&=
        \begin{cases}
            \gamma_{s+1+i-j}^{(t)} & i\leq j\\
            0 & i>j
        \end{cases}
    \end{align*}
    Since $g_t$ is monic, meaning that $\gamma_{s+1}^{(t)}=1$, we clearly see that $\det(M^{(t)})=1$, so it is invertible and $\eta^{(2)}=\left(M^{(2)}\right)^{-1}M^{(1)}\eta^{(1)}$. Similarly, $\eta^{(1)}=\left(M^{(1)}\right)^{-1}M^{(2)}\eta^{(2)}$.
\end{proof}

\begin{corollary}\label{lem:rho is module iso}
    $\rho$ is an automorphism of $P_{<n-(s+1)}$.
\end{corollary}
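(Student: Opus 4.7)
The plan is to establish three things in sequence: (i) $\rho$ is well-defined as a function into $P_{<n-(s+1)}$, (ii) $\rho$ is $\F_q$-linear, and (iii) $\rho$ is bijective. Since the domain and codomain of $\rho$ coincide, together these yield the claim that $\rho$ is an automorphism of $P_{<n-(s+1)}$.

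For well-definedness, notice that the definition of $\rho(h_1)$ requires a preliminary choice of $f \in P_{<n}$ with $f = h_1 g_1 + r_1$ for some $r_1 \in P_{<s+1}$, so a priori the output $h_2$ could depend on that choice of $r_1$. However, \cref{lem:h_1 defines h_2} asserts exactly that $h_2$ is determined uniquely by $h_1$ (for the fixed moduli $g_1$ and $g_2$), so the map $\rho$ is unambiguous. Moreover, since $\deg(h_1) < n-(s+1)$ and $\deg(r_1) \le s$, we have $\deg(f) < n$, whence $\deg(h_2) = \deg(f) - (s+1) < n-(s+1)$, so $h_2$ indeed lands in $P_{<n-(s+1)}$.

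For $\F_q$-linearity, I would just unpack the proof of \cref{lem:h_1 defines h_2}: writing $h_1$ and $h_2$ in the monomial basis as coefficient vectors $\eta^{(1)}$ and $\eta^{(2)}$, that proof supplies the explicit formula $\eta^{(2)} = \bigl(M^{(2)}\bigr)^{-1} M^{(1)} \eta^{(1)}$, where $M^{(1)}, M^{(2)}$ are upper-triangular matrices depending only on $g_1, g_2$. This exhibits $\rho$ as a linear map on $P_{<n-(s+1)}$.

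For bijectivity, I appeal to the final sentence of \cref{lem:h_1 defines h_2}, which says that $h_1$ can likewise be recovered uniquely from $h_2$; equivalently, the matrix identity rearranges as $\eta^{(1)} = \bigl(M^{(1)}\bigr)^{-1} M^{(2)} \eta^{(2)}$, which is the explicit linear inverse of $\rho$. Alternatively, since $\rho$ is a linear endomorphism of the finite-dimensional space $P_{<n-(s+1)}$, it suffices to show injectivity, and injectivity is exactly the unique-determination statement of \cref{lem:h_1 defines h_2}. Either way, $\rho$ is an $\F_q$-vector-space automorphism, as claimed. There is no real obstacle: the corollary is essentially a repackaging of \cref{lem:h_1 defines h_2}, with the only subtlety being the verification that the value of $\rho$ does not depend on the auxiliary choice of remainder $r_1$, which is already handled by that lemma.
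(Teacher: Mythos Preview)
Your proposal is correct and follows essentially the same route as the paper: both extract from \cref{lem:h_1 defines h_2} the matrix identity $\eta^{(2)} = \bigl(M^{(2)}\bigr)^{-1} M^{(1)} \eta^{(1)}$ and observe that this exhibits $\rho$ as an invertible linear operator on $P_{<n-(s+1)}$. The paper compresses your three steps into a single sentence, but the content is identical.
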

\begin{proof}
    \cref{lem:h_1 defines h_2} shows that we can equivalently define $\rho$ as $\rho=\left(M^{(2)}\right)^{-1}M^{(1)}$, which is an invertible linear transformation.
\end{proof}

Now that we have defined $\rho$ and shown that it is an isomorphism, we use it to build $\sigma$.

\begin{definition}\label{def:sigma}
    Let $\sigma:P_{< n-(s+1)}\to P_{<s+1}$ be defined as $\sigma(h)=hg_1-\rho(h)g_2$. Note that while we consider multiplication here to occur in $\F_q[x]$, \cref{lem:h_1 defines h_2} tells us that this difference yields a polynomial in $P_{<s+1}$.
\end{definition}
Unsurprisingly, $\sigma$ is a homomorphism of vector spaces since $\rho$ is one.
\begin{claim}\label{claim: sigma is hom}
    Since $\rho$ is an isomorphism, $\sigma$ is a homomorphism of vector spaces over $\F_q$.
\end{claim}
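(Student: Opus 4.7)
The plan is straightforward: since $\sigma(h) = h g_1 - \rho(h) g_2$ is built from two operations that are each $\F_q$-linear in $h$—multiplication by the fixed polynomial $g_1$ (respectively $g_2$) in $\F_q[x]$, and the map $\rho$—one only has to check that the difference is preserved under addition and scalar multiplication. The codomain issue (that $\sigma(h)$ lands in $P_{<s+1}$) has already been handled in \cref{def:sigma} via \cref{lem:h_1 defines h_2}, so it is not part of what must be proved here.

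First I would observe that, by \cref{lem:rho is module iso}, $\rho$ is in particular an $\F_q$-linear map $P_{<n-(s+1)}\to P_{<n-(s+1)}$. Second, I would unfold the definition on a sum: for any $h_1,h_2\in P_{<n-(s+1)}$,
\[
\sigma(h_1+h_2) = (h_1+h_2)g_1 - \rho(h_1+h_2)g_2 = h_1 g_1 + h_2 g_1 - \rho(h_1)g_2 - \rho(h_2)g_2 = \sigma(h_1)+\sigma(h_2),
\]
where the middle equality uses distributivity in $\F_q[x]$ together with additivity of $\rho$. Third, for $\alpha\in\F_q$ and $h\in P_{<n-(s+1)}$,
\[
\sigma(\alpha h) = (\alpha h)g_1 - \rho(\alpha h)g_2 = \alpha(h g_1) - \alpha(\rho(h) g_2) = \alpha\bigl(h g_1 - \rho(h)g_2\bigr) = \alpha\sigma(h),
\]
using $\F_q$-homogeneity of $\rho$ and commutativity of scalar multiplication with polynomial multiplication. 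Together these two equalities say that $\sigma$ is an $\F_q$-linear map $P_{<n-(s+1)}\to P_{<s+1}$, i.e., a homomorphism of $\F_q$-vector spaces.

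There is essentially no obstacle; the only thing to watch is that one really is allowed to invoke linearity of $\rho$ (not just its bijectivity), which is immediate from the description of $\rho$ as $\bigl(M^{(2)}\bigr)^{-1}M^{(1)}$ given in the proof of \cref{lem:h_1 defines h_2}. Injectivity of $\sigma$, which is asserted in \cref{lem:Structure of image of phi}, is a separate claim and is not needed here; it would be proved later by a counting argument comparing $|P_{<n-(s+1)}|$ with the number of distinct shifts $\sigma(h)$ appearing in the decomposition of $\varphi(P_{<n})$ into lines $\ell_{\sigma(h)}$.
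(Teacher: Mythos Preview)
Your proposal is correct and takes essentially the same approach as the paper: the paper's proof is a single sentence noting that $\sigma$ being a homomorphism follows from $\rho$ being a homomorphism together with addition and multiplication commuting in $\F_q[x]$, and you have simply spelled out the additivity and $\F_q$-homogeneity checks explicitly.
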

\begin{proof}
The fact that $\sigma$ is a homomorphism of vector spaces over $\F_q$ follows from $\rho$ being a homomorphism and addition and multiplication commuting in $\F_q[x]$.
\end{proof}

Finally, we are ready to prove \cref{lem:Structure of image of phi}, that $\varphi(P_{<n})=\bigcup_{h\in P_{< n-(s+1)}}\ell_{\sigma(h)}$.

\begin{proof}[Proof of \cref{lem:Structure of image of phi}]
    We can directly show the first item by the fact that that the remainder of dividing a polynomial by another of larger degree is the polynomial itself. That is, for any polynomial $f\in P_{<s+1}$, we have that $\pi_1(f)=\pi_2(f)=f$, so $\varphi(f)=(f,f)$. Consequently, going over all $f$ of degree less than $s+1$ gives us $\varphi(P_{<s+1})=\{(f,f):f\in P_{<s+1}\}=\ell_0$.

    Now, when considering $P_d$ for $2s+2>d\geq s+1$, quotienting by a degree $s+1$ polynomial does have a non-trivial effect. That is, for $f\in P_d$, there must exist unique $h_1,h_2,r_1,r_2\in\F_q[x]$ such that 
    \begin{align*}
        f&=h_1g_1+r_1\\
        f&=h_2g_2+r_2
    \end{align*}
    where $\deg(h_1)=\deg(h_2)=\deg(f)-(s+1)$ and $\deg(r_1),\deg(r_2)<s+1$. Recall that $g_1(x)=(x-y_1)^{s+1}$ and $g_2(x)=(x-y_2)^{s+1}$. Moreover, $\pi_1(f)=r_1$ and $\pi_2(f)=r_2$. By definition, we have that $h_2=\rho(h_1)$, meaning that $r_2=r_1+h_1g_1-\rho(h_1)g_2=r_1+\sigma(h_1)$. Therefore, we have $\varphi(f)=(r_1,r_1+\sigma(h_1))$. 

    With this in mind, we recall that because $h_1$ and $r_1$ uniquely identify $f$, we can iterate over all elements of $P_d$ by iterating over all $h_1\in P_{d-(s+1)}$ and $r_1\in P_{<s+1}$. That is, $P_d=\{h_1g_1+r_1:h_1\in P_{d-(s+1)},r_1\in P_{<s+1}\}$. From this we can conclude that
    \begin{align*}
        \varphi(P_d)&=\{\varphi(f):f\in P_d\}\\
        &=\{\varphi(h_1g_1+r_1):h_1\in P_{d-(s+1)},r_1\in P_{<s+1}\}\\
        &=\{(r_1,r_1+\sigma(h_1)):h_1\in P_{d-(s+1)},r_1\in P_{<s+1}\}\\
        &=\bigcup_{h_1\in P_{d-(s+1)}}\ell_{\sigma(h_1)},
    \end{align*}
    as claimed.

    Lastly, to see that $\sigma$ is injective we recall that $\varphi$ is an isomorphism and count cardinalities. For the left hand side of $\varphi(P_{<n})=\bigcup_{h\in P_{< n-(s+1)}}\ell_{\sigma(h)}$, we have that $\abs{\varphi(P_{<n})}=\abs{P_{<n}}=q^n$. For the right hand side, we recall that $\abs{\ell_b}=q^{s+1}$ for any $b$, so $\abs{\bigcup_{h\in P_{< n-(s+1)}}\ell_{\sigma(h)}}=\abs{\sigma(P_{< n-(s+1)})}\cdot q^{s+1}$. Therefore, $\abs{\sigma(P_{< n-(s+1)})}=\frac{q^n}{q^{s+1}}=q^{n-(s+1)}$, showing that $\sigma$ is injective since $\abs{P_{<n-(s+1)}}=q^{n-(s+1)}$.
\end{proof}

\section{Non-Bipartite Lossless Expander}\label{sec:non-bipartite lossless expander}
Here, we show how to transform a two-sided lossless expander into an undirected graph (that is not necessarily bipartite) while retaining lossless expansion. We then apply this transformation to the KT graph to obtain our main theorem.
\begin{theorem}[Formal version of \cref{thm: informal non-bipartite pos and impos}]\label{thm:normal lossless expander}
For infinitely many $N$ and all $0 < \delta < 0.99$, there exists an explicit regular $(K, \eps = 0.01)$ lossless expander $G = (V, E)$ where $|V| = N$, the degree is $D$ where $N^{1 - 1.01\delta}\le D \le N^{1 - 1.01\delta + o(1)}$ and
$K = \min\left(N^{\delta}, N^{1 - 1.01\delta - o(1)}\right)$.
Moreover, $G$ with one vertex removed, is endowed with a free group action from $\F_q$, where $q=\poly(\log N)$.
\end{theorem}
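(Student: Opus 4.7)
The plan is to set $G = G^2[L]$, the bipartite half of the KT graph $\mathcal{K}$ from \cref{thm: main simplified theorem}, with the expansion constants in \cref{thm: main simplified theorem} taken small enough (e.g.\ $\eps_L,\eps_R$ both slightly below $0.005$) so that $(1-\eps_L)(1-\eps_R)\ge 1-0.01$. The vertex set is $L=\F_q^n$ of size $N=q^n$, matching the claim. There are three things to prove: (i) $G$ is \emph{regular}, not merely max-degree; (ii) the degree and expanding-set bounds from \cref{lem:overview-Two-sided lossless expander to lossless expander} match the claimed parameters; (iii) $\F_q^*$ acts freely on $G$ after the zero polynomial is removed.

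For (i), I would use translation invariance of $L=\F_q^n$. Two vertices $u,v\in L$ are adjacent in $G^2[L]$ iff they share a common right neighbor in $\mathcal{K}$, which unwinds to $\psi_y(u)=\psi_y(v)$ for some $y\in\F_q$, and, by $\F_q$-linearity of $\psi_y$ (\cref{cor: seedy is fq linear}), to $\psi_y(u-v)=0$ for some $y$. Writing
\[
S=\{f\in\F_q^n:\psi_y(f)=0\text{ for some }y\in\F_q\},
\]
the neighborhood of $u$ is exactly $u+(S\setminus\{0\})$; since $|S|$ is independent of $u$ (and $0\in S$ trivially), $G$ is regular of common degree $D=|S|-1$. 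Combined with \cref{lem:overview-Two-sided lossless expander to lossless expander} applied to $\mathcal{K}$, this upgrades the max-degree expander to an ordinary $(K,A)$-lossless expander with $K=\min(K_L,K_R/D_L)$ and $A/D\ge A_LA_R/(D_LD_R)\ge(1-\eps_L)(1-\eps_R)\ge 1-0.01$.

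For (ii), I plug in \cref{thm: main simplified theorem}: $K_L=N^\delta$, $D_L=N^{o(1)}$, $M\in[N^{1.01\delta-o(1)},D_L\cdot N^{1.01\delta}]$, and biregularity $ND_L=MD_R$. Then $D_LD_R=ND_L^2/M\in[N^{1-1.01\delta},N^{1-1.01\delta+o(1)}]$, which, since $D\in[D_LA_R,D_LD_R]$, puts $D$ in the claimed window. For $K$, I compute $K_R/D_L=\frac{1}{50D_L^2}\min(M,N/M)$; a case split on whether $M$ is below or above $\sqrt{N}$ makes $K_R/D_L$ equal to $N^{1.01\delta-o(1)}$ (when $\delta<1/2.02$) or $N^{1-1.01\delta-o(1)}$ (when $\delta>1/2.02$), and taking the minimum with $K_L=N^\delta$ yields $K=\min(N^\delta,N^{1-1.01\delta-o(1)})$ in all regimes. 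The technical precondition $D_L\le K_R$ needed to invoke right-to-left expansion on $D_L$-sized sets is automatic because $D_L$ is polylogarithmic while $K_R$ is polynomial in $N$.

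For (iii), $\F_q^*$ acts on $L\setminus\{0\}$ by scalar multiplication $(\alpha,f)\mapsto\alpha f$. Freeness is immediate ($\alpha f=f$ with $f\neq 0$ forces $\alpha=1$), and edge preservation follows once more from linearity of $\psi_y$: if $\psi_y(u-v)=0$ then $\psi_y(\alpha u-\alpha v)=\alpha\psi_y(u-v)=0$. Since the zero polynomial is the only vertex with a nontrivial stabilizer, removing it yields a free action. The main obstacle throughout is the bookkeeping in step (ii) needed to pin down the constants at $\eps=0.01$ across the full range $0<\delta<0.99$; the key conceptually new input is the translation-invariance observation of step (i), which upgrades the max-degree output of \cref{lem:overview-Two-sided lossless expander to lossless expander} to a genuinely regular expander.
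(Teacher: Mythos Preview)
Your proposal is correct and follows the paper's overall strategy: take the bipartite half of the KT two-sided expander, invoke \cref{lem:overview-Two-sided lossless expander to lossless expander} for expansion, establish regularity separately, and verify the $\F_q^*$-action. The parameter bookkeeping and the group-action argument match the paper's essentially verbatim (the paper invokes \cref{thm: instantiate two sided lossless expanders} directly with $\eps_L=\eps_R=0.001$ rather than adjusting the constants in \cref{thm: main simplified theorem}, but this is only a referencing difference).

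The one genuine divergence is your regularity argument. The paper (\cref{lem:KT bipartite half is regular}) computes the neighborhood of $f$ explicitly via Taylor polynomials, writing $\Gamma(f)=\{T_y^s[f](x)+\sum_{i=s+1}^{n-1}a_i(x-y)^i\}$ and observing the cardinality is independent of $f$. Your argument is cleaner: adjacency in $G^2[L]$ is $\psi_y(u-v)=0$ for some $y$, so $G^2[L]$ is the Cayley graph of the additive group $\F_q^n$ with connection set $S=\bigcup_y\ker\psi_y$, hence automatically regular. This buys you a one-line proof that also makes the edge-preservation step of the group action transparent (it is the same linearity). The paper's route, on the other hand, gives an explicit description of $\Gamma(f)$ that could be useful if one wanted to compute the degree exactly rather than just bound it.
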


\subsection{Expansion from the bipartite half}
Given a two-sided lossless expander, we show how to obtain a (not necessarily bipartite) graph that is also a losslesss expander while inheriting the expansion of this graph. We use the bipartite half transformation defined as follows.

\begin{definition}[Bipartite half]
    Let $G=(L\sqcup R,E)$ be a $(D_L,D_R)$-regular bipartite graph. Then the \emph{bipartite half} $G^2[L]=(L,E^2[L])$ is defined as $E^2[L]=\{(v,w)\in L\times L\mid w\in\GammaR(\GammaL(v))\}$.
\end{definition}

Next, we show how this transformation retains lossless expansion. For the sake of clarity, we will use $\GammaL$ and $\GammaR$ for the left-to-right and right-to-left neighborhood functions of $G$ and $\Gamma$ as the neighborhood function of $G^2[L]$.

\begin{lemma}\label{lem:Two-sided lossless expander to lossless expander}
    Let $G=(L\sqcup R,E)$ be a $(D_L,D_R)$-regular $(K_L,A_L,K_R,A_R)$-two-sided lossless expander with $D_L\leq K_R$. Then $G^2[L]$ is a max-degree $(K,A)$-expander where each node has a degree in $[D_LA_R,D_LD_R]$ and with $K=\min(K_L,K_R/D_L)$ and $A=A_LA_R$.
\end{lemma}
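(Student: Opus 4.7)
The approach is to reduce everything to the bipartite expansion properties of $G$, using the key identity that for any $S\subseteq L$, the neighborhood of $S$ in $G^2[L]$ equals $\GammaR(\GammaL(S))$. Thus the expansion of $G^2[L]$ factors through two applications of expansion in $G$: first left-to-right, then right-to-left. The hypothesis $D_L\le K_R$ is what allows the second step to kick in for singleton inputs, and the restriction $|S|\le K_R/D_L$ does the same for general sets.

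First I would verify the degree bounds, since these clarify what the ``max-degree'' in the conclusion means. Fix any $v\in L$. Its neighborhood in $G^2[L]$ is precisely $\GammaR(\GammaL(v))$. The upper bound $D_LD_R$ is the trivial union bound: $|\GammaL(v)|=D_L$, and each right vertex has exactly $D_R$ left neighbors. For the lower bound, I use the assumption $D_L\le K_R$ so that the set $\GammaL(v)\subseteq R$ of size exactly $D_L$ is eligible for right-to-left expansion, which gives $|\GammaR(\GammaL(v))|\ge A_R\cdot D_L$. This simultaneously shows that the maximum degree $D$ of $G^2[L]$ satisfies $D_LA_R\le D\le D_LD_R$.

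Next I would prove the main expansion claim. Let $S\subseteq L$ with $|S|\le K=\min(K_L,K_R/D_L)$. Since $|S|\le K_L$, the left-to-right expansion of $G$ gives $|\GammaL(S)|\ge A_L|S|$. On the other hand, $|\GammaL(S)|\le D_L|S|\le D_L\cdot (K_R/D_L)=K_R$, so $\GammaL(S)$ is a legal input for right-to-left expansion, yielding
\[
|\GammaR(\GammaL(S))|\ge A_R\,|\GammaL(S)|\ge A_R A_L\,|S|=A\,|S|.
\]
Combined with the identity $\Gamma_{G^2[L]}(S)=\GammaR(\GammaL(S))$, this is exactly the $(K,A)$-expansion claim for the max-degree sense.

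There is no serious obstacle here; the proof is essentially bookkeeping. The only subtlety worth flagging is that one must check that the \emph{intermediate} set $\GammaL(S)\subseteq R$ lies within the right-expansion regime of $G$, and this is precisely why the threshold $K_R/D_L$ appears in the definition of $K$. A minor remark: the definition of $G^2[L]$ allows self-loops (each $v$ trivially lies in $\GammaR(\GammaL(v))$), but this is harmless since we are only bounding $|\Gamma(S)|$ from below, and including $v$ in its own neighborhood does not affect the stated degree and expansion bounds.
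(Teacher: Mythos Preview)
Your proposal is correct and matches the paper's proof essentially step for step: both establish the degree bounds via $|\GammaL(v)|=D_L\le K_R$ and right-to-left expansion, and both prove expansion by first using $|S|\le K_L$ to expand left-to-right and then $|\GammaL(S)|\le D_L|S|\le K_R$ to expand right-to-left. Your added remark about self-loops is a harmless aside not present in the paper, but otherwise the arguments are identical.
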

\begin{remark}
    While $G^2[L]$ may not be exactly regular, since $A_L=(1-\varepsilon_L)D_L$ and $A_R=(1-\varepsilon_R)D_R$, we see that $A=A_LA_R=(1-\varepsilon_L)(1-\varepsilon_R)D_LD_R$, meaning that our expansion is with respect to the highest possible degree $D_LD_R$ of any individual vertex.
\end{remark}
\begin{proof}[Proof of \cref{lem:Two-sided lossless expander to lossless expander}]
    We begin by showing that each node $v\in L$ of $G^2[L]$ has degree in $[D_LA_R,D_LD_R]$. By assumption, we have that $\abs{\GammaL(v)}=D_L\leq K_R$. Thus, by the right-to-left expansion of $G$, we have that $\abs{\GammaR(\GammaL(v))}\geq D_LA_R$. The upper bound is immediate given that the right degree is $D_R$ so $\abs{\GammaR(\GammaL(v))}\leq D_R\abs{\GammaL(v)}=D_RD_L$.

    Next, we prove expansion. Let $S\subseteq L$ be a set of size at most $K$. Then, because $K\leq K_L$, the left-to-right expansion of $G$ gives us that $\abs{\GammaL(S)}\geq A_L\abs{S}$. 
    To expand a second time, we recall that $K\leq K_R/D_L$, so $\abs{\GammaL(S)}\leq D_L\abs{S}\leq D_LK\leq K_R$, meaning that we can apply the right-to-left expansion of $G$. 
    This yields $\abs{\GammaR(\GammaL(S))}\geq A_R\abs{\GammaL(S)}\geq A_RA_L\abs{S}$, as claimed. 
\end{proof}

In the special case of the KT graph, the bipartite half is regular. To show this, we make the following observation.

\begin{remark}\label{rmk:bipartite half representation of KT}
    The bipartite half of the KT graph $G$ from \cref{def: KTGraph full} has a succinct representation as $G^2[L]=(L,E^2[L])$ where $E^2[L]=\{(f,g)\mid \exists y\in\F_q,\ \psi_y(f)=\psi_y(g)\}$.
\end{remark}

This allows us to prove the following regularity lemma.
\begin{lemma}\label{lem:KT bipartite half is regular}
    Let $G^2[L]$ be the bipartite half of the KT graph. Then $G^2[L]$ is regular.
\end{lemma}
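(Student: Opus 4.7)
The plan is to show that every vertex $f \in L = \F_q^n$ has the same number of neighbors in $G^2[L]$ by computing this count directly via inclusion--exclusion, using only information that is independent of $f$. Using the succinct representation from \cref{rmk:bipartite half representation of KT}, for each seed $y \in \F_q$ define
\[
N_y(f) := \{g \in L : \psi_y(g) = \psi_y(f)\} = \psi_y^{-1}(\psi_y(f)),
\]
so the neighborhood of $f$ in $G^2[L]$ (excluding $f$ itself) is exactly $\bigl(\bigcup_{y \in \F_q} N_y(f)\bigr) \setminus \{f\}$. Hence it suffices to show that $\bigl|\bigcup_{y \in \F_q} N_y(f)\bigr|$ does not depend on $f$.

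The key observation is that for any subset $S = \{y_1, \ldots, y_k\} \subseteq \F_q$ of distinct seeds we have
\[
\bigcap_{y \in S} N_y(f) \;=\; \psi_{y_1, \ldots, y_k}^{-1}\bigl(\psi_{y_1, \ldots, y_k}(f)\bigr),
\]
where $\psi_{y_1, \ldots, y_k} : \F_q^n \to \F_q^{k(s+1)}$ is the concatenated derivative map. By Hermite interpolation (\cref{lem: hermite general}), the map $\psi_{y_1, \ldots, y_k}$ restricted to polynomials of degree less than $k(s+1)$ is a bijection onto $\F_q^{k(s+1)}$; consequently, on all of $\F_q^n$ this map is surjective with every fiber of size $q^{n - k(s+1)}$ when $n \geq k(s+1)$, and injective when $n \leq k(s+1)$ (so every nonempty fiber has size exactly $1$). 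Since $f$ itself lies in the intersection, the fiber through $f$ is nonempty, and thus
\[
\left|\bigcap_{y \in S} N_y(f)\right| \;=\; \max\bigl(1,\, q^{n - k(s+1)}\bigr),
\]
a quantity depending only on $k = |S|$ and not on $f$ or the particular choice of $S$.

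By the inclusion--exclusion principle,
\[
\left|\bigcup_{y \in \F_q} N_y(f)\right|
\;=\; \sum_{k=1}^{q} (-1)^{k+1} \binom{q}{k} \cdot \max\bigl(1,\, q^{n - k(s+1)}\bigr),
\]
which is manifestly independent of $f$. Subtracting $1$ to discard the self-contribution $f \in N_y(f)$ then yields a common value for the degree of every vertex, so $G^2[L]$ is regular. The main (minor) subtlety is merely verifying that \cref{lem: hermite general}, which is stated with polynomials of degree at most $k(s+1)$, gives the right fiber-size statement; this follows from the Chinese remainder theorem identification $\F_q[Y]/\prod_i (Y - y_i)^{s+1} \cong \prod_i \F_q[Y]/(Y - y_i)^{s+1}$ applied on the domain of polynomials of degree less than $k(s+1)$.
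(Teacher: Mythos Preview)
Your proof is correct, but it takes a more computational route than the paper's. You compute $\bigl|\bigcup_y N_y(f)\bigr|$ by full inclusion--exclusion, invoking Hermite interpolation for every $k \le q$ to pin down each $k$-fold intersection size as $\max\bigl(1, q^{n-k(s+1)}\bigr)$. One small point worth making explicit: to conclude that ``every fiber has size $q^{n-k(s+1)}$'' in the surjective regime you are tacitly using that $\psi_{y_1,\ldots,y_k}$ is $\F_q$-linear (\cref{cor: seedy is fq linear}), so that all fibers are cosets of the kernel; surjectivity alone would not force equal fiber sizes.

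The paper, by contrast, avoids inclusion--exclusion entirely by exploiting this same linearity directly. In your notation, since each $\psi_y$ is linear one has $N_y(f) = f + \ker\psi_y$, hence $\bigcup_y N_y(f) = f + \bigcup_y \ker\psi_y$, whose cardinality is visibly independent of $f$. The paper phrases this through Taylor polynomials, identifying $\ker\psi_y$ with $\bigl\{\sum_{i=s+1}^{n-1} a_i(x-y)^i : a_i \in \F_q\bigr\}$ and observing that $\Gamma(f)$ is a translate of this union by $f$, but the underlying idea is simply translation. Your approach has the side benefit of yielding an explicit closed-form expression for the degree; the paper's is shorter and needs only the linearity of $\psi_y$, not $k$-point Hermite interpolation or an alternating sum over all subsets of $\F_q$.
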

\begin{proof}
    Let $T_a^n[f](x)=\sum_{i=0}^n\frac{f^{(n)}(a)}{i!}(x-a)^i$ be the $n$-th Taylor polynomial of $f$ at $a$. Then we claim that $\psi_y(f)=\psi_y(g)$ for any $y\in\F_q$ if and only if $T_y^s[f](x)=T_y^s[g](x)$ as polynomials. For the forward direction, we note that $\psi_y(f)=\psi_y(g)$ exactly gives us that $f^{(i)}(y)=g^{(i)}(y)$ for $i\in\{0,\dots,s\}$, immediately implying that $T_y^s[f](x)=T_y^s[g](x)$. Conversely, if $T_y^s[f](x)=T_y^s[g](x)$ as polynomials, then their coefficients must be equivalent. Thus, $f^{(i)}(y)=g^{(i)}(y)$ for $i\in\{0,\dots,s\}$, meaning that $\psi_y(f)=\psi_y(g)$.

    With this claim in hand, we can use \cref{rmk:bipartite half representation of KT} to see that $(f,g)$ is an edge in $G^2[L]$ if and only if there exists some $y\in\F_q$ such that $T_y^s[f](x)=T_y^s[g](x)$. In other words, the neighbors of $f$ must have the same $s$-th Taylor polynomial at some $y\in\F_q$. More formally, the neighbor set of $f$ is $\Gamma(f)=\left\{T_y^s[f](x)+\sum_{i=s+1}^{n-1} a_i(x-y)^i\mid a_{s+1},\dots,a_n,y\in\F_q\right\}$. Thus, the number of neighbors is $\abs{\Gamma(f)}=\abs{\left\{\sum_{i=s+1}^{n-1} a_i(x-y)^i\mid a_{s+1},\dots,a_n,y\in\F_q\right\}}$, which does not depend on $f$. Therefore, the degree of any vertex is the same and $G^2[L]$ is regular.
\end{proof}

\subsection{Free group action on the bipartite half}
Now that we have shown that the bipartite half generally preserves lossless expansion, we will consider it instantiated with the KT graph  and show that multiplication by elements of $\F_q$ constitutes a free group action on this resulting graph (with one node removed).

Our action of $\F_q$ on the bipartite half of the KT graph is directly by multiplication in $\F_q$.
\begin{definition}\label{def:Fq action on KT bipartite half}
    Let $G=(L\sqcup R,E)$ be the KT graph and $G^2[L]$ be its bipartite half. We define the action of $\F_q$ on $G^2[L]$ as follows: for any $\alpha,y\in\F_q$ we have $(\alpha \cdot f)(y)=\alpha \cdot f(y)$ where the latter multiplication is in $\F_q$.
\end{definition}

We now show that $G^2[L]$ without the zero polynomial is $\F_q$-invariant and that this is a free group action.
\begin{lemma}\label{lem:Fq action on KT bipartite half}
    Let $G=(L\sqcup R,E)$ be the KT graph and $H=G^2[L]\setminus\{0\}$ be its bipartite half without the zero polynomial. Consider the action of $\F_q$ on $G^2[L]$ as defined in \cref{def:Fq action on KT bipartite half}. Then $H$ is $\F_q$-invariant and this action is free.
\end{lemma}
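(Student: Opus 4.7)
The plan is to verify the group action axioms, then $\F_q^*$-invariance of $H$, and finally freeness, using the succinct edge description from \cref{rmk:bipartite half representation of KT} together with the linearity of $\seed_y$ (\cref{cor: seedy is fq linear}).

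First I would record that \cref{def:Fq action on KT bipartite half} really defines a (left) action of the multiplicative group $\F_q^*$ on $L = \F_q^n$: the identity $1 \in \F_q^*$ satisfies $(1\cdot f)(y) = f(y)$ for all $y$, and for $\alpha,\beta\in\F_q^*$ the compatibility $((\alpha\beta)\cdot f)(y) = (\alpha\beta) f(y) = \alpha(\beta f(y)) = (\alpha\cdot(\beta\cdot f))(y)$ follows from associativity of multiplication in $\F_q$. Removing the zero polynomial is harmless for checking this; it just serves to keep the orbit of every element in $H$ nonzero, since $\alpha\neq 0$ and $f\neq 0$ in the integral domain $\F_q[X]$ force $\alpha f\neq 0$. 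Thus scalar multiplication sends vertices of $H$ to vertices of $H$.

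Next I would establish $\F_q^*$-invariance of the edge set of $H$. By \cref{rmk:bipartite half representation of KT}, $(f,g)$ is an edge of $G^2[L]$ iff there exists $y\in\F_q$ with $\seed_y(f)=\seed_y(g)$. By \cref{cor: seedy is fq linear}, $\seed_y$ is $\F_q$-linear, so for any $\alpha\in\F_q^*$,
\[
\seed_y(\alpha f) \;=\; \alpha\,\seed_y(f) \;=\; \alpha\,\seed_y(g) \;=\; \seed_y(\alpha g).
\]
Hence $(\alpha f,\alpha g)$ is also an edge of $G^2[L]$, and since neither endpoint is zero, it is an edge of $H$. This is precisely the condition in \cref{def:G-invariant graph}, so $H$ is $\F_q^*$-invariant.

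Finally, for freeness (\cref{def:free group action}), suppose $\alpha\cdot f = f$ for some $f\in H$ and $\alpha\in\F_q^*$. Then $(\alpha-1)f = 0$ as a polynomial in $\F_q[X]$; since $\F_q[X]$ is an integral domain and $f\neq 0$ (because $f\in H$), we must have $\alpha - 1 = 0$, i.e.\ $\alpha = 1_{\F_q^*}$. The only potential snag in the whole argument is making sure one excludes the zero polynomial both to obtain a well-defined group action by $\F_q^*$ (avoiding the noninvertible element $0\in\F_q$) and to invoke the integral-domain cancellation in the freeness step—this is exactly why \cref{lem:Fq action on KT bipartite half} is stated for $H=G^2[L]\setminus\{0\}$ rather than for $G^2[L]$ itself.
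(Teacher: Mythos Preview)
Your proposal is correct and follows essentially the same route as the paper: use \cref{rmk:bipartite half representation of KT} for the edge description, apply the $\F_q$-linearity of $\seed_y$ from \cref{cor: seedy is fq linear} to get invariance, and argue freeness from $f\neq 0$. You add a bit more scaffolding (explicitly checking the action axioms and that nonzero vectors stay nonzero), which is fine but not needed beyond what the paper writes.
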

\begin{proof}
    To show that $H$ is $\F_q$-invariant, we must prove that for any $(f,g)\in E^2[L]$ and $\alpha\in \F_q$ we have $(\alpha\cdot f,\alpha\cdot g)\in E^2[L]$. From \cref{rmk:bipartite half representation of KT} we know that $E^2[L]=\{(f,g)\mid \exists y\in\F_q,\ \psi_y(f)=\psi_y(g)\}$. Thus, we must equivalently show that if $\psi_y(f)=\psi_y(g)$ for some $y\in\F_q$, then $\psi_y(\alpha\cdot f)=\psi_y(\alpha\cdot g)$. This is immediate by \cref{cor: seedy is fq linear} because $\psi_y$ being $\F_q$-linear allows us to compute
    \begin{align*}
        \psi_y(\alpha\cdot f)=\alpha\psi_y(f)=\alpha\psi_y(g)=\psi_y(\alpha\cdot g),
    \end{align*}
    showing that $(\alpha\cdot f,\alpha\cdot g)\in E^2[L]$.

    Next, we will show that this is a free group action. Consider any $f$ in the vertices of $H$ and $\alpha\in\F_q$. Since $H$ does not contain the zero polynomial, we know that $f$ is not identically zero. Thus, if $\alpha \cdot f=f$, it must be that $\alpha=1$, showing that the action is indeed free.
\end{proof}

\subsection{Plugging in parameters}
Finally, we plug in our two-sided lossless expander result from the KT graph to get \cref{thm:normal lossless expander}.

\begin{proof}[Proof of \cref{thm:normal lossless expander}]
We invoke \cref{thm: instantiate two sided lossless expanders} with $\alpha = 0.01, \eps_L = 0.001, \eps_R = 0.001$ to obtain a $(D_L, D_R)$-biregular two-sided $(K_L = N^{\delta}, \eps_L = 0.001, K_R = 0.002\cdot (1 / D_L)\cdot \min(M, N / M), \eps_R = 0.001)$ lossless expander $\GammaL: [N]\times [D_L]\to [M]$ where $D_L \le O(\log^{204}(N))$ and $N^{1.01\delta - o(1)}\le M\le D_L\cdot N^{1.01\delta}$.
We then apply \cref{lem:Two-sided lossless expander to lossless expander} to conclude the claim about expansion (since $\eps \ge (1 - \eps_L)(1 - \eps_R)$) and the degree bound. We then apply \cref{lem:KT bipartite half is regular} to show the bipartite half is regular. The claim about the free $\F_q$ action comes from \cref{lem:Fq action on KT bipartite half} by removing the vertex corresponding to the zero polynomial from the bipartite half of the KT graph. 
Lastly, we compute that:
\begin{align*}
K 
& = \min\left(K_L, K_R / D_L\right)\\
& = \min\left(K_L, (1 / 500)\cdot (1 / D_L^2)\cdot M,  (1 / 500)\cdot (1 / D_L^2)\cdot (N / M)\right)\\
& = \min\left(N^{\delta}, (1 / 500) \cdot N^{1.01\delta - o(1)},  (1 / 500)\cdot N^{1 - 1.01\delta - o(1)}\right)\\
& = \min\left(N^{\delta}, (1 / 500)\cdot N^{1 - 1.01\delta - o(1)}\right)\\
& = \min\left(N^{\delta}, N^{1 - 1.01\delta - o(1)}\right)\\
\end{align*}
and the bound on maximum size of sets that expand follows.
Explicitness of this graph follows from \cref{claim: explicitness of non-bipartite graph}.
\end{proof}

\section*{Acknowledgements}
We would like to thank Itay Kalev for helpful feedback on our paper, sharing the idea behind \cref{lem:KT bipartite half is regular}, and generously allowing us to include it in the paper. We are also grateful to insightful suggestions from anonymous reviewers that led to simplifying some of our proofs.

\printbibliography

@misc{hsieh_explicit_2024,
	title = {Explicit {Two}-{Sided} {Vertex} {Expanders} {Beyond} the {Spectral} {Barrier}},
	url = {http://arxiv.org/abs/2411.11627},
	doi = {10.48550/arXiv.2411.11627},
	abstract = {We construct the first explicit two-sided vertex expanders that bypass the spectral barrier. Previously, the strongest known explicit vertex expanders were given by \$d\$-regular Ramanujan graphs, whose spectral properties imply that every small subset of vertices \$S\$ has at least \$0.5d{\textbar}S{\textbar}\$ distinct neighbors. However, it is possible to construct Ramanujan graphs containing a small set \$S\$ with no more than \$0.5d{\textbar}S{\textbar}\$ neighbors. In fact, no explicit construction was known to break the \$0.5 d\$-barrier. In this work, we give an explicit construction of an infinite family of \$d\$-regular graphs (for large enough \$d\$) where every small set expands by a factor of \${\textbackslash}approx 0.6d\$. More generally, for large enough \$d\_1,d\_2\$, we give an infinite family of \$(d\_1,d\_2)\$-biregular graphs where small sets on the left expand by a factor of \${\textbackslash}approx 0.6d\_1\$, and small sets on the right expand by a factor of \${\textbackslash}approx 0.6d\_2\$. In fact, our construction satisfies an even stronger property: small sets on the left and right have unique-neighbor expansion \$0.6d\_1\$ and \$0.6d\_2\$ respectively. Our construction follows the tripartite line product framework of Hsieh, McKenzie, Mohanty \& Paredes, and instantiates it using the face-vertex incidence of the \$4\$-dimensional Ramanujan clique complex as its base component. As a key part of our analysis, we derive new bounds on the triangle density of small sets in the Ramanujan clique complex.},
	urldate = {2025-01-27},
	publisher = {arXiv},
	author = {Hsieh, Jun-Ting and Lin, Ting-Chun and Mohanty, Sidhanth and O'Donnell, Ryan and Zhang, Rachel Yun},
	month = nov,
	year = {2024},
	note = {arXiv:2411.11627 [math]},
	keywords = {Computer Science - Computational Complexity, Computer Science - Data Structures and Algorithms, Computer Science - Discrete Mathematics, Mathematics - Combinatorics},
}

@book{dummit_abstract_2003,
	edition = {3},
	title = {Abstract {Algebra}},
	isbn = {978-0-471-43334-7},
	abstract = {This revision of Dummit and Foote's widely acclaimed introduction to abstract algebra helps students experience the power and beauty that develops from the rich interplay between different areas of mathematics.   The book carefully develops the theory of different algebraic structures, beginning from basic definitions to some in-depth results, using numerous examples and exercises to aid the student's understanding. With this approach, students gain an appreciation for how mathematical structures and their interplay lead to powerful results and insights in a number of different settings.  The text is designed for a full-year introduction to abstract algebra at the advanced undergraduate or graduate level, but contains substantially more material than would normally be covered in one year. Portions of the book may also be used for various one-semester topics courses in advanced algebra, each of which would provide a solid background for a follow-up course delving more deeply into one of many possible areas: algebraic number theory, algebraic topology, algebraic geometry, representation theory, Lie groups, etc.},
	language = {en},
	publisher = {John Wiley \& Sons},
	author = {Dummit, David S. and Foote, Richard M.},
	month = jul,
	year = {2003},
	note = {Google-Books-ID: KJDBQgAACAAJ},
	keywords = {Mathematics / Algebra / General, Mathematics / General},
}

@inproceedings{hsieh_explicit_2024-1,
	address = {New York, NY, USA},
	series = {{STOC} 2024},
	title = {Explicit {Two}-{Sided} {Unique}-{Neighbor} {Expanders}},
	url = {https://dl.acm.org/doi/10.1145/3618260.3649705},
	doi = {10.1145/3618260.3649705},
	abstract = {We study the problem of constructing explicit sparse graphs that exhibit strong vertex expansion. Our main result is the first two-sided construction of imbalanced unique-neighbor expanders, meaning bipartite graphs where small sets contained in both the left and right bipartitions exhibit unique-neighbor expansion, along with algebraic properties relevant to constructing quantum codes. Our constructions are obtained from instantiations of the tripartite line product of a large tripartite spectral expander and a sufficiently good constant-sized unique-neighbor expander, a new graph product we defined that generalizes the line product and the routed product of previous well-known works. To analyze the vertex expansion of graphs arising from the tripartite line product, we develop a sharp characterization of subgraphs that can arise in bipartite spectral expanders, generalizing previously known results, which may be of independent interest. By picking appropriate graphs to apply our product to, we give a strongly explicit construction of an infinite family of (d1,d2)-biregular graphs (Gn)n≥ 1 (for large enough d1 and d2) where all sets S with fewer than a small constant fraction of vertices have Ω(d1· {\textbar}S{\textbar}) unique-neighbors (assuming d1 ≤ d2). Additionally, we can also guarantee that subsets of vertices of size up to exp(Ω(√log{\textbar}V(Gn){\textbar})) expand losslessly.},
	urldate = {2024-08-29},
	booktitle = {Proceedings of the 56th {Annual} {ACM} {Symposium} on {Theory} of {Computing}},
	publisher = {Association for Computing Machinery},
	author = {Hsieh, Jun-Ting and McKenzie, Theo and Mohanty, Sidhanth and Paredes, Pedro},
	month = jun,
	year = {2024},
	pages = {788--799},
}

@incollection{goldreich_world_2011,
	address = {Berlin, Heidelberg},
	title = {In a {World} of {P}={BPP}},
	isbn = {978-3-642-22670-0},
	url = {https://doi.org/10.1007/978-3-642-22670-0_20},
	abstract = {We show that proving results such as \${\textbackslash}mathcal\{BPP\}={\textbackslash}mathcal\{P\}\$essentially necessitate the construction of suitable pseudorandom generators (i.e., generators that suffice for such derandomization results). In particular, the main incarnation of this equivalence refers to the standard notion of uniform derandomization and to the corresponding pseudorandom generators (i.e., the standard uniform notion of “canonical derandomizers”). This equivalence bypasses the question of which hardness assumptions are required for establishing such derandomization results, which has received considerable attention in the last decade or so (starting with Impagliazzo and Wigderson [JCSS, 2001]).},
	language = {en},
	urldate = {2024-09-03},
	booktitle = {Studies in {Complexity} and {Cryptography}. {Miscellanea} on the {Interplay} between {Randomness} and {Computation}: {In} {Collaboration} with {Lidor} {Avigad}, {Mihir} {Bellare}, {Zvika} {Brakerski}, {Shafi} {Goldwasser}, {Shai} {Halevi}, {Tali} {Kaufman}, {Leonid} {Levin}, {Noam} {Nisan}, {Dana} {Ron}, {Madhu} {Sudan}, {Luca} {Trevisan}, {Salil} {Vadhan}, {Avi} {Wigderson}, {David} {Zuckerman}},
	publisher = {Springer},
	author = {Goldreich, Oded},
	editor = {Goldreich, Oded},
	year = {2011},
	doi = {10.1007/978-3-642-22670-0_20},
	keywords = {BPP, FPTAS, derandomization, promise problems, pseudorandom generators, randomized constructions, search problems},
	pages = {191--232},
}

@inproceedings{doron_derandomization_2023,
	title = {Derandomization with {Minimal} {Memory} {Footprint}},
	copyright = {https://creativecommons.org/licenses/by/4.0/legalcode},
	url = {https://drops.dagstuhl.de/entities/document/10.4230/LIPIcs.CCC.2023.11},
	doi = {10.4230/LIPIcs.CCC.2023.11},
	abstract = {Existing proofs that deduce BPL = 𝐋 from circuit lower bounds convert randomized algorithms into deterministic algorithms with large constant overhead in space. We study space-bounded derandomization with minimal footprint, and ask what is the minimal possible space overhead for derandomization. We show that BPSPACE[S] ⊆ DSPACE[c ⋅ S] for c ≈ 2, assuming space-efficient cryptographic PRGs, and, either: (1) lower bounds against bounded-space algorithms with advice, or: (2) lower bounds against certain uniform compression algorithms. Under additional assumptions regarding the power of catalytic computation, in a new setting of parameters that was not studied before, we are even able to get c ≈ 1.
Our results are constructive: Given a candidate hard function (and a candidate cryptographic PRG) we show how to transform the randomized algorithm into an efficient deterministic one. This follows from new PRGs and targeted PRGs for space-bounded algorithms, which we combine with novel space-efficient evaluation methods. A central ingredient in all our constructions is hardness amplification reductions in logspace-uniform TC⁰, that were not known before.},
	language = {en},
	urldate = {2024-09-03},
	booktitle = {{DROPS}-{IDN}/v2/document/10.4230/{LIPIcs}.{CCC}.2023.11},
	publisher = {Schloss Dagstuhl – Leibniz-Zentrum für Informatik},
	author = {Doron, Dean and Tell, Roei},
	year = {2023},
}

@article{upfal_how_1987,
	title = {How to share memory in a distributed system},
	volume = {34},
	issn = {0004-5411},
	url = {https://dl.acm.org/doi/10.1145/7531.7926},
	doi = {10.1145/7531.7926},
	abstract = {The power of shared-memory in models of parallel computation is studied, and a novel distributed data structure that eliminates the need for shared memory without significantly increasing the run time of the parallel computation is described. More specifically, it is shown how a complete network of processors can deterministically simulate one PRAM step in O(log n/(log log n)2) time when both models use n processors and the size of the PRAM's shared memory is polynomial in n. (The best previously known upper bound was the trivial O(n)). It is established that this upper bound is nearly optimal, and it is proved that an on-line simulation of T PRAM steps by a complete network of processors requires Ω(T(log n/ log log n)) time.A simple consequence of the upper bound is that an Ultracomputer (the currently feasible general-purpose parallel machine) can simulate one step of a PRAM (the most convenient parallel model to program) in O((log n)2log log n) steps.},
	number = {1},
	urldate = {2024-08-31},
	journal = {J. ACM},
	author = {Upfal, Eli and Wigderson, Avi},
	month = jan,
	year = {1987},
	pages = {116--127},
}

@article{buhrman_are_2002,
	title = {Are {Bitvectors} {Optimal}?},
	volume = {31},
	issn = {0097-5397},
	url = {https://epubs.siam.org/doi/10.1137/S0097539702405292},
	doi = {10.1137/S0097539702405292},
	abstract = {We show that there exists a constant \${\textbackslash}omega {\textless} 1\$ such that the fully dynamic d-dimensional orthogonal range reporting problem for any constant \$d {\textbackslash}ge 2\$ can be solved in time \$O({\textbackslash}log{\textasciicircum}\{{\textbackslash}omega+d-2\} n)\$ for updates and time \$O(({\textbackslash}log n / {\textbackslash}log{\textbackslash}log n){\textasciicircum}\{d-1\} + r)\$ for queries. Here n is the number of points stored and r is the number of points reported. The space usage is \$O(n {\textbackslash}log{\textasciicircum}\{{\textbackslash}omega+d-2\} n)\$. For \$d=2\$ our results are optimal in terms of time per operation, and this is the main contribution of this paper. Also for \$d=2\$, we give a new improved fully dynamic structure supporting 3-sided queries. The model of computation is a unit cost RAM@. We order the coordinates of points using list order as defined in the paper.},
	number = {6},
	urldate = {2024-08-31},
	journal = {SIAM Journal on Computing},
	author = {Buhrman, H. and Miltersen, P. B. and Radhakrishnan, J. and Venkatesh, S.},
	month = jan,
	year = {2002},
	note = {Publisher: Society for Industrial and Applied Mathematics},
	pages = {1723--1744},
}

@article{ta-shma_lossless_2007,
	title = {Lossless {Condensers}, {Unbalanced} {Expanders}, {And} {Extractors}},
	volume = {27},
	issn = {1439-6912},
	url = {https://doi.org/10.1007/s00493-007-0053-2},
	doi = {10.1007/s00493-007-0053-2},
	abstract = {Trevisan showed that many pseudorandom generator constructions give rise to constructions of explicit extractors. We show how to use such constructions to obtain explicit lossless condensers. A lossless condenser is a probabilistic map using only O(logn) additional random bits that maps n bits strings to poly(logK) bit strings, such that any source with support size K is mapped almost injectively to the smaller domain. Our construction remains the best lossless condenser to date.},
	language = {en},
	number = {2},
	urldate = {2024-08-29},
	journal = {Combinatorica},
	author = {Ta-Shma, Amnon and Umans, Christopher and Zuckerman, David},
	month = mar,
	year = {2007},
	keywords = {68Q01},
	pages = {213--240},
}

@article{dvir_extensions_2013,
	title = {Extensions to the {Method} of {Multiplicities}, with {Applications} to {Kakeya} {Sets} and {Mergers}},
	volume = {42},
	issn = {0097-5397},
	url = {https://epubs.siam.org/doi/10.1137/100783704},
	doi = {10.1137/100783704},
	abstract = {A merger is a probabilistic procedure which extracts the randomness out of any (arbitrarily correlated) set of random variables, as long as one of them is uniform. Our main result is an efficient, simple, and optimal (to constant factors) merger, which, for k random variables on n bits each, uses an \$O({\textbackslash}log(nk))\$ seed, and whose error is \$1/nk\$. Our merger can be viewed as a derandomized version of the merger of Lu et al. [Extractors: Optimal up to constant factors, in Proceedings of the 35th Annual ACM Symposium on Theory of Computing, ACM, New York, 2003, pp. 602–611]. Its analysis generalizes the recent resolution of the Kakeya problem in finite fields of Dvir [J. Amer. Math. Soc., 22 (2009), pp. 1093–1097]. Following the plan set forth by Ta-Shma [Refining Randomness, Ph.D. thesis, The Hebrew University, Jerusalem, Israel, 1996] who defined mergers as part of this plan, our merger provides the last “missing link” to a simple and modular construction of extractors for all entropies, which is optimal to constant factors in all parameters. This complements the elegant construction of such extractors given by Guruswami, Umans, and Vadhan [Unbalanced expanders and randomness extractors from Parvaresh-Vardy codes, in CCC '07: Proceedings of the Twenty-Second Annual IEEE Conference on Computaional Complexity, IEEE Computer Society, Washington, DC, 2007, pp. 96–108]. We also give simple extensions of our merger in two directions. First, we generalize it to handle the case where no source is uniform—in that case the merger will extract the entropy present in the most random of the given sources. Second, we observe that the merger works just as well in the computational setting, when the sources are efficiently samplable, and computational notions of entropy replace the information theoretic ones.},
	number = {6},
	urldate = {2024-08-31},
	journal = {SIAM Journal on Computing},
	author = {Dvir, Zeev and Kopparty, Swastik and Saraf, Shubhangi and Sudan, Madhu},
	month = jan,
	year = {2013},
	note = {Publisher: Society for Industrial and Applied Mathematics},
	pages = {2305--2328},
}

@misc{lin_good_2022,
	title = {Good quantum {LDPC} codes with linear time decoder from lossless expanders},
	url = {http://arxiv.org/abs/2203.03581},
	doi = {10.48550/arXiv.2203.03581},
	abstract = {Quantum low-density parity-check (qLDPC) codes are quantum stabilizer codes where each stabilizer acts on a constant number of qubits and each qubit is acted on by a constant number of stabilizers. We study qLDPC codes constructed from balanced products and lossless expanders. We found that assuming the existence of 2-sided lossless expander graphs with free group action, the resulting qLDPC codes have constant rate, linear distance, and linear time decoders.},
	urldate = {2024-08-31},
	publisher = {arXiv},
	author = {Lin, Ting-Chun and Hsieh, Min-Hsiu},
	month = mar,
	year = {2022},
	note = {arXiv:2203.03581 [quant-ph]},
	keywords = {Quantum Physics},
}

@article{kopparty_high-rate_2014,
	title = {High-rate codes with sublinear-time decoding},
	volume = {61},
	issn = {0004-5411},
	url = {https://dl.acm.org/doi/10.1145/2629416},
	doi = {10.1145/2629416},
	abstract = {Locally decodable codes are error-correcting codes that admit efficient decoding algorithms; any bit of the original message can be recovered by looking at only a small number of locations of a corrupted codeword. The tradeoff between the rate of a code and the locality/efficiency of its decoding algorithms has been well studied, and it has widely been suspected that nontrivial locality must come at the price of low rate. A particular setting of potential interest in practice is codes of constant rate. For such codes, decoding algorithms with locality O(k∈) were known only for codes of rate ∈Ω(1/∈), where k is the length of the message. Furthermore, for codes of rate \&gt; 1/2, no nontrivial locality had been achieved.In this article, we construct a new family of locally decodable codes that have very efficient local decoding algorithms, and at the same time have rate approaching 1. We show that for every ∈ \&gt; 0 and α \&gt; 0, for infinitely many k, there exists a code C which encodes messages of length k with rate 1 − α, and is locally decodable from a constant fraction of errors using O(k∈) queries and time.These codes, which we call multiplicity codes, are based on evaluating multivariate polynomials and their derivatives. Multiplicity codes extend traditional multivariate polynomial codes; they inherit the local-decodability of these codes, and at the same time achieve better tradeoffs and flexibility in the rate and minimum distance.},
	number = {5},
	urldate = {2024-08-31},
	journal = {J. ACM},
	author = {Kopparty, Swastik and Saraf, Shubhangi and Yekhanin, Sergey},
	month = sep,
	year = {2014},
	pages = {28:1--28:20},
}

@article{sipser_expander_1996,
	title = {Expander codes},
	volume = {42},
	issn = {1557-9654},
	url = {https://ieeexplore.ieee.org/document/556667},
	doi = {10.1109/18.556667},
	abstract = {Using expander graphs, we construct a new family of asymptotically good, linear error-correcting codes. These codes have linear time sequential decoding algorithms and logarithmic time parallel decoding algorithms that use a linear number of processors. We present both randomized and explicit constructions of these codes. Experimental results demonstrate the good performance of the randomly chosen codes.},
	number = {6},
	urldate = {2024-08-29},
	journal = {IEEE Transactions on Information Theory},
	author = {Sipser, M. and Spielman, D.A.},
	month = nov,
	year = {1996},
	note = {Conference Name: IEEE Transactions on Information Theory},
	keywords = {Algorithm design and analysis, Bipartite graph, Circuits, Computational modeling, Decoding, Error correction codes, Fault tolerance, Graph theory, Linear code, Parity check codes},
	pages = {1710--1722},
}

@article{kahale_eigenvalues_1995,
	title = {Eigenvalues and expansion of regular graphs},
	volume = {42},
	issn = {0004-5411},
	url = {https://dl.acm.org/doi/10.1145/210118.210136},
	doi = {10.1145/210118.210136},
	abstract = {The spectral method is the best currently known technique to prove lower bounds on expansion. Ramanujan graphs, which have asymptotically optimal second eigenvalue, are the best-known explicit expanders. The spectral method yielded a lower bound of k/4 on the expansion of linear-sized subsets of k-regular Ramanujan graphs. We improve the lower bound on the expansion of Ramanujan graphs to approximately k/2. Moreover, we construct a family of k-regular graphs with asymptotically optimal second eigenvalue and linear expansion equal to k/2. This shows that k/2 is the best bound one can obtain using the second eigenvalue method. We also show an upper bound of roughly  1 + √k - 1  on the average degree of linear-sized induced subgraphs of Ramanujan graphs. This compares positively with the classical bound  2√k - 1. As a byproduct, we obtain improved results on random walks on expanders and construct selection networks (respectively, extrovert graphs) of smaller size (respectively, degree) than was previously known.},
	number = {5},
	urldate = {2024-08-29},
	journal = {J. ACM},
	author = {Kahale, Nabil},
	month = sep,
	year = {1995},
	pages = {1091--1106},
}

@incollection{golowich_new_2024,
	series = {Proceedings},
	title = {New {Explicit} {Constant}-{Degree} {Lossless} {Expanders}},
	url = {https://epubs.siam.org/doi/10.1137/1.9781611977912.177},
	abstract = {We present a new explicit construction of onesided bipartite lossless expanders of constant degree, with arbitrary constant ratio between the sizes of the two vertex sets. Our construction is simpler to state and analyze than the only prior construction of Capalbo, Reingold, Vadhan, and Wigderson (2002), and achieves improved parameters.We construct our lossless expanders by imposing the structure of a constant-sized lossless expander “gadget” within the neighborhoods of a large bipartite spectral expander; similar constructions were previously used to obtain the weaker notion of unique-neighbor expansion. Our analysis simply consists of elementary counting arguments and an application of the expander mixing lemma.* The full version of the paper can be accessed at https://arxiv.org/abs/2306.07551},
	urldate = {2024-08-29},
	booktitle = {Proceedings of the 2024 {Annual} {ACM}-{SIAM} {Symposium} on {Discrete} {Algorithms} ({SODA})},
	publisher = {Society for Industrial and Applied Mathematics},
	author = {Golowich, Louis},
	month = jan,
	year = {2024},
	doi = {10.1137/1.9781611977912.177},
	pages = {4963--4971},
}

@inproceedings{cohen_hdx_2023,
	title = {{HDX} {Condensers}},
	url = {https://ieeexplore.ieee.org/document/10353114},
	doi = {10.1109/FOCS57990.2023.00100},
	abstract = {More than twenty years ago, Capalbo, Rein-gold, Vadhan and Wigderson gave the first (and up to date only) explicit construction of a bipartite expander with almost full combinatorial expansion. The construction incorporates zig-zag ideas together with extractor technology, and is rather complicated. We give an alternative construction that builds upon recent constructions of hyper-regular, high-dimensional expanders. The new construction is, in our opinion, simple and elegant.Beyond demonstrating a new, surprising, and intriguing, application of high-dimensional expanders, the construction employs totally new ideas which we hope may lead to progress on the still remaining open problems in the area.},
	urldate = {2024-08-29},
	booktitle = {2023 {IEEE} 64th {Annual} {Symposium} on {Foundations} of {Computer} {Science} ({FOCS})},
	author = {Cohen, Itay and Roth, Roy and Ta-Shma, Amnon},
	month = nov,
	year = {2023},
	note = {ISSN: 2575-8454},
	keywords = {Computer science, High Dimensional Expanders, Lossless Condensers, Randomness Condensers, Spectral Graph Theory},
	pages = {1649--1664},
}

@inproceedings{ta-shma_better_2006,
	title = {Better lossless condensers through derandomized curve samplers},
	url = {https://ieeexplore.ieee.org/document/4031354},
	doi = {10.1109/FOCS.2006.18},
	abstract = {Lossless condensers are unbalanced expander graphs, with expansion close to optimal. Equivalently, they may be viewed as functions that use a short random seed to map a source on n bits to a source on many fewer bits while preserving all of the min-entropy. It is known how to build lossless condensers when the graphs are slightly unbalanced in the work of M. Capalbo et al. (2002). The highly unbalanced case is also important but the only known construction does not condense the source well. We give explicit constructions of lossless condensers with condensing close to optimal, and using near-optimal seed length. Our main technical contribution is a randomness-efficient method for sampling FD (where F is a field) with low-degree curves. This problem was addressed before in the works of E. Ben-Sasson et al. (2003) and D. Moshkovitz and R. Raz (2006) but the solutions apply only to degree one curves, i.e., lines. Our technique is new and elegant. We use sub-sampling and obtain our curve samplers by composing a sequence of low-degree manifolds, starting with high-dimension, low-degree manifolds and proceeding through lower and lower dimension manifolds with (moderately) growing degrees, until we finish with dimension-one, low-degree manifolds, i.e., curves. The technique may be of independent interest},
	urldate = {2024-08-29},
	booktitle = {2006 47th {Annual} {IEEE} {Symposium} on {Foundations} of {Computer} {Science} ({FOCS}'06)},
	author = {Ta-Shma, Amnon and Umans, Christopher},
	month = oct,
	year = {2006},
	note = {ISSN: 0272-5428},
	keywords = {Application software, Bipartite graph, Combinatorial mathematics, Computer science, Error correction codes, Fault tolerance, Graph theory, Routing, Sampling methods, Simultaneous localization and mapping},
	pages = {177--186},
}

@inproceedings{kalev_unbalanced_2022,
	title = {Unbalanced {Expanders} from {Multiplicity} {Codes}},
	copyright = {https://creativecommons.org/licenses/by/4.0/legalcode},
	url = {https://drops.dagstuhl.de/entities/document/10.4230/LIPIcs.APPROX/RANDOM.2022.12},
	doi = {10.4230/LIPIcs.APPROX/RANDOM.2022.12},
	abstract = {In 2007 Guruswami, Umans and Vadhan gave an explicit construction of a lossless condenser based on Parvaresh-Vardy codes. This lossless condenser is a basic building block in many constructions, and, in particular, is behind the state of the art extractor constructions.
We give an alternative construction that is based on Multiplicity codes. While the bottom-line result is similar to the GUV result, the analysis is very different. In GUV (and Parvaresh-Vardy codes) the polynomial ring is closed to a finite field, and every polynomial is associated with related elements in the finite field. In our construction a polynomial from the polynomial ring is associated with its iterated derivatives. Our analysis boils down to solving a differential equation over a finite field, and uses previous techniques, introduced by Kopparty (in [Swastik Kopparty, 2015]) for the list-decoding setting. We also observe that these (and more general) questions were studied in differential algebra, and we use the terminology and result developed there.
We believe these techniques have the potential of getting better constructions and solving the current bottlenecks in the area.},
	language = {en},
	urldate = {2024-08-27},
	booktitle = {{DROPS}-{IDN}/v2/document/10.4230/{LIPIcs}.{APPROX}/{RANDOM}.2022.12},
	publisher = {Schloss Dagstuhl – Leibniz-Zentrum für Informatik},
	author = {Kalev, Itay and Ta-Shma, Amnon},
	year = {2022},
}

@article{guruswami_unbalanced_2009,
	title = {Unbalanced expanders and randomness extractors from {Parvaresh}--{Vardy} codes},
	volume = {56},
	issn = {0004-5411},
	url = {https://dl.acm.org/doi/10.1145/1538902.1538904},
	doi = {10.1145/1538902.1538904},
	abstract = {We give an improved explicit construction of highly unbalanced bipartite expander graphs with expansion arbitrarily close to the degree (which is polylogarithmic in the number of vertices). Both the degree and the number of right-hand vertices are polynomially close to optimal, whereas the previous constructions of Ta-Shma et al. [2007] required at least one of these to be quasipolynomial in the optimal. Our expanders have a short and self-contained description and analysis, based on the ideas underlying the recent list-decodable error-correcting codes of Parvaresh and Vardy [2005]. Our expanders can be interpreted as near-optimal “randomness condensers,” that reduce the task of extracting randomness from sources of arbitrary min-entropy rate to extracting randomness from sources of min-entropy rate arbitrarily close to 1, which is a much easier task. Using this connection, we obtain a new, self-contained construction of randomness extractors that is optimal up to constant factors, while being much simpler than the previous construction of Lu et al. [2003] and improving upon it when the error parameter is small (e.g., 1/poly(n)).},
	number = {4},
	urldate = {2023-11-05},
	journal = {Journal of the ACM},
	author = {Guruswami, Venkatesan and Umans, Christopher and Vadhan, Salil},
	month = jul,
	year = {2009},
	keywords = {Expander graphs, condensers, error-correcting codes, list decoding, randomness extractors},
	pages = {20:1--20:34},
}

@inproceedings{capalbo_randomness_2002,
	address = {New York, NY, USA},
	series = {{STOC} '02},
	title = {Randomness conductors and constant-degree lossless expanders},
	isbn = {978-1-58113-495-7},
	url = {https://doi.org/10.1145/509907.510003},
	doi = {10.1145/509907.510003},
	abstract = {The main concrete result of this paper is the first explicit construction of constant degree lossless expanders. In these graphs, the expansion factor is almost as large as possible: (1—ε)D, where D is the degree and ε is an arbitrarily small constant. The best previous explicit constructions gave expansion factor D/2, which is too weak for many applications. The D/2 bound was obtained via the eigenvalue method, and is known that that method cannot give better bounds.The main abstract contribution of this paper is the introduction and initial study of randomness conductors, a notion which generalizes extractors, expanders, condensers and other similar objects. In all these functions, certain guarantee on the input "entropy" is converted to a guarantee on the output "entropy". For historical reasons, specific objects used specific guarantees of different flavors. We show that the flexibility afforded by the conductor definition leads to interesting combinations of these objects, and to better constructions such as those above.The main technical tool in these constructions is a natural generalization to conductors of the zig-zag graph product, previously defined for expanders and extractors.},
	urldate = {2023-01-27},
	booktitle = {Proceedings of the thiry-fourth annual {ACM} symposium on {Theory} of computing},
	publisher = {Association for Computing Machinery},
	author = {Capalbo, Michael and Reingold, Omer and Vadhan, Salil and Wigderson, Avi},
	month = may,
	year = {2002},
	keywords = {condensers, expander graphs, extractors, graph products},
	pages = {659--668},
}

@book{ritt1950differential,
  title={Differential algebra},
  author={Ritt, Joseph Fels},
  volume={33},
  year={1950},
  publisher={American Mathematical Soc.}
}

@inproceedings{MM21girthramanujan,
  author       = {Theo McKenzie and
                  Sidhanth Mohanty},
  editor       = {Nikhil Bansal and
                  Emanuela Merelli and
                  James Worrell},
  title        = {High-Girth Near-Ramanujan Graphs with Lossy Vertex Expansion},
  booktitle    = {48th International Colloquium on Automata, Languages, and Programming,
                  {ICALP} 2021, July 12-16, 2021, Glasgow, Scotland (Virtual Conference)},
  series       = {LIPIcs},
  volume       = {198},
  pages        = {96:1--96:15},
  publisher    = {Schloss Dagstuhl - Leibniz-Zentrum f{\"{u}}r Informatik},
  year         = {2021},
  url          = {https://doi.org/10.4230/LIPIcs.ICALP.2021.96},
  doi          = {10.4230/LIPICS.ICALP.2021.96},
  timestamp    = {Wed, 21 Aug 2024 22:46:00 +0200},
  biburl       = {https://dblp.org/rec/conf/icalp/McKenzieM21.bib},
  bibsource    = {dblp computer science bibliography, https://dblp.org}
}

@inproceedings{chen25unique,
  author       = {Yeyuan Chen},
  editor       = {Yossi Azar and
                  Debmalya Panigrahi},
  title        = {Unique-neighbor Expanders with Better Expansion for Polynomial-sized
                  Sets},
  booktitle    = {Proceedings of the 2025 Annual {ACM-SIAM} Symposium on Discrete Algorithms,
                  {SODA} 2025, New Orleans, LA, USA, January 12-15, 2025},
  pages        = {3335--3362},
  publisher    = {{SIAM}},
  year         = {2025},
  url          = {https://doi.org/10.1137/1.9781611978322.108},
  doi          = {10.1137/1.9781611978322.108},
  timestamp    = {Tue, 28 Jan 2025 14:38:41 +0100},
  biburl       = {https://dblp.org/rec/conf/soda/Chen25.bib},
  bibsource    = {dblp computer science bibliography, https://dblp.org}
}

\appendix

\section{Explicitness}\label{sec:expliciteness}

\begin{claim}\label{claim: explicitness of graph}
The KT graph $G$ as defined in \cref{def: KTGraph full} is explicit, i.e., the left neighborhood function $\GammaL: \F_q^n\times \F_q\to \F_q^{s+2}$ and right neighborhood function $\GammaR: \F_q^{s+2}\times \F_q^{n-(s+1)} \to \F_q^n$ can be computed in $\poly(n, \log(q))$ time. 
\end{claim}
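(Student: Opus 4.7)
The plan is to exhibit two explicit algorithms, one for $\GammaL$ and one for $\GammaR$, each using only $\poly(n)$ many field operations in $\F_q$. Since every field operation in $\F_q$ can be performed in $\poly(\log q)$ time, this suffices to establish the claimed $\poly(n, \log q)$ runtime.

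For $\GammaL(f, y)$: View $f$ as the coefficient vector of a polynomial of degree at most $n-1$. Because $\mathrm{char}(\F_q) \ge n$, the iterated derivative of a monomial is a well-defined monomial in $\F_q[X]$, so the derivatives $f^{(0)}, f^{(1)}, \ldots, f^{(s)}$ can be computed coefficient-wise in $O(n^2)$ field operations in total. Each derivative is then evaluated at $y$ using Horner's rule in $O(n)$ more operations, and the output $(y, f^{(0)}(y), \ldots, f^{(s)}(y))$ is assembled.

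For $\GammaR(w, i)$ with $w = (y, z_0, \ldots, z_s)$: The set of left neighbors of $w$ is precisely the preimage $\seed_y^{-1}(z_0, \ldots, z_s)$, which by \cref{cor: seedy is fq linear} is an affine subspace $f_0 + \ker(\seed_y) \subseteq \F_q^n$. A particular solution $f_0$ exists by \cref{lem: hermite general} applied with $k=1$, and can be taken to be the Taylor-type polynomial $\sum_{j=0}^{s} c_j (X-y)^j$, where the scalars $c_j$ are determined explicitly from the $z_j$'s (the factorials appearing are invertible since $\mathrm{char}(\F_q) \ge n$); computing the powers $(X-y)^j$ by repeated multiplication and summing uses $\poly(n)$ field operations. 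The kernel $\ker(\seed_y)$ consists of degree-at-most-$(n-1)$ polynomials vanishing to order $s+1$ at $y$, i.e., polynomials of the form $(X-y)^{s+1} h$ for $h \in \F_q[X]$ of degree at most $n-s-2$; these are in bijection with $\F_q^{n-s-1}$, giving exactly $q^{n-s-1}$ elements, matching the right degree. The algorithm decodes the input index $i \in \{0, 1, \ldots, q^{n-s-1}-1\}$ via its base-$q$ expansion into the coefficient vector of some $h_i$, and returns $f_0 + (X-y)^{s+1} h_i$; the final polynomial multiplication and addition take $\poly(n)$ operations.

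There is no real obstacle here, since everything reduces to finite-field arithmetic together with the Hermite interpolation already established in \cref{lem: hermite general}. The only point requiring care is to fix a canonical bijection between input indices $i$ and kernel elements so that the enumeration is well-defined and injective; the base-$q$ identification above does exactly this, and it is clearly consistent across different right vertices $w$.
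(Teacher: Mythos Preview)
Your proof is correct and follows essentially the same approach as the paper: both compute $\GammaL$ by taking derivatives and evaluating, and both compute $\GammaR$ by writing the preimage $\seed_y^{-1}(z)$ as a particular solution plus a parametrization of $\ker(\seed_y)$. The only difference is that the paper invokes generic Gaussian elimination to find a particular solution and a kernel basis, whereas you write down both explicitly---$f_0 = \sum_{j=0}^{s} (z_j/j!)(X-y)^j$ and $\ker(\seed_y) = (X-y)^{s+1}\cdot P_{<n-s-1}$---which is a bit cleaner but not a genuinely different argument.
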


\begin{proof}
To compute $\GammaL(f, y)$, we treat $f$ as an element of $\F_q[X]$ of degree at most $n-1$, and map it to $(y, f^{(0)}(y), \dots, f^{(s)}(y))$. We can compute derivatives of $f$ and evaluate it at $y$ in time $\poly(n, \log(q))$ and hence explicitly compute $\GammaL(f, y)$.

To compute $\GammaR(z, t)$, we proceed as follows. Let $z = (y, w)$ where $y\in \F_q$ and $w\in \F_q^{s+1}$.
Then, we need to find $f$ such that $\GammaL(f, y) = z$.
Define $\seed_y: \F_q^n\to \F_q^{s+1}$ as $\seed_y(f) = (f^{(0)}(y), \dots, f^{(s)}(y))$.
By \cref{thm: G right regular and right degree}, $\seed_y$ is a full rank $\F_q$-linear map.
As $n > s+1$, kernel of $\seed_y$ has dimension $n-(s+1) > 0$.
By considering the matrix associated with $\seed_y$ and using standard linear algebra algorithms, we construct an injective linear map $K_y: \F_q^{n - (s+1)} \to \F_q^n$ in $\poly(n, \log(q))$ time such that the image of $K_y$ is exactly the kernel of $\seed_y$.
Furthermore, using Gaussian elimination on the matrix associated with $\seed_y$, we can, in $\poly(n, \log(q))$ time, find some $g\in \F_q^n$ such that $\seed_y(g) = w$.
Finally, we let $\GammaR(z, t) = K_y(t) + g$. By linearity of $\seed_y$, we have that $\seed_y(K_y(t) + g) = \seed_y(g) = w$. As $K_y$ is injective, for a fixed $z$, our computed function maps different $t$ to different outputs as desired.
\end{proof}

Note that given any $n, k, \eps, \alpha$, \cref{thm: instantiate two sided lossless expanders} sets $q = \poly(n, k, 1/\eps)$, so we can deterministically find such a prime $q$ satisfying the requirements in $\poly(n, 1/\eps)$ time as well.

Explicitness of the KT graph also implies explicitness of our non-bipartite lossless expander obtained by taking the bipartite half of the KT graph (and removing the zero vertex):

\begin{claim}\label{claim: explicitness of non-bipartite graph}
The non-bipartite graph $H = G^2[L\setminus \{0\}]$ as constructed in \cref{thm:normal lossless expander} is explicit. I.e., the neighborhood function $\Gamma: (\F_q^n\setminus\{0\})\times (\F_q \times \F_q^{n - (s+1)})\to \F_q^n$ can be computed in $\poly(n, q)$ time. 
\end{claim}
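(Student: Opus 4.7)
The plan is to reduce the neighbor computation in $H$ to a single invocation of the right-neighborhood routine $\GammaR$ from \cref{claim: explicitness of graph}. By \cref{rmk:bipartite half representation of KT}, we have $(f,g)\in E^2[L]$ iff there exists $y\in\F_q$ with $\psi_y(f)=\psi_y(g)$. So the natural way to enumerate neighbors of $f$ is to let the first coordinate of the neighbor label be a seed $y\in\F_q$, and then let the second coordinate $t\in\F_q^{n-(s+1)}$ pick out a specific element of the affine subspace $\psi_y^{-1}(\psi_y(f))$. This matches the stated domain $\F_q\times\F_q^{n-(s+1)}$ of indices.

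The algorithm would be: on input $(f,(y,t))$ with $f\in\F_q^n\setminus\{0\}$, first compute $w:=\psi_y(f)=(f^{(0)}(y),\dots,f^{(s)}(y))\in\F_q^{s+1}$ by standard polynomial-derivative and evaluation routines in $\poly(n,\log q)$ time. Then form the right-vertex $z=(y,w)\in\F_q^{s+2}$ of the KT graph $G$, and output $\Gamma(f,(y,t)):=\GammaR(z,t)\in\F_q^n$. By \cref{claim: explicitness of graph}, this call runs in $\poly(n,\log q)$ time, and by correctness of $\GammaR$ the output $g$ satisfies $\psi_y(g)=w=\psi_y(f)$, so $(f,g)$ is indeed an edge of $G^2[L]$; moreover as $t$ ranges over $\F_q^{n-(s+1)}$ and $y$ over $\F_q$, one enumerates all pairs $(y,g)$ with $\psi_y(g)=\psi_y(f)$, which is precisely the neighbor multiset of $f$ in $G^2[L]$.

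The only subtle point to address is that $H$ has the zero polynomial deleted from the vertex set, so one must say what happens when $\GammaR(z,t)=0$. This is harmless: the stated codomain is $\F_q^n$ (not $\F_q^n\setminus\{0\}$), and we can simply pass this value through; the hit is at most one spurious neighbor per vertex $f$, which does not affect explicitness. If one prefers a cleaner convention, the routine can detect the zero output (a single equality check) and replace it by $f$ itself, yielding a self-loop—still computable in the same time bound, and irrelevant for the lossless-expansion analysis since removing one vertex costs at most a $1$ in neighborhood size.

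There is essentially no hard step: the entire argument is a composition of (i) the already-explicit evaluation of $\psi_y$, and (ii) the already-explicit sampling of the fiber $\psi_y^{-1}(w)$ provided by \cref{claim: explicitness of graph}. The resulting running time $\poly(n,\log q)$ is in particular $\poly(n,q)$, as required.
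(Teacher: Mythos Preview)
Your core algorithm---output $\GammaR(\GammaL(f,y),t)$---is exactly the first step the paper takes, so the approaches are essentially the same. The one substantive difference is that the paper does not stop there: after computing $g=\GammaR(\GammaL(f,y),t)$, it loops over all $y'<y$ and checks whether $\GammaL(f,y')=\GammaL(g,y')$, outputting $\bot$ if so (and also outputting $\bot$ when $g=0$). This deduplication ensures that each neighbor $g$ of $f$ is produced for a \emph{unique} index $(y,t)$; your enumeration instead produces the neighbor \emph{multiset} (for instance, $f$ itself is hit once for every $y$). That extra $O(q)$-length loop is precisely why the statement's running time is $\poly(n,q)$ rather than the $\poly(n,\log q)$ you obtain---so your faster bound is correct for the multiset version, but if one insists on a bijective neighborhood function (as the paper does), the deduplication pass is needed and the $q$ factor is unavoidable with this method.
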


\begin{proof}
Note that since $G$ is not necessarily regular, $\Gamma$ may sometimes output $\bot$.
The guarantee that we will have is that for all $f, g\in \F_q^n$ that are neighbors in $H$, there will exist unique $y, t\in \F_q\times \F_q^{n - (s+1)}$ such that $\Gamma(f, y, t) = g$.

Let $\GammaL, \GammaR$ be the explicit left and right neighborhood functions of the KT graph as defined in \cref{claim: explicitness of graph}.
To compute $\Gamma(f, y, t)$, we first compute $g = \GammaR(\GammaL(f, y), t)$. If $g = 0$, then we output $\bot$. Then, for all $y' < y$, we check whether $\GammaL(f, y') = \GammaL(g, y')$. If they are equal for any such $y'$, we output $\bot$. Otherwise, we output $g$.
This last check is done so that we only output $g$ once as a neighbor of $f$ and otherwise output $\bot$.
Explicitness of $\Gamma$ follows because of explicitness of $\GammaL, \GammaR$ and because the last check has to be done $O(q)$ times.
\end{proof}

\section[The GUV Graph is Not Right Regular]{The \cite{guruswami_unbalanced_2009} Graph is Not Right Regular}\label{sec:GUV not right regular}

One may naturally try to show that the predecessor to the KT graph, the \cite{guruswami_unbalanced_2009} graph, is also a two-sided lossless expander. However, it turns out that the \cite{guruswami_unbalanced_2009} graph is not even right regular. 
To see why, we give the definition of the \cite{guruswami_unbalanced_2009} graph which is similar to the KT graph.
\begin{definition}[The GUV graph, \cite{guruswami_unbalanced_2009}]\label{def: GUVGraph}
    Let $q, n, m, h\in \N$ be such that $q$ is a prime power greater than $h$, characteristic of the finite field $\F_q \ge n$ and $m<n$.  We define $G = (L \sqcup R, E)$ where $L = \F_q^n\cong\F_q[x]/(z(x))$ for some irreducible polynomial $z(x)\in\F_q(x)$ of degree $n$ and $R = \F_q^{m+1}$. The left degree is $q$ and for any $f\in \F_q[x]/(z(x))$ and $y\in \F_q$, the $y$'th neighbor of $f$ is defined as $\GammaL(f,y)=\left(y,f(y),(f^h\text{ mod }z(x))(y),(f^{h^2}\text{ mod }z(x))(y),\dots,(f^{h^{m-1}}\text{ mod }z(x))(y)\right)$.
\end{definition}
Our proof of right-regularity - \cref{thm: G right regular and right degree} - relied on the fact that the map $\psi_y(f)\mapsto(f^{(0)}(y),\dots,f^{(s)}(y))$ is full rank over $\F_q$.
The analogous GUV map $\varphi_y(f)=(f(y),f^h(y),\dots,f^{h^{m-1}}(y))$ does not have this property because of two issues. First, it is not necessarily linear over $\F_q$, although it can be made linear over $\F_2$ when $q$ is a power of 2. Second, even over $\F_2$, it does not necessarily have full rank, meaning we cannot guarantee right regularity.

    Simulations bear this out. The GUV graph with $q=2^4$, $n=4$, $m=2$, and $h=2$ has 3072 right vertices with degree 256, 64 with degree 4096, and 960 isolated vertices. For more examples of parameter settings where the GUV graph is not right-regular, we invite the reader to run simulations with our code at \url{https://github.com/mjguru/GUV-Expander}.

\end{document}